\documentclass[11pt,oneside]{amsart}
\bibliographystyle{alpha}

\setlength{\oddsidemargin}{0.4cm}
\setlength{\evensidemargin}{-0.7cm}
\setlength{\textwidth}{15.7cm}
\usepackage{latexsym}
\usepackage{epsfig}
\usepackage{amsmath}
\usepackage{amsfonts}
\usepackage{exscale}
\usepackage{graphicx}

\newtheorem{theorem}{Theorem}[section]
\newtheorem{cor}[theorem]{Corollary}
\newtheorem{prop}[theorem]{Proposition}

\newtheorem{remark}[theorem]{Remark}
\newtheorem{lemma}[theorem]{Lemma}
\newtheorem{defn}{Definition}

\newcommand{\sm}{\left(\begin{smallmatrix}}
\newcommand{\esm}{\end{smallmatrix}\right)}
\newcommand{\eps}{\varepsilon}
\newcommand{\divg}{\operatorname {div}}

\newcommand{\Ran}{\operatorname{ran}}

\newcommand{\ck}{\mathcal{K}}

\renewcommand{\eps}{\varepsilon}

\newcommand{\R}{\ensuremath{\mathbb{R}}}
\newcommand{\N}{\ensuremath{\mathbb{N}}}
\newcommand{\Z}{\ensuremath{\mathbb{Z}}}

\newcommand{\lo}{\mathfrak{L_0}}
\newcommand{\go}{\mathfrak{G_0}}
\newcommand{\lp}{\mathfrak{L_1}}
\newcommand{\gp}{\mathfrak{G_1}}

\newcommand{\norm}[1]{\left\Vert#1\right\Vert}

\newcommand{\la}{\left\langle}
\newcommand{\ra}{\right\rangle}
\newcommand{\llangle}{\left\langle}
\newcommand{\rrangle}{\right\rangle}

\newcommand{\bx}{\mathbf{x}}
\newcommand{\bX}{\mathbf{X}}
\newcommand{\bz}{\mathbf{z}}
\newcommand{\by}{\mathbf{y}}

\newcommand{\bk}{\mathbf{k}}

\newcommand{\be}{\begin{equation*}}
\newcommand{\ee}{\end{equation*}}
\newcommand{\bea}{\begin{eqnarray*}}
\newcommand{\eea}{\end{eqnarray*}}
\newcommand{\ben}{\begin{eqnarray}}
\newcommand{\een}{\end{eqnarray}}
\newcommand{\beq}{\begin{equation}}
\newcommand{\eeq}{\end{equation}}
\newcommand{\enq}{\end{equation}}

\hyphenation{wave guides}

\title[Gap localization of TE-Modes]{Gap localization of TE-Modes by arbitrarily weak defects}

\author{B.M. Brown}
\address{B.M. Brown, Cardiff School of Computer Science, Cardiff University, Cardiff, CF24 3AA, Wales, UK}
\email{Malcolm.Brown@cs.cardiff.ac.uk}
\author{V. Hoang}
\address{V. Hoang, Rice University, Department of Mathematics-MS 136, Box 1892, Houston, TX 77251-1892}
\email{Vu.Hoang@rice.edu}
\author{M. Plum}
\address{M. Plum, Institute for Analysis,
Karlsruhe Institute of Technology (KIT), Kaiserstrasse 89,
Karlsruhe, Germany}
\email{michael.plum@kit.edu}
\author{M. Radosz}
\address{M. Radosz, Rice University, Department of Mathematics-MS 136, Box 1892, Houston, TX 77251-1892}
\email{maria\_radosz@hotmail.com}
\author{I. Wood}
\address{I. Wood, School of Mathematics, Statistics and Actuarial Sciences,
University of Kent, Canterbury, CT2 7NF, UK}
\email{i.wood@kent.ac.uk}
\date{\today}


\begin{document}
\maketitle

\begin{abstract}
This paper considers the propagation of TE-modes in photonic crystal waveguides. The waveguide is created by introducing a linear defect into a periodic background medium. Both the periodic background problem and the perturbed problem are modelled by a divergence type equation. A feature of our analysis is that we allow discontinuities in the coefficients of the operator, which is required to model many photonic crystals. It is shown that arbitrarily weak perturbations introduce spectrum into the spectral gaps of the background operator.
\end{abstract}

%
\section{Introduction}
%

Electromagnetic waves in periodically structured media, such as photonic crystals
and metamaterials, are a subject of ongoing interest. Typically, the propagation of
waves in such media exhibit \emph{band-gaps} (see \cite{Joann,KuchRev}), i.e. intervals on the frequency
or energy axis where propagation is forbidden. Mathematically, these correspond to 
gaps in the spectrum of the operator describing a problem with periodic background medium.
The existence of these gaps for certain choices of material coefficients
was proved in \cite{CD99,FK96,HPW09} and in \cite{Fil03} for the full
Maxwell case. Using layer potential techniques this question has been studied in \cite{AKSZ06,AKL09,AKL09book}.

In this paper, we consider the propagation of TE-polarized waves in 
photonic crystals. TE-polarization (transverse electric)
here means that the direction of the electric field is confined to a plane perpendicular to the
direction
of the magnetic field. When the periodicity is perturbed by point or line defects,
localization may take place in band gaps, analogous to the situation in
solid-state physics and semiconductor devices. The use of line defects in photonic
crystals has been proposed in the context of wave guide applications. The gap
localization gives rise to \emph{guided modes} which decay exponentially into
the bulk structure and propagate along the direction of the line defect.
For this reason, it is of great importance to know whether a given line defect
produces gap modes.

Is it possible to give rigorous sufficient conditions which imply localization
in gaps? In particular, does localization also occur when arbitrarily weak
defects are introduced? Here, ``weakness" either means a perturbation of small magnitude in the material
coefficients or a perturbation of finite magnitude, but small lateral extent. The latter
are particularly interesting for optical applications, since defects are usually
created by inserting materials with differing dielectric constant $\eps$ into the
photonic crystal structure.

Weak localization results are quite different from results for sufficiently
strong defects (like for example \cite{Hempel,FK97,FigKlein,KuchWav1,KuchWav2,MiaoMa}) and there are surprisingly few of them
in the literature. The first rigorous results on weak gap localization for periodic Schr\"odinger
operators were given by Parzygnat et al.~in \cite{Johnson}. Brown et al.~showed weak gap localization 
in \cite{BHPW15} for a periodic Helmholtz-type
operator corresponding to TM-mode polarization. 
We also refer to the paper of Parzygnat et al.~\cite{Johnson} for a thorough discussion of the literature 
on strong and weak localization for Schr\"odinger operators. In the slightly different setting of the coupling of two waveguides through narrow windows, weak localisation results for the Helmholtz equation were obtained in \cite{MP05,PTT10}.

For TE-polarized waves in periodic media, the problem is very challenging and we present, for the first time, conditions 
ensuring weak gap localization. The method we present here relies on \cite{BHPW15}, but the proofs are considerably more 
difficult due to the different structure of the operator. The chief difficulty is the fact that here the perturbation is
of the same order as the principal part of the differential operator. Moreover, we will be 
working with operators \eqref{eq:magfield} with non-smooth coefficients $\eps(\bx)$, 
requiring a sophisticated functional analytic setting.

%
\section{Problem statement}
%

We consider the propagation of electromagnetic waves in a non-magnetic, inhomogeneous medium
described by a varying dielectric function $\eps(\bX)$ with $\bX = (x, y, z)$. Assuming that
the magnetic field $\mathbf{H}$ has the form $\mathbf{H}= H(x, y) \hat{\bz}$, where $\hat{\bz}$ denotes the unit vector in the $z$-direction, we look for time-harmonic solutions to Maxwell's equations.  This leads to the equation
\beq \label{eq:magfield}
- \nabla \cdot \frac{1}{\eps(\bx)} \nabla H = \lambda H
\eeq
for the $\bz$-component $H$ of the magnetic field. Note that in the context of polarized waves, we
assume that all fields and constitutive functions depend only on $\bx = (x,y)$.
The periodic background medium is characterised by $\eps_0(\bx)$, where for simplicity we
assume that the unit square $[0, 1]^2$ is a cell of periodicity.

\subsection{Line defects}
 Let $\hat{\bx}=(1,0)$ and $\hat \by=(0,1)$. We now introduce a line defect, which we assume  to be aligned in the $\hat{\bx}$-direction and to preserve the periodicity in this direction. In addition, the defect is assumed to be localized in the $\hat \by$-direction.
The new system is therefore described by a  dielectric function $\eps_1(\bx)$, periodic in $\hat \bx$-direction, i.e.
\beq
\eps_1(\bx + m \hat{\mathbf{x}}) = \eps_1(\bx) \quad (m\in \Z)
\eeq
and there exists some $R>0$ such that
$\eps_1(\cdot,y)$ may differ from $\eps_0(\cdot,y)$, if $|y|< R$ and equals $\eps_0(\cdot,y)$ if $|y| > R$.

Since the system is still periodic in the $\hat{\bx}$-direction, we can apply Bloch's theorem \cite{OdehKeller,KuchmentBook}
to reduce our problem to a problem on the strip $\Omega:=(0,1)\times\R$. Thus, the generalized eigenfunctions of \eqref{eq:magfield}
have the form $e^{i k_x x} \psi^{(k_x)}(\bx)$, where $k_x\in[-\pi,\pi]$, $\psi^{(k_x)}$ is periodic in the $\hat\bx$-direction and satisfies
\beq\label{eq2}
- (\nabla + i k_x \hat\bx) \cdot\left[ \eps_1(\bx)^{-1} (\nabla + i k_x \hat\bx) \psi^{(k_x)}\right] = \lambda \psi^{(k_x)}.
\eeq
Equivalently, we may look for functions $u^{(k_x)}$ satisfying $k_x$-quasiperiodic boundary conditions
\beq\label{qp}
u^{(k_x)}(\bx + m \hat\bx) = e^{i k_x m} u^{(k_x)}(\bx)
\eeq
and solving the equation 
\beq\label{eq3}
- \nabla \cdot \left[\eps_1(\bx)^{-1} \nabla u^{(k_x)}\right] = \lambda u^{(k_x)}.
\eeq
For our purposes, it is slightly more convenient to use \eqref{eq3}, since unlike in \eqref{eq2}, the differential operator
is not changed. Note that the boundary condition \eqref{qp} now depends on $k_x$.

Suppose now $(\Lambda_0, \Lambda_1)$ is a band gap of the unperturbed system \eqref{eq:magfield} with $\eps=\eps_0$.
We will give conditions which ensure that localized modes appear in the interval below $\Lambda_1$
under arbitrarily weak perturbation. The unperturbed system is periodic with respect to two directions,
and the application of Bloch's theorem leads to the usual Bloch functions $\psi_s(\bx, k_x,k_y)$ and corresponding band
functions $\lambda_s(k_x, k_y)$ with $s\in\N$, $\bx\in[0,1]^2$ and $(k_x,k_y)\in[-\pi,\pi]^2$, see, e.g.~\cite{BHPW15} for more details. Let $M\in\N$ be such that  $\Lambda_1$ is the minimum
of the $M$-th band function and let $\bk^0=(k_x^0, k_y^0)$ be a value of the quasi momentum at which
$\Lambda_1$ is attained, i.e.
\beq\label{M}
\lambda_M(\bk^0) = \Lambda_1.
\eeq
 {We note that the minimum is attained; for more details see \cite[Proposition 3.2]{BHPW15}.}
For simplicity, we assume that $\lambda_M(k_x^0, k_y) \neq \Lambda_1$ for all
$k_y$ different from $k_y^0$.  We intend to deal with the more general case in forthcoming work.
 We note that due to analyticity of the function $k_y\mapsto\lambda_M(k_x^0, k_y)$ in a complex neighbourhood of the interval $[-\pi,\pi]$ (see e.g. \cite[Theorem VII.3.9]{Kato}),  we have 
 \beq\label{condBand}
\lambda_M(k_x^0, k_y) \leq \Lambda_1 + \alpha|k_y-k_y^0|^2
\eeq
close to $k_y^0$, for some $\alpha > 0$. (This also holds if $k_y^0=\pm\pi$, due to the periodic boundary behaviour of $\lambda_M(k_x^0,\cdot)$). 

One of the main features of this paper is that we do not require the functions $\eps_i$ to be continuous. The smoothness we require of the $\eps_i$ is merely that  $\eps_i\in L^\infty$. This is motivated by physical applications, where, to produce the typical band-gap spectrum,  $\eps_0$ is usually piecewise constant. See, for instance, \cite{CD99,FK96,Fil03}.
Moreover, we make the following assumptions on the perturbation: 
\begin{enumerate}
\item[(i)]
  $\eps_i\geq c_0>0$ for some constant $c_0$ and $i=0,1$.
\item[(ii)]The perturbation is nonnegative, i.e.~
\beq\label{positivity}
\eps_1(\bx) - \eps_0(\bx) \geq 0.
\eeq
\item[(iii)] There exists a ball $D$ such that $\eps_1-\eps_0>0$ on $D$. 
\end{enumerate}

We are now in a position to state our main result.

\begin{theorem}\label{theorem}  {In addition to (i), (ii) and (iii),} assume that
\beq\label{cond}
 \norm{\frac{\eps_1}{\eps_0}}_\infty \left\|\frac{1}{\eps_1}-\frac{1}{\eps_0}\right\|_\infty < \frac{\Lambda_1-\Lambda_0}{(\Lambda_0+1)}.
\eeq
Then weak localization takes place, i.e. the problem 
\beq\label{eq4}
- \nabla \cdot \eps_1(\bx)^{-1} \nabla u^{(k_x^0)} = \lambda u^{(k_x^0)},\quad \bx\in \Omega=(0,1)\times\R
\eeq
 has a nontrivial  $k_x^0$-quasiperiodic solution
$u^{(k_x^0)}\in L^2(\Omega)$ for some $\Lambda_0<\lambda<\Lambda_1$.
\end{theorem}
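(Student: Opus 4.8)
\emph{Strategy.} Realise the two problems as the nonnegative self‑adjoint operators $A_0,A_1$ in $L^2(\Omega)$ associated with the closed forms $\mathfrak a_i[u]:=\int_\Omega\eps_i^{-1}|\nabla u|^2\,d\bx$ on the common form domain $\mathcal D$ of $k_x^0$‑quasiperiodic $H^1$‑functions on $\Omega$. By (ii) we have $\mathfrak a_1\le\mathfrak a_0$, so $0\le A_1\le A_0$; and since $\eps_1=\eps_0$ for $|y|>R$, a standard decomposition argument (the perturbation being localized) gives $\sigma_{\mathrm{ess}}(A_1)=\sigma_{\mathrm{ess}}(A_0)$, so that $(\Lambda_0,\Lambda_1)$ remains a gap of $\sigma_{\mathrm{ess}}(A_1)$. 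It is therefore enough to produce a single $\lambda\in(\Lambda_0,\Lambda_1)\cap\sigma_{\mathrm p}(A_1)$, the associated eigenfunction being the desired solution of \eqref{eq4}. The plan is to do this by a Birman--Schwinger argument in which the band edge of $A_0$ at $\Lambda_1$ is the threshold from which a bound state splits off.

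Write $\mathfrak b:=\mathfrak a_0-\mathfrak a_1$, so that $\mathfrak b[u]=\int_\Omega w\,|\nabla u|^2=\|Gu\|^2_{L^2(\Omega)^2}$ with $w:=\eps_0^{-1}-\eps_1^{-1}$ and $Gu:=w^{1/2}\nabla u$; by (ii)--(iii), $w\ge0$, $w$ is supported in $\{|y|<R\}$, and $w\ge c_D>0$ on $D$. For $\mu\in(\Lambda_0,\Lambda_1)$ the operator $A_0-\mu$ is boundedly invertible (the strip spectrum of $A_0$ is contained in the spectrum of the full periodic operator, for which $(\Lambda_0,\Lambda_1)$ is a gap), and there holds the Birman--Schwinger equivalence $\mu\in\sigma_{\mathrm p}(A_1)\iff 1\in\sigma_{\mathrm p}(\mathcal K(\mu))$, where $\mathcal K(\mu):=G(A_0-\mu)^{-1}G^{*}$ is a bounded self‑adjoint operator on $L^2(\Omega)^2$: indeed, if $\mathcal K(\mu)\mathbf g=\mathbf g$ then $u:=(A_0-\mu)^{-1}G^{*}\mathbf g\in\mathcal D\setminus\{0\}$ solves $\mathfrak a_1[u,v]=\mu\la u,v\ra$ for all $v\in\mathcal D$. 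Moreover $\mu\mapsto\mathcal K(\mu)$ is norm‑continuous and, since $\tfrac{d}{d\mu}(A_0-\mu)^{-1}=(A_0-\mu)^{-2}\ge0$, nondecreasing on the gap; using the compact support of $w$ one shows in addition that the spectrum of $\mathcal K(\mu)$ above its essential spectrum consists of eigenvalues and that, under \eqref{cond}, $\sup\sigma_{\mathrm{ess}}(\mathcal K(\mu))<1$ throughout the gap.

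Now track $\Lambda_\ast(\mu):=\sup\sigma(\mathcal K(\mu))$ at the two ends of $(\Lambda_0,\Lambda_1)$. Near $\Lambda_1$: by the Floquet--Bloch description of $A_0$ on the strip, $\Lambda_1$ is a band edge attained at the single quasimomentum $k_y^0$, with the parabolic bound \eqref{condBand}; testing $\mathcal K(\mu)$ against $\mathbf g=G\psi$ built from a band-$M$ Bloch mode at $(k_x^0,k_y^0)$ --- which $G$ does not annihilate, since $w\ge c_D>0$ on $D$ and $\nabla\psi_M$ does not vanish a.e.\ on $D$ --- gives, as for a one‑dimensional resolvent at threshold, $\la\mathcal K(\mu)\mathbf g,\mathbf g\ra\ge c\,(\Lambda_1-\mu)^{-1/2}$ for some $c>0$, so $\Lambda_\ast(\mu)\to+\infty$ as $\mu\uparrow\Lambda_1$, the divergence being carried by an eigenvalue of $\mathcal K(\mu)$ by the previous paragraph. (No smallness is used here: a weak, attractive, localized perturbation at a parabolic --- effectively one‑dimensional --- band edge always creates a bound state.) Near $\Lambda_0$: writing $(A_0-\mu)^{-1}=(A_0-\mu)^{-1}P_{\le\Lambda_0}+(A_0-\mu)^{-1}P_{\ge\Lambda_1}$ with the spectral projections of $A_0$, the first summand is $\le0$ and drops out of the upper bound, while $0\le(A_0-\mu)^{-1}P_{\ge\Lambda_1}\le(\Lambda_1-\mu)^{-1}P_{\ge\Lambda_1}$; inserting this into $\la\mathcal K(\mu)\mathbf g,\mathbf g\ra=\mathfrak a_0[f]-\mu\|f\|^2$ (with $f=(A_0-\mu)^{-1}G^{*}\mathbf g$) and using Cauchy--Schwarz together with the elementary form inequalities $\mathfrak b[u]\le(\|\eps_1/\eps_0\|_\infty-1)\mathfrak a_1[u]$ and $\mathfrak a_0[u]\le\|\eps_1/\eps_0\|_\infty\mathfrak a_1[u]$ --- and, in the sharper version that also retains the $P_{\le\Lambda_0}$‑part, the monotonicity of $t\mapsto(t+1)/(t-\mu)$, whence the factor $\Lambda_0+1$ --- yields a bound
\be
\Lambda_\ast(\mu)\ \le\ \Big\|\frac{\eps_1}{\eps_0}\Big\|_\infty\Big\|\frac1{\eps_1}-\frac1{\eps_0}\Big\|_\infty\,\frac{\Lambda_0+1}{\Lambda_1-\mu}\,\bigl(1+o(1)\bigr)\qquad(\mu\downarrow\Lambda_0),
\ee
whose right‑hand side is $<1$ near $\Lambda_0$ exactly by \eqref{cond}.

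Since $\Lambda_\ast$ is continuous on $(\Lambda_0,\Lambda_1)$, is $<1$ near $\Lambda_0$, and tends to $+\infty$ near $\Lambda_1$, there is $\lambda\in(\Lambda_0,\Lambda_1)$ with $\Lambda_\ast(\lambda)=1$; this value being an eigenvalue of $\mathcal K(\lambda)$, the equivalence yields $\lambda\in\sigma_{\mathrm p}(A_1)$ together with an $L^2(\Omega)$, $k_x^0$‑quasiperiodic eigenfunction, which proves the theorem. I expect the main obstacle to be the functional‑analytic groundwork in the merely bounded‑coefficient setting --- constructing $G$, $G^{*}$ and $\mathcal K(\mu)$ rigorously on $\mathcal D$, proving $\sigma_{\mathrm{ess}}(A_1)=\sigma_{\mathrm{ess}}(A_0)$, and, above all, controlling $\sigma_{\mathrm{ess}}(\mathcal K(\mu))$ so that the crossing of the level $1$ occurs through a genuine eigenvalue, which is delicate precisely because here the perturbation $\mathfrak b$ is of the same differential order as the principal part, so $\mathcal K(\mu)$ is not smoothing --- together with sharpening the estimate above to the exact threshold \eqref{cond}.
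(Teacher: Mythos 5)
Your high‑level plan is the same as the paper's: a Birman--Schwinger‑type reformulation, a continuous and monotone tracking of an extremal eigenvalue of the resulting operator family across the gap, a divergence at the $\Lambda_1$‑end driven by the parabolic band‑edge estimate \eqref{condBand}, an upper bound at the $\Lambda_0$‑end that yields the threshold \eqref{cond}, and assumption (iii) plus unique continuation to guarantee that the band‑edge Bloch mode couples nontrivially. The actual Birman--Schwinger operator you choose, however, is genuinely different, and that choice is where a real gap remains.

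You factor the form perturbation, $\mathfrak a_0-\mathfrak a_1=\|Gu\|_{L^2}^2$ with $Gu=w^{1/2}\nabla u$, and set $\mathcal K(\mu)=G(A_0-\mu)^{-1}G^{*}$ on $L^2(\Omega)^2$. Since $G$ and $G^{*}$ each carry one derivative, the smoothing of the resolvent in the middle is exactly cancelled: $\mathcal K(\mu)$ is order zero and is \emph{not} compact, even after using the compact $y$‑support of $w$ and the exponential decay of the gap resolvent. Consequently, for the intermediate‑value argument to produce an eigenvalue of $\mathcal K(\lambda)$ at the level $1$, you must control the essential spectrum and prove $\sup\sigma_{\mathrm{ess}}(\mathcal K(\mu))<1$ across the gap. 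You assert this ``using the compact support of $w$'' but do not prove it, and you yourself flag it as the main obstacle. It is not a routine point: one plausible route is to show $\mathcal K(\mu)-\mathcal K(\mu_0)$ is compact (one extra resolvent power plus the weight), so $\sigma_{\mathrm{ess}}(\mathcal K(\mu))$ is $\mu$‑independent, and then bound it by the $\Lambda_0$‑end estimate; but none of this is carried out, and your $\Lambda_0$‑end bound producing the precise factor $\Lambda_0+1$ is itself only sketched.

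The paper sidesteps the essential‑spectrum problem entirely by a different choice of operator. It works in $H^{-1}_{qp}(\Omega)$ (which also resolves your issue that $G^{*}\mathbf g$ is a priori only a distribution when $\eps_i$ are merely $L^\infty$), forms $K=\go^{-1}\gp-I=(\lo-\lp)\gp$ rather than a symmetric $G\cdots G^{*}$ sandwich, and studies $A_\mu=P(I-\mu\go^{-1})^{-1}K$ on $\ck=\overline{\Ran K}$ with the inner product $\llangle Kf,g\rrangle_{H^{-1}}$. The decisive structural gain is that $\gp-\go=\go K$ factors through $K:H^{-1}_{qp}\to H^{-1}_{cs}$ (Lemma~\ref{cs}) and $\go:H^{-1}_{cs}\to H^1_{qp}(\Omega,e^{\gamma|y|})$ (Lemma~\ref{lemmaes}), and the latter embeds \emph{compactly} into $H^{-1}_{qp}$; composing with bounded factors then makes $A_\mu$ compact (Prop.~\ref{AmuC}). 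Hence the minimum of the Rayleigh quotient \eqref{defkappa} is automatically an eigenvalue, and no essential‑spectrum analysis is ever needed. This is precisely the smoothing your $\mathcal K(\mu)$ destroys by placing a derivative on \emph{each} side of the resolvent. In short: your route is a legitimate alternative in outline, but as written it has a concrete unresolved gap exactly where the paper's less symmetric Birman--Schwinger operator was designed to avoid one.
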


\begin{remark}
\begin{enumerate}
\item In fact, we have a slightly weaker condition for localization. \eqref{cond} can be replaced by
\beq\label{cond1}
 \left\|\frac{1}{\eps_1}-\frac{1}{\eps_0}\right\|_\infty < \frac{\Lambda_1-\Lambda_0}{\norm{\gp}_{H^{-1}\to H^1}(\Lambda_0+1)},
\eeq
where $\gp$ is the Green's operator introduced by \eqref{Gone}.
From the proof of Lemma \ref{lemma1}, we have $\norm{\gp}_{H^{-1}\to H^1}\leq\norm{\frac{\eps_1}{\eps_0}}_\infty$, thus \eqref{cond1} follows
from \eqref{cond}.
\item Condition \eqref{cond} is satisfied for sufficiently weak perturbations, so arbitrarily weak perturbations induce spectrum into the gap.
\item 
Note that $u^{(k_x^0)}\in L^2(\Omega)$ precisely expresses the type of localization we expect
in the context of line defects, i.e. the eigensolutions $u^{(k_x^0)}$ decay
in the direction perpendicular to the line defect, whereas they are
$k_x^0$-quasi periodic in the $\hat{\bx}$-direction. This is different from the
localization by point defects: these induce defect eigenfunctions that are
square integrable over the whole space.
\end{enumerate}
\end{remark}

%
\section{The periodic Green's function}
%

In this section, we recall the mathematical formalism needed and introduce the operators to be studied first in the $L^2$-setting.
Later on, we shall introduce realizations of the same operators in negative Sobolev spaces, required to apply the perturbation theory to 
nonsmooth coefficients.

The unperturbed operator $L_0$  is defined  in a standard way using the representation theorem (see \cite{Kato})
from the sesquilinear form
\bea
\int_\Omega \frac{1}{\eps_0(\bx)} \nabla u\overline{\nabla v} d\bx 
\eea
where $u, v$ are $H^1$-functions on $\Omega$ satisfying $k_x$-quasiperiodic boundary conditions in $\hat \bx$-direction.
The pertubed operator $L_1$ is defined in a similar way, replacing $\eps_0$ by $\eps_1$.

As our technique is based on exploiting the Bloch representation of the Green's functions (or, equivalently, the resolvent operators), combined with
a variational approach, we first review the definition and properties of the Green's function.

Central to our analysis is the Green's function $G_0(\bx, \bx')$ (see e.g. \cite{Eco})
satisfying
\beq\label{Green}
(L_0 + 1)~~G_0(\bx, \bx') = \delta(\bx - \bx').
\eeq
We note that the Green's function in \eqref{Green} is then also subject to  $k_x^0$-quasiperiodic
boundary conditions:
\beq
G_0(\bx + m \hat \bx, \bx') = e^{i k_x^0 m} G_0(\bx, \bx')
\eeq
for all integers $m$. It is very useful to have a representation of $G_0(\bx, \bx')$ in terms of
eigenfunctions, i.e. Bloch waves. We shall now derive such a representation.

The set of Bloch waves $\psi_s(\bx, \bk)$ is known to form a complete system in the space of
square-integrable functions defined on the whole space. Likewise, the Bloch functions
$\psi_s(\bx, k_x^0, k_y)$ with the $x$-component of the quasimomentum fixed, form a complete system
in the space of square-integrable functions on the strip $\Omega$. This means that any
such function $f$ can be expanded in terms of Bloch waves:
\beq\label{repf}
f(\bx) = \frac{1}{\sqrt{2\pi}} \sum_s \int_{-\pi}^{\pi} \la U f(\cdot,k_y), \psi_s(\cdot, k_x^0,k_y) \ra \psi_s(\bx,k_x^0, k_y) ~~d k_y
\eeq
where  $U$ denotes the Floquet transform in the $\hat\by$-direction and the series converges in the $L^2$-sense. Here, $\la U f(\cdot,k_y), \psi_s(\cdot,k_x^0, k_y) \ra$ is the $L^2$-inner product over the unit square $[0,1]^2$.
Note that in \eqref{repf}, we only integrate over the $k_y$-component of the quasi-momentum.
To simplify notation, we will also write $\lambda_s(k_y):=\lambda_s(k_x^0,k_y)$ and $\psi_s(\bx, k_y)=\psi_s(\bx, k_x^0, k_y)$ in the following.

As in \cite[Chapter 1]{Eco}, \eqref{repf} immediately implies the following representation:
\ben
G_0(\bx, \bx') =
\frac{1}{2\pi} \sum_{s} \int_{-\pi}^{\pi} \frac{\psi_s(\bx', k_y) \psi_s(\bx, k_y)}{\lambda_s(k_y)+1}~~d k_y. \label{Blochrep}
\een
Formula \eqref{Blochrep} is extremely powerful. As we shall show, it allows us to analyze rigorously
the interaction of the defect with the Bloch waves of the unperturbed system.
It is convenient to write $G_0 := (L_0+1)^{-1}$, i.e.
$$
(G_0 f)(\bx) = \int_\Omega G_0(\bx, \bx') f(\bx')~d \bx'.
$$
 {Then $G_0$ is a symmetric and positive operator in $L^2(\Omega)$.} 
We also need to introduce the analogous Green's operator
for $L_1$ (subject
to $k_x^0$-quasiperiodic boundary conditions in the $\hat \bx$-direction).
Since the spectrum of the differential operator $L_1$ is contained in
the positive half-axis, $G_1:=(L_1+1)^{-1}$ exists  { and is a symmetric positive operator in $L^2(\Omega)$.}
Note that the coefficients of $L_1$ describe the perturbed
system and have no periodicity in the $\hat \by$-direction. As a consequence, $G_1$ cannot
be expressed in terms of Bloch waves, as in \eqref{Blochrep}.

We will show that the essential spectra of $L_0$ and $L_1$ coincide, so any new spectrum introduced in the gap can only consist of eigenvalues.
The key idea of our approach is then to transform the eigenvalue problem \eqref{eq4}
into an eigenvalue problem for the Green's operator $G_1$.
In fact, suppose that $u=u^{(k_x^0)}\neq 0$ solves
\eqref{eq4} together with the boundary condition
\eqref{qp}, i.e. $$(L_1 - \lambda) u = 0.$$ It is easy to see that this is equivalent
to
\beq\label{eq5}
G_1 u = \mu u,
\eeq
where
\beq\label{eqmu}
\mu = (\lambda+1)^{-1}.
\eeq
Thus the eigenvalue problem consisting of \eqref{eq4} and \eqref{qp} can be transformed into the eigenvalue
problem \eqref{eq5}, with $\lambda$ and $\mu$ related by \eqref{eqmu}.
For $\lambda\in(\Lambda_0, \Lambda_1)$ we have
$\mu\in((\Lambda_1+1)^{-1}, (\Lambda_0+1)^{-1})$.
We may apply the same reasoning to $L_0$ yielding a similar relation between the spectra of $L_0$ and $G_0$. 

%
\section{The operator theoretic formulation}
%

In this paper we shall only assume $\eps_0,\eps_1\in L^{\infty}$ with a positive lower bound. In order to deal with the lack of smoothness in the coefficients, we will work in negative Sobolev spaces. In particular, rather than study the operators $G_0$ and $G_1$ directly, we will consider their realisations in the space $H^{-1}_{qp}(\Omega)$, introduced below, denoted by $\go$ and $\gp$, respectively. 

 Recall that we
work with quasimomentum $k_x^0$ fixed. To construct the $H^{-1}$-realisations, we introduce the space of quasi-periodic 
$H^1$-functions on $\Omega$
$$
H^1_{qp}(\Omega):=
\{ u \in H^1_\text{loc}(\R^2) :u\vert_{\Omega}\in H^1(\Omega) \hbox{ and } u(\bx+(m,0))=e^{i k_x^0 m} u(\bx), m\in \Z, \bx\in \R^2 \}. 
$$

For $u,v\in H^1_{qp}(\Omega)$ consider the sesquilinear form 
$$B_0 [u,v]=\int_\Omega \left(\frac{1}{\eps_0(\bx)}\nabla u \overline{\nabla v} + u\overline{v}\right)~d\bx.$$
As $\eps_0$ is bounded and bounded away from zero, we can
introduce a new scalar product on 
$H^1_{qp}(\Omega)$ given by $$ \la u,v\ra
_{H^1_{qp}(\Omega)}:=B_0[u,v]$$ which is equivalent to the usual
scalar product in  $H^1(\Omega)$. When there is no danger of confusion, we denote the associated norm
$\norm{\cdot}_{H^1}$.

\begin{defn}
Let $H^{-1}_{qp}(\Omega)$ denote the dual space of $H^1_{qp}(\Omega)$. Let $\phi:H^1_{qp}(\Omega)\to H^{-1}_{qp}(\Omega)$ be defined by
\begin{equation}
\llangle\phi[u],\varphi\rrangle =B_0[u,\varphi]  \mbox{  \; for\; all\; } u,\varphi\in H^1_{qp}(\Omega)
\label{eq:star}
\end{equation}
where the $\llangle\cdot,\cdot \rrangle $-notation indicates the dual pairing, i.e. $\llangle w, \varphi\rrangle$ is the action of the linear functional $w$
on the function $\overline \varphi$. We shall also use
$
w[\varphi]
$
to denote the dual pairing.
\end{defn}
$\phi$ is an isometric isomorphism, and hence  the scalar product on $H^{-1}_{qp}(\Omega)$ given by $$\llangle u,v\rrangle _{H^{-1}_{qp}(\Omega)}:=\llangle \phi^{-1}u,\phi^{-1}v\rrangle _{H^1_{qp}(\Omega)}$$ induces a norm which coincides  with the usual operator sup-norm on $H^{-1}_{qp}(\Omega)$.
 
 {We next introduce the realisations of $L_0$ and $G_0$ in $H^{-1}_{qp}(\Omega)$.}

\begin{prop}\label{prop2.1} We define an operator $\lo:D(\lo)\to H^{-1}_{qp}(\Omega)$ by
$D(\lo):=H^1_{qp}(\Omega)\subset H^{-1}_{qp}(\Omega)$ and
$$\lo u:=\phi u - u \label{eq:2.1}.$$
Then $\lo$ and $\go:=(\lo+1)^{-1}$ are  self-adjoint.
\end{prop}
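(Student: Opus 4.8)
The plan is to deduce everything from the observation that $\lo+1$ is nothing but the isometric isomorphism $\phi$. First I would note that for $u\in D(\lo)=H^1_{qp}(\Omega)$, regarded as a subspace of $H^{-1}_{qp}(\Omega)$ through the canonical embeddings $H^1_{qp}(\Omega)\subset L^2(\Omega)\subset H^{-1}_{qp}(\Omega)$ (the second one sending $g$ to the functional $(g,\cdot)_{L^2(\Omega)}$), one has $(\lo+1)u=(\phi u-u)+u=\phi u$. Hence $\lo+1$ coincides with $\phi$, viewed as an operator in $H^{-1}_{qp}(\Omega)$ with domain $H^1_{qp}(\Omega)$. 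Since $\phi\colon H^1_{qp}(\Omega)\to H^{-1}_{qp}(\Omega)$ is bijective, $\go=(\lo+1)^{-1}$ is everywhere defined on $H^{-1}_{qp}(\Omega)$, equal to $\phi^{-1}$ followed by the embedding $H^1_{qp}(\Omega)\hookrightarrow H^{-1}_{qp}(\Omega)$; it is bounded (because $\phi^{-1}$ is isometric and the embedding continuous) and injective (because $\phi$ is bijective and the embedding is injective, the latter since $H^1_{qp}(\Omega)\supset C_c^\infty(\Omega)$ is dense in $L^2(\Omega)$).

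Next I would prove that $\go$ is self-adjoint, which --- as $\go$ is bounded and everywhere defined --- amounts to checking symmetry. Given $f,g\in H^{-1}_{qp}(\Omega)$, write $u=\phi^{-1}f$ and $v=\phi^{-1}g$ in $H^1_{qp}(\Omega)$. Unwinding the definition $\llangle a,b\rrangle_{H^{-1}_{qp}(\Omega)}=B_0[\phi^{-1}a,\phi^{-1}b]$ and then applying \eqref{eq:star}, I would compute $\llangle\go f,g\rrangle_{H^{-1}_{qp}(\Omega)}=B_0[\phi^{-1}(\go f),v]=\llangle\go f,v\rrangle=(u,v)_{L^2(\Omega)}$, the last step because $\go f$, as a functional in $H^{-1}_{qp}(\Omega)$, is $\varphi\mapsto(u,\varphi)_{L^2(\Omega)}$. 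The same computation with $f$ and $g$ exchanged, combined with conjugate symmetry of the $L^2$-inner product, gives $\llangle f,\go g\rrangle_{H^{-1}_{qp}(\Omega)}=\overline{(v,u)_{L^2(\Omega)}}=(u,v)_{L^2(\Omega)}$. The two dual pairings agree, so $\go=\go^{*}$.

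Finally I would pass from $\go$ to $\lo=\go^{-1}-1$. Since $\go$ is bounded, self-adjoint and injective, its range is dense: $\overline{\Ran\go}=(\ker\go^{*})^{\perp}=(\ker\go)^{\perp}=H^{-1}_{qp}(\Omega)$. Hence the standard fact that an injective, densely defined operator $A$ with dense range satisfies $(A^{-1})^{*}=(A^{*})^{-1}$ applies to $A=\go$ and yields $(\go^{-1})^{*}=\go^{-1}$, so that $\lo=\go^{-1}-1$ is self-adjoint as well.

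There are no estimates anywhere, so I do not expect a genuine analytic obstacle; the one thing I would take care to get exactly right is the simultaneous bookkeeping of the three identifications at play --- an $H^1_{qp}(\Omega)$-function acting as a functional via the $L^2(\Omega)$-inner product, the Riesz-type isomorphism $\phi$, and the $H^{-1}$-scalar product defined through $\phi$ --- and in particular that $\phi^{-1}$ applied to the embedded image of $u\in H^1_{qp}(\Omega)$ is the element of $H^1_{qp}(\Omega)$ whose $B_0$-pairing against test functions reproduces the $L^2(\Omega)$-pairing against $u$. Essential self-adjointness subtleties never intervene because $\go$ is bounded.
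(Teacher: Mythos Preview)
Your proposal is correct and follows essentially the same route as the paper. The only cosmetic difference is that the paper first computes $\llangle(\lo+1)u,v\rrangle_{H^{-1}}=\llangle u,v\rrangle_{L^2}$ (symmetry of $\lo+1$) and then passes to $\go$ via bijectivity, whereas you compute $\llangle\go f,g\rrangle_{H^{-1}}=(u,v)_{L^2}$ directly; the underlying identity and the deduction of self-adjointness from bounded-everywhere-defined-symmetric are identical.
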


\begin{proof}
For $u,v\in H^1_{qp}(\Omega)$,
\bea 
   \llangle (\lo+1) u,v\rrangle _{H^{-1}} &={}& \llangle \phi^{-1}(\lo+1) u,\phi^{-1}v\rrangle _{H^1} \
   =\ \llangle u,\phi^{-1}v\rrangle _{H^1}\ =\ \overline{ \llangle \phi^{-1}v,u\rrangle _{H^1}}\\
	&=&\overline{ B_0[\phi^{-1}v,u]} 
   \ =\ \overline{ \llangle v,u\rrangle }\ =\ \overline{\llangle v,u\rrangle _{L^2}}\ =\ \llangle u,v\rrangle _{L^2};
\eea
the last line follows by (\ref{eq:star}). Thus  $\lo+1$ is symmetric.

Since $\phi$ is bijective it follows   that $\lo+1$ is bijective, thus
$(\lo+1)^{-1}:H^{-1}_{qp}(\Omega)\to H^{-1}_{qp}(\Omega)$ is defined  on the whole space  and is also symmetric. Therefore, $\go=(\lo+1)^{-1}$ is self-adjoint. Hence $\lo+1$, and so $\lo$ itself  is self-adjoint.
\end{proof}

\begin{remark}\label{remark:2}
\begin{enumerate}
	\item The map $\phi$ corresponds to the operator $\lo+1$ and  $\phi^{-1}:H^{-1}_{qp}(\Omega)\to H^1_{qp}(\Omega)$ acts in the same way as $\go :H^{-1}_{qp}(\Omega)\to H^{-1}_{qp}(\Omega)$.  
	\item We remind the reader of the standard embedding of $L^2(\Omega)$ in $H^{-1}_{qp}(\Omega)$:
       a function $f\in L^2(\Omega)$ acts on $v \in H^1_{qp}(\Omega)$ via  
$f[v] = \llangle f,v\rrangle _{L^2}$.
  \item  From the definitions of $\phi$ and $\lo $ follows the useful identity
        $$ \llangle u, v\rrangle _{H^{-1}} = \llangle \go u, \go v\rrangle _{H^1} = \llangle u, \go v\rrangle _{L^2}\quad \hbox{for} \quad u\in L^2(\Omega), v\in H^{-1}_{qp}(\Omega).$$
	\item We note that just as in \cite[Section 5]{BHPW11}, the $L^2$- and $H^{-1}$-spectra coincide:
\bea
\sigma(L_0) = \sigma(\lo). 
\eea
\end{enumerate}
\end{remark}

 {Let $\mu\in ((\Lambda_1+1)^{-1}, (\Lambda_0+1)^{-1})$. Then by the previous remark, $1/\mu\in \rho(\lo+1)$, so
$(I-\mu (\lo +1))^{-1}=(I-\mu \go^{-1})^{-1}$ is well defined and maps $H^{-1}_{qp}(\Omega)$ bijectively onto $H^1_{qp}(\Omega)$.
The operator
$(I-\mu \go^{-1})^{-1}$ is the solution operator to the problem 
$$ \llangle u,\varphi\rrangle_{L^2} -\mu \int_{\Omega} \left (\tfrac{1}{\eps_0} \nabla u \overline{\nabla \varphi} + u \overline\varphi\right)d\bx = f[\varphi], \quad \hbox{ for all } \varphi \in H^1_{qp}(\Omega)
$$
for a given $f\in H^{-1}_{qp}(\Omega)$.}

We now introduce the solution operator for the perturbed problem. 
Let $\gp$ be the operator defined on $H^{-1}_{qp}(\Omega)$ such that for given $f\in H^{-1}_{qp}(\Omega)$ the function $u=\gp f$ is the unique solution in $ H^1_{qp}(\Omega)$ to
\beq
\label{Gone} B_1[u,\varphi]:=\int_\Omega \left [  \dfrac{1}{\eps_1} \nabla u   \overline{\nabla \varphi} + u \overline\varphi \right ] dx =f[\varphi]\quad\hbox{ for all } \varphi \in H^1_{qp}(\Omega).
\eeq

We see that
 $\gp$ is well-defined, since it can be constructed via a form in the same way as $\go$, noting that the norms in the $H^{-1}_{qp}$-spaces constructed from both sesquilinear forms $B_0$ and $B_1$ are equivalent.
	Note that $\go\vert_{L^2}=G_0$ and $\gp\vert_{L^2}=G_1$, which are both  symmetric operators in $L^2$. Moreover,  again, as in \cite[Section 5]{BHPW11}, the $L^2$- and $H^{-1}$-spectra coincide: $\sigma(G_1) = \sigma(\gp)$. We also denote $\lp=\gp^{-1}-1$.

We conclude the section with the proof of some more simple properties of $\gp$ and $\go$ which will be useful later.  {Recall that, by assumption, $\eps_1\geq\eps_0$.}

\begin{lemma} \label{lemma1}
$\gp:H^{-1}_{qp}(\Omega)\to H^1_{qp}(\Omega)$ is bounded with  {$\norm{\gp}_{H^{-1}\to H^1}\leq\norm{\frac{\eps_1}{\eps_0}}_\infty$.}
\end{lemma}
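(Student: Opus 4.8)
The plan is to exploit the variational characterisation of $\gp$ together with the fact that the norm on $H^1_{qp}(\Omega)$ is precisely the one induced by the form $B_0$. Fix $f\in H^{-1}_{qp}(\Omega)$ and set $u=\gp f\in H^1_{qp}(\Omega)$, so that by \eqref{Gone} we have $B_1[u,\varphi]=f[\varphi]$ for all $\varphi\in H^1_{qp}(\Omega)$. Choosing the test function $\varphi=u$ gives
\[
B_1[u,u]=f[u]\le \norm{f}_{H^{-1}}\,\norm{u}_{H^1},
\]
the last inequality being just the definition of the dual norm on $H^{-1}_{qp}(\Omega)$. It then remains to bound $\norm{u}_{H^1}^2=B_0[u,u]$ from above by a constant multiple of $B_1[u,u]$, i.e. to establish coercivity of $B_1$ with respect to the $B_0$-norm.

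For this I would use the standing assumption $\eps_1\ge\eps_0$ pointwise, which in particular forces $\norm{\eps_1/\eps_0}_\infty\ge 1$. Writing $c:=\norm{\eps_1/\eps_0}_\infty$, one has $\tfrac{1}{\eps_1}=\tfrac{\eps_0}{\eps_1}\cdot\tfrac{1}{\eps_0}\ge c^{-1}\tfrac{1}{\eps_0}$ almost everywhere, and also $1\ge c^{-1}$; hence, comparing the integrands term by term,
\[
B_1[u,u]=\int_\Omega\Bigl(\tfrac{1}{\eps_1}\,|\nabla u|^2+|u|^2\Bigr)\,d\bx
\ \ge\ c^{-1}\int_\Omega\Bigl(\tfrac{1}{\eps_0}\,|\nabla u|^2+|u|^2\Bigr)\,d\bx
= c^{-1}\,B_0[u,u]=c^{-1}\,\norm{u}_{H^1}^2.
\]
Combining this with the previous display yields $c^{-1}\norm{u}_{H^1}^2\le\norm{f}_{H^{-1}}\norm{u}_{H^1}$, and dividing by $\norm{u}_{H^1}$ (the case $u=0$ being trivial) gives $\norm{\gp f}_{H^1}=\norm{u}_{H^1}\le c\,\norm{f}_{H^{-1}}$, which is the asserted estimate.

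There is essentially no hard step here: this is a Lax--Milgram-type coercivity argument, and the only point requiring a moment's care is that the zeroth-order contribution $\int_\Omega|u|^2\,d\bx$ appearing in $B_1$ must also be kept after rescaling by $c^{-1}$ — this is legitimate exactly because $\eps_1\ge\eps_0$ makes $c\ge 1$, so that $c^{-1}\le 1$. Well-definedness of $\gp$ and the equivalence of the relevant $H^{\pm1}$-norms have already been recorded above, so no further preparation is needed.
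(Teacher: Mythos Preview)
Your proof is correct and essentially identical to the paper's: both set $u=\gp f$, test \eqref{Gone} with $\varphi=u$, and use the pointwise inequality $\tfrac{1}{\eps_0}\le\norm{\eps_1/\eps_0}_\infty\,\tfrac{1}{\eps_1}$ together with $\norm{\eps_1/\eps_0}_\infty\ge 1$ (to handle the zeroth-order term) to compare $B_0[u,u]$ with $B_1[u,u]=f[u]$. The only difference is presentational---you write the coercivity as $B_1\ge c^{-1}B_0$ while the paper writes $B_0\le c\,B_1$---and you make the role of $c\ge 1$ for the $|u|^2$ term slightly more explicit than the paper does.
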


\begin{proof}
Let $f \in H^{-1}_{qp}(\Omega)$ and $u=\gp f$. Choose $\varphi=u$ in \eqref{Gone}. It then follows that
 {
\bea 
   \norm{u}_{H^1}^2 & = & \int \left ( \dfrac{1}{\eps_0}  |\nabla u |^2 + |u|^2\right ) dx \\
  & \leq  & \norm{\frac{\eps_1}{\eps_0}}_\infty\int   \left (\dfrac{1}{\eps_1}  |\nabla  u|^2 + |u|^2\right   )  dx =\norm{\frac{\eps_1}{\eps_0}}_\infty f[u] \leq \norm{\frac{\eps_1}{\eps_0}}_\infty\norm{f}_{H^{-1} }\norm{u}_{H^1},
\eea
where  $\norm{\frac{\eps_1}{\eps_0}}_\infty$ is bounded by assumption.}
\end{proof}

\begin{lemma} \label{lem3}
For $w \in H^{-1}_{qp}(\Omega), \;\;\; w[\mathfrak{G}_i w]\geq 0$ for $i=0,1$.
\end{lemma}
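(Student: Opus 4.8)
The plan is to show $w[\mathfrak{G}_i w]\geq 0$ for $w\in H^{-1}_{qp}(\Omega)$ by reducing to the quadratic form underlying $\mathfrak G_i$. Write $u:=\mathfrak G_i w\in H^1_{qp}(\Omega)$, so that by the defining relation (the analogue of \eqref{Gone} for $i=1$, and \eqref{eq:star} together with Proposition \ref{prop2.1} for $i=0$) we have $w[\varphi]=B_i[u,\varphi]$ for all $\varphi\in H^1_{qp}(\Omega)$. Now test with $\varphi=u$: this gives
\[
w[\mathfrak G_i w] \;=\; w[u] \;=\; B_i[u,u] \;=\; \int_\Omega\left(\frac{1}{\eps_i(\bx)}\,|\nabla u|^2 + |u|^2\right)d\bx \;\geq\; 0,
\]
since $\eps_i\geq c_0>0$ by assumption (i), so that $1/\eps_i>0$ pointwise. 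This is exactly the positivity already recorded informally in the text when $\mathfrak G_i$ was described as a symmetric positive operator; here we simply make it precise in the $H^{-1}_{qp}$-realisation.

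A small point to be careful about is the meaning of the pairing $w[\mathfrak G_i w]$: since $w\in H^{-1}_{qp}(\Omega)$ and $\mathfrak G_i w\in H^1_{qp}(\Omega)$, the expression $w[u]$ is the dual pairing $\llangle w,u\rrangle$, which is well-defined and finite, and the identity $w[\varphi]=B_i[u,\varphi]$ is precisely the definition of $u=\mathfrak G_i w$ (for $i=1$ this is \eqref{Gone}; for $i=0$ it follows from $\phi^{-1}=\go$ as noted in Remark \ref{remark:2}(1), i.e. $w[\varphi]=\llangle\phi[\go w],\varphi\rrangle=B_0[\go w,\varphi]$). So the substitution $\varphi=u$ is legitimate, no density argument or regularisation is needed, and the result is immediate.

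I do not expect any real obstacle here — the statement is essentially a restatement of the positivity of the sesquilinear forms $B_i$, which hold because $1/\eps_i$ and the zeroth-order term are pointwise positive. The only thing worth double-checking is that one uses the correct realisation-level identity for each $i$ (the $B_1$-definition for $\mathfrak G_1$, and the $\phi$/$B_0$ relation for $\mathfrak G_0$), but both reduce to the same one-line computation above.
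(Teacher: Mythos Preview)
Your proof is correct and is in fact more direct than the paper's. The paper argues by density: it takes $w_n\in L^2(\Omega)$ with $w_n\to w$ in $H^{-1}_{qp}(\Omega)$, uses continuity of $\mathfrak G_i:H^{-1}_{qp}\to H^1_{qp}$ to pass to the limit in $w_n[\mathfrak G_i w_n]$, and then appeals to the $L^2$-positivity of $G_i$ to conclude that each $w_n[\mathfrak G_i w_n]=\llangle w_n,G_i w_n\rrangle_{L^2}\geq 0$. You instead unwind the definition of $\mathfrak G_i$ as the solution operator for the coercive form $B_i$ and test the weak formulation against $u=\mathfrak G_i w$ itself, obtaining $w[\mathfrak G_i w]=B_i[u,u]\geq 0$ in one line. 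Your argument avoids the approximation step entirely and makes the source of positivity (coercivity of $B_i$) explicit; the paper's density template is the one they reuse in later lemmas (e.g.\ the symmetry of $K$), which may explain their choice here, but for this statement your route is the cleaner one.
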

\begin{proof}
Choose a sequence $(w_n)\in (L^2(\Omega))^\N$ such that $w_n \to w$ in $H^{-1}_{qp}(\Omega)$.
By continuity  of $\mathfrak{G}_i :H^{-1}_{qp}(\Omega)\to H^1_{qp}(\Omega)$, we have $\mathfrak{G}_i w_n\to \mathfrak{G}_i  w$ in  $H^1_{qp}(\Omega)$, so $w_n[\mathfrak{G}_i  w_n]\to w[\mathfrak{G}_i w]$.
Furthermore,  $$w_n[\mathfrak{G}_i  w_n]=\int w_n \overline{G_i  w_n} \geq 0,$$ since $G_i \geq 0$ as operators in $L^2.$
\end{proof}

%
\section{Birman-Schwinger-type reformulation}
%

An essential feature of our approach is to first perform a Birman-Schwinger-type reformulation of the problem. 
In this way, we bring the unperturbed Green's operator into play. We will show below (see Lemma \ref{lemmaes}) that $\gp-\go$ is 
compact as an operator in $H^{-1}_{qp}(\Omega)$. Hence the spectra of $\go$ and $\gp$ can only differ by eigenvalues.

The eigenvalue problem for our original operator, $(L_1-\lambda) u=0$ with $\lambda\in(\Lambda_0,\Lambda_1)$, is equivalent to
\beq
(\gp -\mu)u=0,\quad u\in   H^{-1}_{qp}(\Omega)  \label{star}
\enq
for $\mu=(\lambda+1)^{-1}\in ((\Lambda_1+1)^{-1}, (\Lambda_0+1)^{-1})$. Then \eqref{star} implies that each eigenfunction $u$ lies in $H^1_{qp}(\Omega)$ and so
\ben  \nonumber
 (\gp -\mu)u=0 &\Leftrightarrow & (\go -\mu)u + (\gp -\go )u = 0 \\ \nonumber
 &\Leftrightarrow & (I-\mu \go ^{-1})u + ( \go ^{-1}\gp -I)u = 0 \\
 &\Leftrightarrow &  u + (I-\mu \go ^{-1})^{-1} ( \go ^{-1}\gp -I)u = 0, \label{reform1}
\een
where the last equivalence follows, as  {$(I-\mu \go ^{-1})^{-1}: H^{-1}_{qp}(\Omega)\to H^1_{qp}(\Omega)$ 
} is bijective. We are therefore interested in the operator $$(I-\mu \go ^{-1})^{-1} ( \go ^{-1}\gp -I).$$ We first study 
$\go ^{-1}\gp -I$.

\begin{lemma} \label{lemma2}
Let $K=(\go ^{-1}\gp -I):H^{-1}_{qp}(\Omega) \to H^{-1}_{qp}(\Omega)$. Then $K$ is symmetric.
\end{lemma}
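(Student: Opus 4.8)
The plan is to compute the $H^{-1}_{qp}$-inner product $\llangle Kw, v\rrangle_{H^{-1}}$ directly and show it is symmetric in $w$ and $v$. Since $L^2(\Omega)$ is dense in $H^{-1}_{qp}(\Omega)$ and $K$ is bounded (being a composition of bounded operators, using Lemma \ref{lemma1} and the boundedness of $\go^{-1}=\lo+1$), it suffices by continuity to verify symmetry for $w,v\in L^2(\Omega)$.

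First I would use the identity from Remark \ref{remark:2}(3), namely $\llangle u, v\rrangle_{H^{-1}} = \llangle u, \go v\rrangle_{L^2}$ for $u\in L^2(\Omega)$, together with the fact that $\phi^{-1}$ acts as $\go$ and $\phi$ acts as $\lo+1=\go^{-1}$. Writing $K = \go^{-1}\gp - I$, I would compute
\beq
\llangle Kw, v\rrangle_{H^{-1}} = \llangle (\go^{-1}\gp - I)w, \go v\rrangle_{L^2} = \llangle \gp w, v\rrangle_{L^2} - \llangle w, \go v\rrangle_{L^2},
\eeq
using $\go^{-1}\go = I$ on $H^1_{qp}(\Omega)$ (note $\gp w \in H^1_{qp}(\Omega)$ so $\go^{-1}\gp w$ makes sense and pairs against $\go v\in H^1_{qp}(\Omega)$ correctly). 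The second term $\llangle w, \go v\rrangle_{L^2} = \llangle w, v\rrangle_{H^{-1}}$ is already manifestly symmetric. For the first term, I would invoke that $G_1=\gp|_{L^2}$ is a symmetric operator in $L^2$ (established in the discussion around \eqref{Gone}), so $\llangle \gp w, v\rrangle_{L^2} = \llangle w, \gp v\rrangle_{L^2} = \llangle w, G_1 v\rrangle_{L^2}$, which is symmetric in $w\leftrightarrow v$. Combining, $\llangle Kw, v\rrangle_{H^{-1}}$ is symmetric, hence $K$ is symmetric.

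The main obstacle, such as it is, is bookkeeping: one must be careful about which pairings are $L^2$-pairings versus $H^{-1}$-$H^1$ dual pairings, and about the domains — in particular that $\gp w$ and $\gp v$ land in $H^1_{qp}(\Omega)\subset L^2(\Omega)$ so that all the $L^2$-inner products involved are genuinely defined, and that applying $\go^{-1}$ to an $H^1_{qp}$-function is legitimate. An alternative, perhaps cleaner, route is to write $K = \go^{-1}(\gp - \go)$ and note $\llangle Kw, v\rrangle_{H^{-1}} = \llangle (\gp-\go)w, v\rrangle_{H^{-1}}$ is symmetric iff $\gp - \go$ is symmetric as an operator $H^{-1}_{qp}\to H^1_{qp}\subset H^{-1}_{qp}$; but then one is reduced to the symmetry of $\gp$ and $\go$ in the $H^{-1}$-inner product, which again follows by density from the $L^2$-symmetry of $G_0$ and $G_1$. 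Either way the proof is short once the dual-pairing identities of Remark \ref{remark:2} are in hand.
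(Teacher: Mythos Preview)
Your approach is essentially the same as the paper's: reduce to $L^2$ by density and invoke the $L^2$-symmetry of $G_0$ and $G_1$. The paper carries out the limit via explicit sequences, whereas you invoke boundedness of $K$ and continuity; both are fine.

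There is, however, a small technical slip in your first displayed line. You invoke Remark~\ref{remark:2}(3) in the form $\llangle u, v\rrangle_{H^{-1}} = \llangle u, \go v\rrangle_{L^2}$ with $u=Kw$, but that identity is stated for $u\in L^2(\Omega)$, and $Kw=\go^{-1}\gp w - w$ need \emph{not} lie in $L^2$ even when $w\in L^2$: indeed $\gp w=G_1 w\in D(L_1)$, but applying $\go^{-1}=\lo+1$ requires $G_1 w\in D(L_0)$, which is not guaranteed for non-smooth $\eps_0,\eps_1$. The fix is immediate: for arbitrary $u\in H^{-1}_{qp}(\Omega)$ one has $\llangle u, v\rrangle_{H^{-1}} = u[\go v]$ (the dual pairing), and then
\[
\llangle Kw, v\rrangle_{H^{-1}} = (Kw)[\go v] = (\phi\gp w)[\go v] - w[\go v]
= B_0[\gp w,\go v] - \llangle w,\go v\rrangle_{L^2}
= \overline{v[\gp w]} - \llangle w,\go v\rrangle_{L^2},
\]
which for $v\in L^2$ gives exactly your $\llangle \gp w,v\rrangle_{L^2} - \llangle w,\go v\rrangle_{L^2}$. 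This is precisely the computation the paper performs (it writes $\llangle Ku,v\rrangle_{H^{-1}}=\overline{v[(\gp-\go)u]}$ first and then passes to $L^2$ via sequences). So once you replace the unjustified $L^2$-pairing by the dual pairing, your argument is correct and coincides with the paper's.
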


\begin{proof} Let $u,v\in H^{-1}_{qp}(\Omega)$ and choose
 sequences $(v_n)$ and $(u_m)$ in $L^2(\Omega)$ such that
$v_n\to v$   and $u_m\to u$ in $H^{-1}_{qp}(\Omega)$.
We first note that
\ben \label{symmK} 
   \llangle K u,v\rrangle_{H^{-1}}& = &  \llangle \phi^{-1} K  u, \phi^{-1} v \rrangle_{H^1}=\llangle \go K  u, \phi^{-1} v \rrangle _{H^1} \\
  &  = &\llangle (\gp -\go )u,\phi^{-1}v\rrangle_{H^1}\ =\ \overline{v [ (\gp -\go )u]}. \nonumber
\een	
Now, using the convergence in $H^{-1}_{qp}(\Omega)$ and the symmetry of the $G_i$, $i=0,1$, in $L^2(\Omega)$  we get
\bea	v [ (\gp -\go )u] &=& \lim_{n\to \infty} v_n[(\gp -\go )u]
    \  = \ \lim_{n\to\infty} \llangle v_n, (\gp -\go )u\rrangle  _{L^2} \\
     &  = &  \lim_{n\to\infty}\lim_{m\to \infty} \llangle v_n, (G_1-G_0)u_m\rrangle_{L^2}\\
      &  =&  \lim_{n\to\infty}\lim_{m\to \infty} \llangle  (G_1-G_0)v_n, u_m \rrangle_{L^2}\\
       &  = &  \lim_{n\to\infty}\lim_{m\to \infty}   \overline{u_m[(G_1-G_0)v_n]} \\
        &   =&  \lim_{n\to\infty} \overline{u[(\gp -\go)v_n]}=\overline{u[(\gp -\go)v]}. 
\eea
By a similar calculation to \eqref{symmK}, this equals $ \llangle  K v,u\rrangle_{H^{-1}}=\overline{ \llangle u,Kv\rrangle}_{H^{-1}}$, proving the result.
\end{proof}

\begin{lemma}
For $u \in H^{-1}_{qp}(\Omega),\;\; \llangle Ku,u\rrangle_{H^{-1}}\geq 0$.
\end{lemma}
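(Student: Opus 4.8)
The plan is to reduce everything, via the computation already carried out in the proof of Lemma~\ref{lemma2}, to a monotonicity statement for the two resolvents and then to verify that monotonicity by a variational (Dirichlet principle) argument. Recall from that computation that for $u,v\in H^{-1}_{qp}(\Omega)$ one has $\llangle Ku,v\rrangle_{H^{-1}}=\overline{v[(\gp-\go)u]}$; taking $v=u$ gives
\be
\llangle Ku,u\rrangle_{H^{-1}}=\overline{u[(\gp-\go)u]}.
\ee
Since $K$ is symmetric (Lemma~\ref{lemma2}), the left-hand side is real, so the assertion is equivalent to $u[(\gp-\go)u]\ge0$ for all $u\in H^{-1}_{qp}(\Omega)$. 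Exactly as in the proof of Lemma~\ref{lem3}, I would first prove this for $u\in L^2(\Omega)$ and then pass to the limit: if $u_n\in L^2(\Omega)$ with $u_n\to u$ in $H^{-1}_{qp}(\Omega)$, then boundedness of $\gp,\go\colon H^{-1}_{qp}(\Omega)\to H^1_{qp}(\Omega)$ (Lemma~\ref{lemma1}) forces $u_n[(\gp-\go)u_n]\to u[(\gp-\go)u]$, so nonnegativity is inherited.

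For $u\in L^2(\Omega)$ set $u_0=\go u=G_0u$ and $u_1=\gp u=G_1u$, the unique elements of $H^1_{qp}(\Omega)$ with $B_0[u_0,\varphi]=\llangle u,\varphi\rrangle_{L^2}$ and $B_1[u_1,\varphi]=\llangle u,\varphi\rrangle_{L^2}$ for all $\varphi\in H^1_{qp}(\Omega)$. Testing with $\varphi=u_i$ gives $\llangle u,u_i\rrangle_{L^2}=B_i[u_i,u_i]$, hence
\be
u[(\gp-\go)u]=\llangle u,u_1-u_0\rrangle_{L^2}=B_1[u_1,u_1]-B_0[u_0,u_0],
\ee
and the claim reduces to $B_1[u_1,u_1]\ge B_0[u_0,u_0]$. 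The cleanest way to obtain this is the Dirichlet principle: for fixed $u$ and $i\in\{0,1\}$, since $B_i$ is a coercive nonnegative Hermitian form with unique critical point $u_i$,
\be
B_i[u_i,u_i]=\llangle u,u_i\rrangle_{L^2}=\sup_{v\in H^1_{qp}(\Omega)}\bigl(2\operatorname{Re}\llangle u,v\rrangle_{L^2}-B_i[v,v]\bigr).
\ee
The hypothesis $\eps_1\ge\eps_0$ gives $1/\eps_1\le1/\eps_0$ pointwise, so $B_1[v,v]\le B_0[v,v]$ for every $v\in H^1_{qp}(\Omega)$; comparing the two suprema term by term gives $B_1[u_1,u_1]\ge B_0[u_0,u_0]$. (A hands-on alternative: expand $0\le B_1[u_0-u_1,u_0-u_1]$ and use $B_1[u_0,u_1]=\overline{B_1[u_1,u_0]}=\overline{\llangle u,u_0\rrangle_{L^2}}=B_0[u_0,u_0]$, which is real, to get $B_1[u_1,u_1]\ge2B_0[u_0,u_0]-B_1[u_0,u_0]=B_0[u_0,u_0]+\int_\Omega\bigl(\tfrac{1}{\eps_0}-\tfrac{1}{\eps_1}\bigr)|\nabla u_0|^2\,d\bx\ge B_0[u_0,u_0]$.)

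The monotonicity step is the only genuine obstacle. It is tempting but useless to subtract the two estimates of Lemma~\ref{lem3}, and one must be careful to expand the correct quadratic form ($B_1$, not $B_0$, applied to $u_0-u_1$) so that the leftover term carries the favorable sign $\tfrac{1}{\eps_0}-\tfrac{1}{\eps_1}\ge0$; equivalently, the substance of the step is that $(L+1)^{-1}$ is order-monotone in the coefficient $1/\eps$, which is precisely what the variational characterization makes transparent. Everything else — the identity inherited from Lemma~\ref{lemma2}, the $L^2$-to-$H^{-1}_{qp}(\Omega)$ approximation, and the bookkeeping with the dual pairing — is routine given the lemmas already established.
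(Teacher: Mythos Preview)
Your proof is correct. Your primary route via the Dirichlet principle is a genuinely different argument from the paper's: the paper works directly in $H^{-1}_{qp}(\Omega)$ (no $L^2$-approximation) and, setting $w=(\lp-\lo)\go u$, derives the algebraic decomposition
\[
\llangle Ku,u\rrangle_{H^{-1}}=w[\gp w]+\int_\Omega\Bigl(\tfrac{1}{\eps_0}-\tfrac{1}{\eps_1}\Bigr)|\nabla \go u|^2\,d\bx,
\]
both terms being nonnegative (the first by Lemma~\ref{lem3}). Your variational characterization $B_i[u_i,u_i]=\sup_v\bigl(2\operatorname{Re}\llangle u,v\rrangle_{L^2}-B_i[v,v]\bigr)$ makes the underlying order-monotonicity of the resolvent in the coefficient completely transparent, at the minor cost of the $L^2\to H^{-1}_{qp}$ density step. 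Interestingly, your ``hands-on alternative'' is in fact the paper's proof rewritten in form language: since $\gp w=u_0-u_1$, one checks that $w[\gp w]=B_1[u_0-u_1,u_0-u_1]$, so expanding $0\le B_1[u_0-u_1,u_0-u_1]$ reproduces exactly the paper's two nonnegative terms. Thus the two proofs coincide at the level of your alternative, while your main argument gives a cleaner conceptual explanation of why the inequality holds.
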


\begin{proof}
From the proof of Lemma \ref{lemma2} we have
$$\llangle Ku,u\rrangle_{H^{-1}}= u[(\gp -\go )u].$$

Moreover, $(\lo+1)\go u=u \in H^{-1}_{qp}(\Omega)$ and 
\bea 
 \left (  (\lp+1)\go u\right ) [\gp u]&=&  \int_{\Omega} \left(\frac{1}{\eps_1} \nabla \go  u\overline{\nabla \gp u}+\go u\overline{\gp u}\right)d\bx \ =\ 
\overline{\left ( (\lp+1)\gp u\right ) [\go u]}\\
&=& \overline{u[\go u]}\ =\ \overline{\llangle \go  u,\go u\rrangle}_{H^1}\ = \ \llangle \go  u,\go u\rrangle_{H^1}\ = \ u[\go u].
\eea
Combining these three equalities, we get
\bea 
   \llangle Ku,u \rrangle_{H^{-1}}&=& u[(\gp -\go )u]\ =\ u[\gp u] -u[\go u]\\
	&=& ((\lo+1)\go u)[\gp u] -  ((\lp+1)\go u) [\gp u]\\
	&= & -\left( (\lp-\lo)\go  u \right ) [ \gp u] \\	
   &=&  ( (\lp-\lo) \go  u )   [ (\gp  (\lp-\lo)\go -\go )u ] 
   \\
   &=& \left ( (\lp-\lo) \go  u \right )[\gp  (\lp-\lo)\go  u] -\left ( (\lp-\lo)\go u \right ) [\go u]. \\
\eea
The first term is non-negative by Lemma \ref{lem3}.
Also,
$$
-\left ( (\lp-\lo)\go  u\right ) [\go u]=\int_\Omega \left  (   \frac{1}{\eps_0}-\frac{1}{\eps_1} \right ) | \nabla \go  u|^2 \geq 0,$$
implying $\llangle Ku,u\rrangle_{H^{-1}}\geq 0$.
\end{proof}

\begin{lemma}\label{lem6}
$(I-\mu \go ^{-1})^{-1}$ is symmetric in $H^{-1}_{qp}(\Omega)$ for $\mu\in((\Lambda_1+1)^{-1}, (\Lambda_0+1)^{-1})$.
\end{lemma}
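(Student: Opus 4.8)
The plan is to deduce this directly from the self-adjointness of $\lo$ established in Proposition~\ref{prop2.1}, together with the fact (recorded just before the statement) that for $\mu$ in the given interval the operator $(I-\mu\go^{-1})^{-1}$ is a well-defined bijection of $H^{-1}_{qp}(\Omega)$ onto $H^1_{qp}(\Omega)=D(\lo)$. The essential observation is that $\go^{-1}=\lo+1$ and that $\mu$ is \emph{real}, so $I-\mu\go^{-1}=I-\mu(\lo+1)$ is a self-adjoint operator with domain $H^1_{qp}(\Omega)$; the inverse of a self-adjoint operator that happens to possess a bounded, everywhere-defined inverse is again bounded and self-adjoint, and in particular symmetric, which is the claim. (Since $(I-\mu\go^{-1})^{-1}$ is bounded as a map $H^{-1}_{qp}(\Omega)\to H^1_{qp}(\Omega)\hookrightarrow H^{-1}_{qp}(\Omega)$, symmetry here is actually equivalent to self-adjointness.)

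To stay in the elementary style of the preceding lemmas I would instead argue by hand. One fixes $u,v\in H^{-1}_{qp}(\Omega)$ and sets $a:=(I-\mu\go^{-1})^{-1}u$, $b:=(I-\mu\go^{-1})^{-1}v$, both of which lie in $H^1_{qp}(\Omega)=D(\lo)$ because $(I-\mu\go^{-1})^{-1}$ maps $H^{-1}_{qp}(\Omega)$ into $H^1_{qp}(\Omega)$; then $u=a-\mu(\lo+1)a$ and $v=b-\mu(\lo+1)b$, and using $\mu\in\R$ one computes
\begin{align*}
\llangle (I-\mu\go^{-1})^{-1}u,\,v\rrangle_{H^{-1}}
 &= \llangle a,\,b-\mu(\lo+1)b\rrangle_{H^{-1}}
  = \llangle a,b\rrangle_{H^{-1}} - \mu\,\llangle a,(\lo+1)b\rrangle_{H^{-1}},\\
\llangle u,\,(I-\mu\go^{-1})^{-1}v\rrangle_{H^{-1}}
 &= \llangle a-\mu(\lo+1)a,\,b\rrangle_{H^{-1}}
  = \llangle a,b\rrangle_{H^{-1}} - \mu\,\llangle (\lo+1)a,b\rrangle_{H^{-1}}.
\end{align*}
By the computation in the proof of Proposition~\ref{prop2.1} one has $\llangle (\lo+1)a,b\rrangle_{H^{-1}}=\llangle a,b\rrangle_{L^2}$ for $a,b\in H^1_{qp}(\Omega)$, and hence also $\llangle a,(\lo+1)b\rrangle_{H^{-1}}=\overline{\llangle (\lo+1)b,a\rrangle_{H^{-1}}}=\overline{\llangle b,a\rrangle_{L^2}}=\llangle a,b\rrangle_{L^2}$; thus both right-hand sides equal $\llangle a,b\rrangle_{H^{-1}}-\mu\,\llangle a,b\rrangle_{L^2}$ and the two dual pairings coincide, which is precisely symmetry of $(I-\mu\go^{-1})^{-1}$ in $H^{-1}_{qp}(\Omega)$.

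There is no genuine obstacle here: the whole argument rests on the already established symmetry of $\lo$ on its domain and on the reality of $\mu$. The only two points requiring care are (i) that $a$ and $b$ really lie in $D(\lo)=H^1_{qp}(\Omega)$, so that the manipulations with $\lo+1$ are legitimate — this is exactly the mapping property recalled before the lemma — and (ii) that no complex conjugate disturbs the cancellation of the mixed terms, which is why one needs $\mu$ (equivalently $\lambda$) real.
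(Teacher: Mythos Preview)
Your proof is correct and follows essentially the same approach as the paper: the paper's proof is the one-line observation that the result is obvious since $\lo$ is self-adjoint in $H^{-1}_{qp}(\Omega)$, which is precisely your first paragraph. Your additional explicit computation is a faithful unpacking of that observation and is also correct.
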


\begin{proof}
%
This is obvious, since $\lo $ is a self-adjoint operator in $H^{-1}_{qp}(\Omega)$.
    \end{proof}

 In order to proceed, it is most convenient to modify the equation
\eqref{reform1} by suitably projecting out the null space of $K$.
 Set $\ck = \overline{\Ran K}$  and let  $P:H^{-1}_{qp}(\Omega)\to \ck$ be the orthogonal projection.
On $\ck$, we introduce a new inner product given by
\beq\label{defInnerprod}
\la f, g \ra_\ck  := \la K f, g \ra_{H^{-1}}.
\eeq 
We first show  the definiteness of this inner product.

\begin{lemma} \label{lem5}
$\la\cdot,\cdot\ra_\ck $ is positive definite on  $\ck$.
\end{lemma}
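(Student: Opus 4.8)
The plan is to show that $\la K f, f \ra_{H^{-1}} > 0$ for every nonzero $f \in \ck = \overline{\Ran K}$. Since we already know from the previous lemma that $\la K f, f \ra_{H^{-1}} \geq 0$ for all $f \in H^{-1}_{qp}(\Omega)$, what remains is strict positivity on $\ck$, i.e. that the only $f \in \ck$ with $\la K f, f \ra_{H^{-1}} = 0$ is $f = 0$. First I would unwind the quadratic form using the identity established in the proof of the preceding lemma, namely
\[
\la K f, f \ra_{H^{-1}} = \left((\lp-\lo)\go f\right)[\gp (\lp-\lo)\go f] + \int_\Omega \left(\frac{1}{\eps_0}-\frac{1}{\eps_1}\right)|\nabla \go f|^2\, d\bx.
\]
Both summands are nonnegative (the first by Lemma \ref{lem3} applied with $w = (\lp-\lo)\go f$, the second because $\eps_1 \geq \eps_0$), so $\la K f, f \ra_{H^{-1}} = 0$ forces both to vanish simultaneously.

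Next I would extract information from the vanishing of each term. From the second term vanishing, the integrand $\left(\frac{1}{\eps_0}-\frac{1}{\eps_1}\right)|\nabla \go f|^2 = 0$ almost everywhere; combined with assumption (iii), which gives a ball $D$ on which $\frac{1}{\eps_0}-\frac{1}{\eps_1}$ is bounded below by a positive constant, this yields $\nabla \go f = 0$ a.e.\ on $D$. From the first term vanishing together with Lemma \ref{lem3} (which at bottom says $\la G_i v, v \ra_{L^2} \geq 0$ and equals zero only when appropriate), I would push to conclude that $(\lp-\lo)\go f$ is in a controlled position — most cleanly, I expect one can show the vector $w := (\lp - \lo)\go f$ satisfies $w[\gp w] = 0$, and since $\gp = (\lp+1)^{-1}$ is a strictly positive operator (Lemma \ref{lem3}'s proof shows $G_i \geq 0$; strict positivity follows since $G_1$ has trivial kernel as $L_1 + 1$ is boundedly invertible), this forces $w = 0$, i.e. $(\lp - \lo)\go f = 0$. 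Spelling this out: $\lp \go f = \lo \go f$, and recalling $\go f \in H^1_{qp}(\Omega)$, the difference $(\lp - \lo)\go f$ acts on test functions $\varphi$ as $\int_\Omega \left(\frac{1}{\eps_1} - \frac{1}{\eps_0}\right)\nabla \go f \cdot \overline{\nabla \varphi}\, d\bx$, which is consistent with the pointwise conclusion $\left(\frac{1}{\eps_1}-\frac{1}{\eps_0}\right)\nabla \go f = 0$ a.e.

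Finally I would combine these to force $f = 0$. Since $f \in \ck = \overline{\Ran K}$ and $K = \go^{-1}\gp - I = \go^{-1}(\gp - \go)$, the orthogonal complement of $\ck$ is the kernel of $K^* = K$ (by Lemma \ref{lemma2}); a vector $f$ that is simultaneously in $\ck$ and satisfies $Kf = 0$ (which I expect to derive from $(\lp-\lo)\go f = 0$, since $K = \go^{-1}\gp - I$ and one checks $\go^{-1}\gp - I = \go^{-1}(\gp - \go)$ relates to $(\lp - \lo)\go$ via the resolvent identity $\gp - \go = \gp(\lo - \lp)\go = -\gp(\lp - \lo)\go$, so $Kf = -\go^{-1}\gp(\lp-\lo)\go f = 0$) must be zero. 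Thus $\la \cdot, \cdot \ra_\ck$ is positive definite. The main obstacle I anticipate is the bookkeeping in the step that converts "$\la Kf, f\ra_{H^{-1}} = 0$" into "$(\lp - \lo)\go f = 0$": one has to be careful that the strict positivity of $\gp$ on $H^{-1}_{qp}(\Omega)$ (as opposed to merely nonnegativity) is genuinely available, which it is because $\gp$ is boundedly invertible with inverse $\lp + 1$, so $\gp w = 0 \Rightarrow w = 0$; then from $w[\gp w] = 0$ and the Cauchy--Schwarz inequality for the nonnegative form $(v, v') \mapsto v[\gp v']$ one gets $v[\gp w] = 0$ for all $v$, hence $\gp w = 0$, hence $w = 0$.
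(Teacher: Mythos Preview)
Your argument is correct, but it is considerably more elaborate than the paper's. The paper's proof is essentially one line: since $K$ is bounded, symmetric, and nonnegative on $H^{-1}_{qp}(\Omega)$, the square root $K^{1/2}$ exists; then $\la Ku,u\ra_{H^{-1}}=\|K^{1/2}u\|_{H^{-1}}^2=0$ forces $K^{1/2}u=0$, hence $Ku=0$, so $u\in\ck\cap\ker K=\ck\cap\ck^\perp=\{0\}$.

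Your route reaches the same endpoint $Kf=0$ via the explicit structure of $K$: you make the first summand in the decomposition vanish, apply the Cauchy--Schwarz argument to the nonnegative form $w\mapsto w[\gp w]$ to deduce $(\lp-\lo)\go f=0$, and then translate this into $Kf=0$ through the resolvent identity $K=-\go^{-1}\gp(\lp-\lo)\go$. All of this is valid. Two remarks, though. First, the Cauchy--Schwarz/positivity trick you apply to $\gp$ is exactly the same mechanism the paper applies directly to $K$; you have not avoided it, only relocated it to a different operator and then added an algebraic step. Second, your discussion of the second summand and assumption~(iii) (the ball $D$) is unnecessary: you never actually use the conclusion $\nabla\go f=0$ on $D$, since $Kf=0$ already follows from the vanishing of the first summand alone. (Indeed, either summand vanishing on its own yields $(\lp-\lo)\go f=0$.) So the detour through the ball $D$ can be dropped entirely, and once you notice that, the natural simplification is the paper's direct square-root argument.
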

\begin{proof} 
Suppose  $\la u,u\ra_\ck=\llangle Ku,u\rrangle_{H^{-1}}=0$ for some $u\in\ck$.  As $K\geq 0$, we can define $K^{1/2}$ as a selfadjoint operator in $H^{-1}_{qp}(\Omega)$ and get $K^{1/2}u=0$,
implying that  $Ku=0$.
Thus $u\in\ck \cap \ker K$ giving $u=0$.
%
\end{proof}

Furthermore, we have the following bounds.

\begin{lemma} \label{lem12} We have the estimates 
$$ (i) \quad \norm{ Ku}_{H^{-1}}\leq \norm{\gp  }_{H^{-1}\to H^1} \norm{\frac{1}{\eps_0}-\frac{1}{\eps_1}}_\infty \norm{u}_{H^{-1}},$$
 $$ (ii)\quad \norm{K}\leq  \norm{\gp  }_{H^{-1}\to H^1} \norm{\frac{1}{\eps_0}-\frac{1}{\eps_1}}_\infty,$$     
$$ (iii) \quad \norm{ Ku}_{H^{-1}}^2\leq \norm{K}\norm{u}^2_\ck.$$
Moreover, if $\delta:=\norm{\frac{1}{\eps_0}-\frac{1}{\eps_1}}_\infty<1/\norm{\go  }_{H^{-1}\to H^1}$, then
$$ (iv) \quad \norm{\gp  }_{H^{-1}\to H^1} \leq \frac{\norm{\go  }_{H^{-1}\to H^1}}{1-\delta \norm{\go }_{H^{-1}\to H^1}}.$$
\end{lemma}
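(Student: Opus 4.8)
The plan is to prove the four estimates in Lemma \ref{lem12} essentially in the order stated, since each relies on the previous ones together with the operator identity $K=\go^{-1}\gp - I$ and the variational characterisations established above.

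\medskip

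\noindent\textbf{Proof of (i).} First I would rewrite $K$ in a form that exhibits the factor $\frac{1}{\eps_0}-\frac{1}{\eps_1}$. For $u\in H^{-1}_{qp}(\Omega)$ set $v:=\gp u\in H^1_{qp}(\Omega)$ and $w:=\go u\in H^1_{qp}(\Omega)$. From the definitions of $\go$ and $\gp$ via the forms $B_0,B_1$, for every $\varphi\in H^1_{qp}(\Omega)$ we have $B_1[v,\varphi]=u[\varphi]=B_0[w,\varphi]$, hence
$$
B_0[v-w,\varphi]=B_0[v,\varphi]-B_1[v,\varphi]=\int_\Omega\Bigl(\tfrac{1}{\eps_0}-\tfrac{1}{\eps_1}\Bigr)\nabla v\,\overline{\nabla\varphi}\,d\bx .
$$
Since $\phi$ is an isometric isomorphism and $\go^{-1}\gp u - \go u = \go^{-1}(\gp-\go)u$, one checks that $Ku=\phi(v-w)$ as elements of $H^{-1}_{qp}(\Omega)$, so $\norm{Ku}_{H^{-1}}=\norm{v-w}_{H^1}$. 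Taking $\varphi=v-w$ in the identity above and using Cauchy–Schwarz,
$$
\norm{v-w}_{H^1}^2 = \int_\Omega\Bigl(\tfrac{1}{\eps_0}-\tfrac{1}{\eps_1}\Bigr)\nabla v\,\overline{\nabla(v-w)}\,d\bx
\le \norm{\tfrac{1}{\eps_0}-\tfrac{1}{\eps_1}}_\infty\,\norm{\nabla v}_{L^2}\,\norm{v-w}_{H^1},
$$
and $\norm{\nabla v}_{L^2}\le c_0^{1/2}\norm{v}_{H^1}\le c_0^{1/2}\norm{\gp}_{H^{-1}\to H^1}\norm{u}_{H^{-1}}$; absorbing the harmless constant (or, more cleanly, noting $\int\tfrac1{\eps_0}|\nabla v|^2\le \norm v_{H^1}^2$ so that $\norm{\nabla v}$ is controlled by $\norm v_{H^1}$ directly) yields $(i)$.

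\medskip

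\noindent\textbf{Proof of (ii), (iv).} Estimate $(ii)$ is just the operator-norm restatement of $(i)$, taking the supremum over $\norm{u}_{H^{-1}}\le 1$. For $(iv)$, I would compare $\gp$ with $\go$ by a Neumann series. With $w=\go u$ as above and $f\in H^{-1}_{qp}(\Omega)$, writing $u=\gp f$, $u_0=\go f$ one derives, exactly as in $(i)$, the identity $B_0[u-u_0,\varphi]=\int_\Omega(\tfrac1{\eps_0}-\tfrac1{\eps_1})\nabla u\,\overline{\nabla\varphi}$, i.e. $\phi(u-u_0)=(\text{mult.\ by }\tfrac1{\eps_0}-\tfrac1{\eps_1}\text{ in the gradient}) $ applied to $u$; hence $\norm{u-u_0}_{H^1}\le \delta\,\norm{u}_{H^1}$, so
$$
\norm{\gp f}_{H^1}\le \norm{\go f}_{H^1}+\delta\norm{\gp f}_{H^1}\le \norm{\go}_{H^{-1}\to H^1}\norm f_{H^{-1}}+\delta\norm{\gp f}_{H^1}.
$$
Rearranging (this is where the hypothesis $\delta\norm{\go}_{H^{-1}\to H^1}<1$ is used, to divide by $1-\delta\norm{\go}_{H^{-1}\to H^1}>0$) gives $(iv)$ after taking the supremum over $\norm f_{H^{-1}}\le 1$.

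\medskip

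\noindent\textbf{Proof of (iii).} Here I use the inner product $\la\cdot,\cdot\ra_\ck=\la K\cdot,\cdot\ra_{H^{-1}}$ on $\ck=\overline{\Ran K}$ and the self-adjointness and positivity of $K$. Write $u=Pu+(I-P)u$; since $(I-P)u\in\ker K$ (as $\ker K=(\overline{\Ran K})^\perp$ because $K$ is symmetric and bounded) we have $Ku=K(Pu)$ and $\norm{u}_\ck^2=\la K Pu,Pu\ra_{H^{-1}}$, so we may assume $u\in\ck$. Then, using $K=K^{1/2}K^{1/2}$ and Cauchy–Schwarz in $H^{-1}_{qp}(\Omega)$,
$$
\norm{Ku}_{H^{-1}}^2=\norm{K^{1/2}(K^{1/2}u)}_{H^{-1}}^2\le \norm{K}\,\norm{K^{1/2}u}_{H^{-1}}^2=\norm{K}\,\la Ku,u\ra_{H^{-1}}=\norm{K}\,\norm{u}_\ck^2,
$$
which is $(iii)$.

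\medskip

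\noindent I expect the main obstacle to be item $(i)$: everything hinges on correctly identifying $Ku=\go^{-1}(\gp-\go)u$ with (the $\phi$-image of) the $H^1$-function $\gp u-\go u$ and then extracting the multiplicative factor $\tfrac1{\eps_0}-\tfrac1{\eps_1}$ from the difference of the two sesquilinear forms; once that variational identity is in place, $(ii)$–$(iv)$ are routine, and $(iii)$ is a soft functional-analytic argument using $K^{1/2}$.
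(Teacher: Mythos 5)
Your proof is correct in its essential strategy and follows the same route as the paper: the core of (i)--(ii) is the identity $Ku[\varphi]=\int_\Omega\bigl(\tfrac{1}{\eps_0}-\tfrac{1}{\eps_1}\bigr)\nabla\gp u\,\overline{\nabla\varphi}$, which you obtain by writing $Ku=\phi(\gp u-\go u)$ and subtracting the forms, while the paper reaches it via the operator identity $K=(\lo+1)\gp-I=(\lo-\lp)\gp$ --- the same computation. For (iii) the paper uses the Rayleigh-quotient characterisation $\norm K=\sup\la Ku,Ku\ra/\la Ku,u\ra$, which is what your $K^{1/2}$ argument says explicitly. For (iv) the paper instead uses $\gp-\go=\go K$ together with (ii) and Lemma \ref{lemma1}; your direct form comparison $\norm{\gp f-\go f}_{H^1}\le\delta\norm{\gp f}_{H^1}$ is an equivalent bootstrap and gets the same inequality.

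One step deserves more care, namely the Cauchy--Schwarz in (i), where you write $\norm{\nabla v}_{L^2}\le c_0^{1/2}\norm v_{H^1}$ and then estimate $\norm{\nabla(v-w)}_{L^2}$ by $\norm{v-w}_{H^1}$, dismissing the constants as ``harmless.'' They are not harmless: in the $B_0$-norm one has $\norm{\nabla w}_{L^2}\le\norm{\eps_0}_\infty^{1/2}\norm w_{H^1}$, so that route produces a spurious factor $\norm{\eps_0}_\infty$. The clean way (which the paper implicitly uses) is to keep the weight inside Cauchy--Schwarz, writing $a=\tfrac{1}{\eps_0}-\tfrac{1}{\eps_1}=(a\eps_0)\cdot\tfrac{1}{\eps_0}$ so that
$$\left|\int_\Omega a\,\nabla v\,\overline{\nabla(v-w)}\right|\le\norm{a\eps_0}_\infty\left(\int_\Omega\tfrac1{\eps_0}|\nabla v|^2\right)^{1/2}\left(\int_\Omega\tfrac1{\eps_0}|\nabla(v-w)|^2\right)^{1/2}\le\norm{a\eps_0}_\infty\norm v_{H^1}\norm{v-w}_{H^1},$$
which yields $\norm{Ku}_{H^{-1}}\le\norm{1-\eps_0/\eps_1}_\infty\norm{\gp u}_{H^1}$ with no spurious factor. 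You should use this weighted version, and also note that the parenthetical alternative you suggest ($\int\tfrac1{\eps_0}|\nabla v|^2\le\norm v_{H^1}^2$) controls the weighted gradient, not $\norm{\nabla v}_{L^2}$, so it does not by itself rescue the unweighted estimate.
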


\begin{proof}
 The identity $K=(\lo +1) \gp  -I = (\lo -\lp) \gp  $
   implies for $u\in H^{-1}_{qp}(\Omega)$ and $\varphi\in H^1_{qp}(\Omega)$,
    \beq 
      Ku[\varphi]= \int _\Omega  \left( \frac{1}{\eps_0}-\frac{1}{\eps_1} \right) \nabla \gp u \nabla \overline\varphi.
		\eeq	
			Therefore,
   \beq  | Ku[\varphi]|\leq \norm{\frac{1}{\eps_0}-\frac{1}{\eps_1}}_\infty \norm{\gp  u}_{H^1} \norm{\varphi}_{H^1},\eeq 
		proving (i) and (ii).
		
      Since $K\geq 0$,  
      $$\norm{K}=\sup \dfrac{ \llangle Ku,Ku\rrangle}{\llangle u, Ku\rrangle}=\sup\frac{\norm{Ku}_{H^{-1}}^2}{\norm{u}_\ck^2}.$$
        Thus we have (iii).
		
				Finally, as $\gp-\go=\go K$, using (ii) we have 
				$$\norm{\gp  }_{H^{-1}\to H^1} \leq {\norm{\go  }_{H^{-1}\to H^1}}(1+\norm{K})\leq  \norm{\go  }_{H^{-1}\to H^1}(1+\delta \norm{\gp  }_{H^{-1}\to H^1} )$$
				and rearranging gives the desired inequality.
   \end{proof}

Note in particular that this means that for small perturbations, the only dependence of the bound for $\norm{Ku}_{H^{-1}}$ on the perturbation $\eps_1$ is through the term $\norm{\frac{1}{\eps_0}-\frac{1}{\eps_1}}_\infty $.

We now introduce the operator we wish to study. Let
\beq A_\mu:= P(I-\mu \go^{-1})^{-1} K :\ck\to\ck. \eeq

\begin{lemma}\label{lemmavar}
Equation \eqref{star} has a non-trivial solution $u$ iff $-1$ is an eigenvalue of $A_\mu$.
\end{lemma}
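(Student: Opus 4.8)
The plan is to read the result off the equivalence chain \eqref{reform1}, after factoring out $\ker K$. Write $B:=(I-\mu\go^{-1})^{-1}$, which maps $H^{-1}_{qp}(\Omega)$ into $H^1_{qp}(\Omega)$ because $\mu\in\big((\Lambda_1+1)^{-1},(\Lambda_0+1)^{-1}\big)$ lies in a spectral gap of $\go$; recall also that $K:H^{-1}_{qp}(\Omega)\to H^{-1}_{qp}(\Omega)$ is bounded (Lemma \ref{lem12}) and, by Lemma \ref{lemma2}, symmetric, hence self-adjoint. The first thing I would record is the elementary but crucial consequence: $\ck=\overline{\Ran K}=(\ker K)^{\perp}$, so that $H^{-1}_{qp}(\Omega)=\ck\oplus\ker K$, the orthogonal projection $P$ onto $\ck$ satisfies $KP=K$, and $(I-P)v\in\ker K$ for every $v$. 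By \eqref{reform1}, a non-trivial solution of \eqref{star} is exactly a non-zero $u\in H^1_{qp}(\Omega)$ with $u+BKu=0$, and this is the formulation I would work with throughout.

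For the forward implication I would start from such a $u$. Since $u=-BKu$, we must have $Ku\neq0$ (otherwise $u=0$), hence $u\notin\ker K=\ck^{\perp}$, so $g:=Pu\in\ck$ is non-zero. Then, using $KP=K$ first and $BKu=-u$ second,
\[
A_\mu g = PBK(Pu)=PB(Ku)=P(BKu)=P(-u)=-Pu=-g,
\]
so $-1$ is an eigenvalue of $A_\mu$.

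For the converse I would take $g\in\ck\setminus\{0\}$ with $A_\mu g=-g$, i.e.\ $PBKg=-g$, and set $u:=-BKg\in H^1_{qp}(\Omega)$. Then $Pu=-PBKg=g\neq0$, so $u\neq0$. Decomposing $BKg=-g+n$ with $n:=(I-P)BKg\in\ker K$, one gets $Ku=K(g-n)=Kg$, and hence $BKu=BKg=-g+n=-u$; that is, $u+BKu=0$, which by \eqref{reform1} says precisely that $u$ is a non-trivial solution of \eqref{star}.

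I do not expect a genuine obstacle here: once the decomposition $H^{-1}_{qp}(\Omega)=\ck\oplus\ker K$ and the identity $KP=K$ are available, the argument is purely algebraic. The points needing care are that $B$ really is a (well-defined, bounded) map $H^{-1}_{qp}(\Omega)\to H^1_{qp}(\Omega)$, which uses $\mu$ being in the gap together with Remark \ref{remark:2}; that $K$ is self-adjoint, so that $\ck^{\perp}=\ker K$ holds and $P$ is its associated projection; and — the one place where keeping track of the order of $P$, $B$ and $K$ matters — that in both computations above the projection $P$ is commuted across $K$ only, never across $B$.
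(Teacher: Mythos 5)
Your proof is correct and follows essentially the same route as the paper: forward by applying $P$ to the equivalence \eqref{reform1} and converse by setting $u=-(I-\mu\go^{-1})^{-1}Kv$ and observing $Pu=v$, $Ku=Kv$. You merely make explicit the small bookkeeping the paper leaves implicit (that $K$ self-adjoint gives $\ck^{\perp}=\ker K$ and $KP=K$, and that $Pu\neq 0$ in the forward direction).
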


\begin{proof}
Let $v = P u$,  {where $u$ is a solution of \eqref{star}}.    Applying $P$ to \eqref{reform1} shows that $v$ solves
\beq\label{reform3}
v + A_\mu v = 0.
\eeq
Conversely, one easily checks that a solution $v\neq 0$ of \eqref{reform3}
gives a solution $u\neq 0$ of the original problem \eqref{star}: we just have
to set $u =  - (I-\mu\go^{-1})^{-1}Kv$ and use \eqref{reform1}, noting that $Pu=v$,  {so $Ku=Kv$}.
%
\end{proof}

\begin{lemma}\label{lem7}
$A_\mu$ is symmetric in $\ck$.
\end{lemma}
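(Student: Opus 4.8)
The plan is to reduce the symmetry of $A_\mu=P(I-\mu\go^{-1})^{-1}K$ on $\big(\ck,\la\cdot,\cdot\ra_\ck\big)$ to symmetries already established in $H^{-1}_{qp}(\Omega)$. Write $S:=(I-\mu\go^{-1})^{-1}$; for $\mu\in\big((\Lambda_1+1)^{-1},(\Lambda_0+1)^{-1}\big)$ this operator maps $H^{-1}_{qp}(\Omega)$ boundedly into $H^1_{qp}(\Omega)\subset H^{-1}_{qp}(\Omega)$ and is symmetric on $H^{-1}_{qp}(\Omega)$ by Lemma \ref{lem6}. Hence $A_\mu=PSK$ is a well-defined map $\ck\to\ck$, since $SKu\in H^1_{qp}(\Omega)$ and $\Ran P=\ck$. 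Recall also that $\la\cdot,\cdot\ra_\ck$ is a genuine (Hermitian, positive definite) inner product by Lemmas \ref{lemma2} and \ref{lem5}, so that asking $A_\mu$ to be symmetric is meaningful.

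First I would record two elementary identities for the orthogonal projection $P$ onto $\ck=\overline{\Ran K}$. Since $\Ran K\subseteq\ck$ we have $PKu=Ku$ for every $u$, i.e.\ $PK=K$; and since $P$ and $K$ are both symmetric (hence self-adjoint, $K$ being bounded by Lemma \ref{lem12}(ii)) on $H^{-1}_{qp}(\Omega)$, taking $H^{-1}$-adjoints gives $KP=(PK)^{*}=K^{*}=K$. (Alternatively one checks $\llangle KPu,v\rrangle_{H^{-1}}=\llangle Pu,Kv\rrangle_{H^{-1}}=\llangle u,PKv\rrangle_{H^{-1}}=\llangle u,Kv\rrangle_{H^{-1}}=\llangle Ku,v\rrangle_{H^{-1}}$ directly.)

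With these in hand the computation is a short chain of equalities: for $u,v\in\ck$, using the definition \eqref{defInnerprod} of $\la\cdot,\cdot\ra_\ck$, then $KP=K$, then the symmetry of $K$ and of $S$, then $P^{*}=P$ together with $PKu=Ku$, and finally \eqref{defInnerprod} once more (now for $u\in\ck$ and $A_\mu v\in\ck$),
\begin{align*}
\la A_\mu u,v\ra_\ck
&=\llangle K A_\mu u,v\rrangle_{H^{-1}}
=\llangle KPSKu,v\rrangle_{H^{-1}}
=\llangle KSKu,v\rrangle_{H^{-1}}
=\llangle SKu,Kv\rrangle_{H^{-1}}\\
&=\llangle Ku,SKv\rrangle_{H^{-1}}
=\llangle Ku,PSKv\rrangle_{H^{-1}}
=\llangle Ku,A_\mu v\rrangle_{H^{-1}}
=\la u,A_\mu v\ra_\ck .
\end{align*}

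I do not expect a genuine obstacle here: no analytic estimate is needed, and the statement is a purely algebraic consequence of the symmetries of $K$ (Lemma \ref{lemma2}), of $S=(I-\mu\go^{-1})^{-1}$ (Lemma \ref{lem6}), and of $P$. The only points requiring care are bookkeeping ones — keeping the roles of $P$, $K$, $S$ straight, and noting that the identities $PK=K=KP$ are valid \emph{precisely} because $\ck$ is the closure of the range of $K$.
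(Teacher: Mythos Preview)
Your proof is correct and follows essentially the same route as the paper: both arguments reduce the claim to the $H^{-1}$-symmetry of $K$ (Lemma~\ref{lemma2}) and of $(I-\mu\go^{-1})^{-1}$ (Lemma~\ref{lem6}), together with the orthogonality of $P$ onto $\overline{\Ran K}$. The only cosmetic difference is the order of manipulations --- the paper first moves $K$ across by symmetry and then drops $P$ (using $Kv\in\ck$), whereas you first drop $P$ via the identity $KP=K$ and then move $K$ --- but the content is identical.
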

\begin{proof}
Let $u,v\in \ck$. Then
\bea 
   \llangle A_\mu u,v\rrangle _\ck  & = & \llangle KA_\mu u,v\rrangle_{H^{-1}}=\llangle A_\mu u, Kv\rrangle_{H^{-1}} 
   \ =\  \llangle P(I-\mu \go^{-1})^{-1}Ku,Kv\rrangle_{H^{-1}}\\
   &=& \llangle (I-\mu \go^{-1})^{-1}Ku,Kv\rrangle_{H^{-1}} \
=\ \llangle Ku,(I-\mu \go^{-1})^{-1}Kv\rrangle_{H^{-1}}\
=\ \llangle u, A_\mu v\rrangle _\ck,
\eea
where we have used Lemma \ref{lemma2} and Lemma \ref{lem6}.
\end{proof}

In the following recall that $\eps_1 = \eps_0$ if $|y| > R$.
\begin{lemma}\label{cs}
Let $H^{-1}_{cs}$ denote the space of distributions in $H^{-1}_{qp}(\Omega)$ with compact support in the $\hat{\by}$-direction, the support being
contained in $[0,1]\times [-R, R]$. Then $\Ran K\subseteq H^{-1}_{cs}$.
\end{lemma}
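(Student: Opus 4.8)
The plan is to exploit the explicit formula for $K$ derived in the proof of Lemma~\ref{lem12}, namely $K = (\lo - \lp)\gp$, which gives for $u \in H^{-1}_{qp}(\Omega)$ and $\varphi \in H^1_{qp}(\Omega)$ the representation
\beq\label{Kformula}
(Ku)[\varphi] = \int_\Omega \left(\frac{1}{\eps_0} - \frac{1}{\eps_1}\right)\nabla(\gp u)\cdot\overline{\nabla\varphi}\,d\bx.
\eeq
The key observation is that the coefficient $\frac{1}{\eps_0} - \frac{1}{\eps_1}$ vanishes outside the strip $[0,1]\times[-R,R]$, by the standing assumption that $\eps_1 = \eps_0$ for $|y| > R$. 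Hence the integral in \eqref{Kformula} is really an integral over $\Omega_R := [0,1]\times[-R,R]$, and consequently $(Ku)[\varphi]$ depends only on the values of $\varphi$ (more precisely $\nabla\varphi$) on $\Omega_R$.

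First I would make precise what "compact support in the $\hat{\by}$-direction contained in $[0,1]\times[-R,R]$" means for an element of $H^{-1}_{qp}(\Omega)$: a distribution $w \in H^{-1}_{qp}(\Omega)$ lies in $H^{-1}_{cs}$ if $w[\varphi] = 0$ for every $\varphi \in H^1_{qp}(\Omega)$ that vanishes on a neighbourhood of $\Omega_R$ in $\Omega$ (equivalently, whose support avoids $[0,1]\times[-R,R]$). Then I would simply verify this vanishing property for $w = Ku$: if $\varphi \in H^1_{qp}(\Omega)$ vanishes on a neighbourhood of $\Omega_R$, then $\nabla\varphi = 0$ on $\Omega_R$, so the integrand in \eqref{Kformula} is identically zero, whence $(Ku)[\varphi] = 0$. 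This shows $Ku \in H^{-1}_{cs}$ for all $u$, i.e. $\Ran K \subseteq H^{-1}_{cs}$.

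There is no serious obstacle here; the only point requiring a little care is the bookkeeping around the definition of support for negative-order distributions and the fact that $\gp u \in H^1_{qp}(\Omega)$ (guaranteed by Lemma~\ref{lemma1}), so that $\nabla(\gp u)$ is a genuine $L^2$ function and the integral \eqref{Kformula} makes sense termwise. One should also note that the identity \eqref{Kformula}, established in the proof of Lemma~\ref{lem12} for $u \in H^{-1}_{qp}(\Omega)$, is exactly what is needed and requires no density argument beyond what was already used there. If one prefers, $H^{-1}_{cs}$ can instead be characterised as the closure in $H^{-1}_{qp}(\Omega)$ of those $L^2(\Omega)$ functions supported in $\Omega_R$, and then the claim follows by approximating $u$ by an $L^2$ sequence and using continuity of $K$ together with the fact that $G_1 - G_0$ maps $L^2(\Omega)$ into $L^2$-functions supported in $\Omega_R$ (again by the locality of the coefficient difference); but the direct argument via \eqref{Kformula} is cleaner.
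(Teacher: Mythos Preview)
Your proof is correct and takes essentially the same approach as the paper: both arrive at the representation $(Ku)[\varphi]=\int_\Omega(\eps_0^{-1}-\eps_1^{-1})\nabla(\gp u)\cdot\overline{\nabla\varphi}$ and conclude by noting that the coefficient vanishes outside $[0,1]\times[-R,R]$. The only cosmetic difference is that you cite the formula from the proof of Lemma~\ref{lem12}, whereas the paper re-derives it on the spot from $K=\go^{-1}\gp-I$.
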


\begin{proof}
Let $f\in \Ran K$, i.e.~there exists $g\in\ H^{-1}_{qp}(\Omega)$ such that $f=Kg=(\go^{-1}\gp -I)g$. Then for any $\varphi\in H^1_{qp}(\Omega)$ we have
\bea
f[ \varphi] &=& (\go^{-1}\gp  g)[\varphi] - g[\varphi] \ =\ \int_\Omega\left( \frac{1}{\eps_0} \nabla \gp g \overline{\nabla\varphi} + \gp g\overline{\varphi}\right)- g[\varphi]\\
&=& \int_\Omega\left( \frac{1}{\eps_0}-\frac{1}{\eps_1}\right) \nabla \gp g \overline{\nabla\varphi} + \int_\Omega \left( \frac{1}{\eps_1} \nabla \gp g \overline{\nabla\varphi}+\gp g\overline{\varphi}\right)- g[\varphi]\\
&=& \int_\Omega\left( \frac{1}{\eps_0}-\frac{1}{\eps_1}\right) \nabla \gp g \overline{\nabla\varphi}.
\eea
Observe that the
second integral
in the second line of the calculation is just $\left((\lp + 1)\gp g\right)[\varphi]$.
It is therefore clear that $f$ vanishes on all functions $\varphi$ supported outside the support of $\frac{1}{\eps_0}-\frac{1}{\eps_1}$.
\end{proof}
 We are now in a  position  to establish that the spectrum of $A_\mu$ consists only of eigenvalues.  We first show that $\gp-\go$ enjoys the same property.

 \begin{lemma} \label{lemmaes}
The essential spectra of  $\go$ and $\gp$ coincide.
\end{lemma}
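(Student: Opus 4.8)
The plan is to show that $\gp-\go$ is a compact operator on $H^{-1}_{qp}(\Omega)$; then the coincidence of essential spectra follows from Weyl's theorem, since both $\go$ and $\gp$ are self-adjoint (Proposition \ref{prop2.1} and the analogous statement for $\gp$). By the factorization $\gp - \go = \go K$ already used in Lemma \ref{lem12}, and since $\go:H^{-1}_{qp}(\Omega)\to H^{-1}_{qp}(\Omega)$ is bounded, it suffices to prove that $K=(\go^{-1}\gp - I):H^{-1}_{qp}(\Omega)\to H^{-1}_{qp}(\Omega)$ is compact.

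First I would exploit Lemma \ref{cs}: $\Ran K\subseteq H^{-1}_{cs}$, the subspace of $H^{-1}_{qp}(\Omega)$ consisting of distributions supported in the fixed compact strip $S:=[0,1]\times[-R,R]$. Next, from the identity for $Ku[\varphi]$ in the proof of Lemma \ref{lem12}, namely $Ku[\varphi]=\int_\Omega(\tfrac{1}{\eps_0}-\tfrac1{\eps_1})\nabla\gp u\cdot\overline{\nabla\varphi}$, one sees that $Ku$ depends on $u$ only through $\nabla\gp u$ restricted to $S$. Since $\gp:H^{-1}_{qp}(\Omega)\to H^1_{qp}(\Omega)$ is bounded (Lemma \ref{lemma1}), the map $u\mapsto \gp u|_{S}$ sends bounded sets of $H^{-1}_{qp}(\Omega)$ to bounded sets of $H^1(S)$. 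By the Rellich--Kondrachov theorem, $H^1(S)$ embeds compactly into $L^2(S)$; so a bounded sequence $(u_n)$ in $H^{-1}_{qp}(\Omega)$ has a subsequence with $\gp u_{n_j}\to w$ in $L^2(S)$ — but we need convergence of the gradients, which the above formula requires. To get this, I would instead expand the $H^{-1}$-norm of the difference $K u_{n_j}-Ku_{n_k}$ directly: using $\|Ku\|_{H^{-1}}=\sup_{\|\varphi\|_{H^1}\le 1}|Ku[\varphi]|$ and the bound $|Ku[\varphi]|\le\|\tfrac1{\eps_0}-\tfrac1{\eps_1}\|_\infty\,\|\nabla\gp u\|_{L^2(S)}\,\|\varphi\|_{H^1}$, the whole question reduces to relative compactness of $\{\nabla\gp u_n|_S\}$ in $L^2(S)$. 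That does \emph{not} follow from Rellich directly, so the cleaner route is: the operator $u\mapsto \gp u$ maps $H^{-1}_{qp}(\Omega)$ boundedly into $H^1_{qp}(\Omega)$, and composing with the restriction $H^1_{qp}(\Omega)\to H^1(S)$ we must upgrade to something compact into $H^1(S)$.

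The honest way to obtain compactness is therefore a local elliptic regularity / a diagonal argument on the weak formulation. Given a bounded sequence $(u_n)\subset H^{-1}_{qp}(\Omega)$, set $w_n:=\gp u_n\in H^1_{qp}(\Omega)$, bounded in $H^1$. After passing to a subsequence, $w_n\rightharpoonup w$ weakly in $H^1_{qp}(\Omega)$, hence $w_n\to w$ strongly in $L^2_{loc}$, in particular in $L^2(S')$ for a slightly larger strip $S'\Supset S$. Now I would test the equation $B_1[w_n,\varphi]-B_1[w_m,\varphi]=(u_n-u_m)[\varphi]$ — but since $u_n$ is only bounded, not convergent, in $H^{-1}$, this is still not enough. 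The correct observation is that compactness of $K$ does \emph{not} need convergence of $u_n$; rather, one shows $K$ is a norm-limit of finite-rank operators, or directly that $K$ maps the unit ball to a precompact set. I would argue: $K = M\gp$ where $Mv:=$ ``the functional $\varphi\mapsto\int_S a\,\nabla v\cdot\overline{\nabla\varphi}$'' with $a=\tfrac1{\eps_0}-\tfrac1{\eps_1}\in L^\infty$ supported in $S$; $M$ factors as $H^1_{qp}(\Omega)\xrightarrow{\nabla|_S} L^2(S)^2\xrightarrow{a\cdot} L^2(S)^2\hookrightarrow H^{-1}_{qp}(\Omega)$, where the last map sends $g\in L^2(S)^2$ to the functional $\varphi\mapsto\int_S g\cdot\overline{\nabla\varphi}$. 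Each factor is bounded; to get compactness of the composite $M\gp$ I use that $\gp$ followed by $\nabla|_S$ is \emph{not} compact, but $\gp:H^{-1}_{qp}(\Omega)\to H^1_{qp}(\Omega)$ \emph{is} in fact compact when post-composed with the embedding $H^1_{qp}(\Omega)\hookrightarrow L^2(\Omega)$ only locally — so I would instead cite that $(\lp+1)^{-1}$ has compact resolvent relative to $L^2_{loc}$, giving compactness of $\go-\gp$ in $H^{-1}$ via an interpolation/duality argument. The main obstacle is precisely this point: producing \emph{compactness}, not just boundedness, from the Rellich embedding in the presence of a gradient and the weak ($H^{-1}$) topology; I would resolve it by a Fredholm / Weyl-sequence argument — take a Weyl singular sequence for $\gp$ at $\lambda\in\sigma_{ess}$, localize it away from $S$ using a cutoff (legitimate since outside $S$ the operators $\go,\gp$ have identical coefficients) and estimate the commutator terms, which live in $S$ and are controlled by the compact embedding $H^1(S')\hookrightarrow L^2(S')$. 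This shows every Weyl sequence for one operator yields one for the other, hence $\sigma_{ess}(\go)=\sigma_{ess}(\gp)$.

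In summary, the steps are: (1) reduce to compactness of $\gp-\go=\go K$, hence of $K$; (2) use Lemma \ref{cs} to confine the range to the fixed compact strip $S$; (3) use Lemma \ref{lemma1} to get $\gp$ bounded into $H^1_{qp}(\Omega)$, hence bounded into $H^1(S')$; (4) invoke the compact Rellich embedding $H^1(S')\hookrightarrow L^2(S')$ together with a cutoff/commutator (Weyl-sequence) argument to absorb the gradient, concluding that $\gp-\go$ is compact; (5) apply Weyl's theorem on stability of the essential spectrum under self-adjoint compact perturbations. I expect step (4) — upgrading the Rellich embedding to genuine operator compactness in the $H^{-1}$ topology despite the appearance of $\nabla\gp u$ — to be the technical heart of the argument, handled via the localization trick that exploits $\eps_1=\eps_0$ outside $S$.
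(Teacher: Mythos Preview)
Your main line---proving that $K$ itself is compact on $H^{-1}_{qp}(\Omega)$---cannot work: $K$ is \emph{not} compact. Since $\gp:H^{-1}_{qp}\to H^1_{qp}$ is an isomorphism and $K=(\lo-\lp)\gp$, compactness of $K$ is equivalent to compactness of $(\lo-\lp):H^1_{qp}\to H^{-1}_{qp}$. But take $v_n(x,y)=n^{-1}\eta(y)\sin(ny)$ with $\eta$ a smooth cutoff supported where $a:=\tfrac1{\eps_0}-\tfrac1{\eps_1}>0$: then $(v_n)$ is bounded in $H^1$, $v_n\rightharpoonup 0$, yet $(\lo-\lp)v_n[v_n]=\int a\,|\nabla v_n|^2\to\tfrac12\int a\,\eta^2>0$, so $\|(\lo-\lp)v_n\|_{H^{-1}}\not\to 0$ and there is no convergent subsequence. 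This is exactly the obstruction you keep circling: the perturbation has the same order as the principal part, so Rellich gives $L^2_{\mathrm{loc}}$-convergence of $\gp u_n$ but never of $\nabla\gp u_n$. Your step~(4) names this as the ``technical heart'' without resolving it; the Weyl-sequence alternative would likewise need to show that the commutator term $\varphi\mapsto\int\tfrac1{\eps_0}(\partial_y\chi)\,\partial_y u_n\,\overline\varphi$ tends to $0$ in $H^{-1}$, which requires the compact embedding $L^2(\supp\nabla\chi)\hookrightarrow H^{-1}_{qp}(\Omega)$ together with a non-concentration argument for $\|\chi u_n\|_{H^{-1}}$---all doable, but none of it is in your sketch.

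The paper's route is different and sidesteps the gradient issue entirely. It uses that $K$ maps into $H^{-1}_{cs}$ (Lemma~\ref{cs}) and then proves that $\go$ carries compactly supported distributions boundedly into the \emph{exponentially weighted} space $H^1_{qp}(\Omega,e^{\gamma|y|})$ for small $\gamma>0$: setting $\omega=e^{\gamma|y|}\go f$ one finds $(I+\gamma\,\go S_\gamma)\omega=\go(f\circ e^{\gamma|\cdot|})$ with a bounded first-order $S_\gamma$, and inverts by Neumann series. Compactness of $\go K=\gp-\go$ then comes from the embedding $H^1_{qp}(\Omega,e^{\gamma|y|})\hookrightarrow H^{-1}_{qp}(\Omega)$, which is compact because the exponential weight uniformly kills the tails, reducing matters to Rellich on bounded strips (Appendix). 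So the missing idea is exponential \emph{decay of $\go f$ away from $\supp f$}---a Combes--Thomas-type estimate---rather than any localization near the defect.
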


\begin{proof}
We shall  show that $\gp-\go$  is compact as an operator in $H^{-1}_{qp}(\Omega)$. We have the following mappings:
$$ \gp-\go=\go { K} : {H^{-1}_{qp}(\Omega)}   \overset  { K} \to H^{-1}_{cs}   \overset { \go}   \to   H^1_{qp}  (\Omega, e^{\gamma | y|})  \overset{c}  \hookrightarrow H^{-1}_{qp}(\Omega), $$
where $\gamma>0$ and $H^1_{qp}  (\Omega, e^{\gamma | y|}) $ is the space of functions $u\in H^1_{qp}(\Omega)$ such that $e^{\gamma |y|} u(x,y)\in H^1_{qp}(\Omega)$ with norm $\norm{u}_{H^1_{qp}  (\Omega, e^{\gamma | y|})}:=\norm{e^{\gamma | y|}u}_{H^1}$. The last embedding is compact (see the Appendix). It remains to show that $\go:H^{-1}_{cs} \to H^1_{qp}(\Omega, e^{\gamma | y|}) $, and that it is bounded.
Let $f\in H^{-1}_{cs}$ and $u=\go f$,  i.e.~for all $\varphi \in H^1_{qp}(\Omega)$ we have
$$ f[\varphi] =\int_\Omega\frac1{\eps_0} \nabla u \overline{\nabla \varphi} + u \overline\varphi.$$
Set $\omega= e^{\gamma | y |} u$ and $\varphi= e^{\gamma | y|}\psi$ where $\psi$ is compactly supported.
Then 
\ben \nonumber 
f (e^{\gamma |\cdot|}\psi)&= & \int_\Omega\frac1{\eps_0} \left ( \nabla -\gamma \frac y {|y|}  {\bf{\hat y}}\right )\omega \left( \nabla +\gamma \frac y {|y|}  {\bf{\hat y}}\right ) \overline\psi +\omega \overline\psi 
\ =\ \int_\Omega  \left ( \frac1{\eps_0} \nabla \omega \overline{\nabla \psi} + \omega \overline\psi \right )   + \gamma S_\gamma \omega [\psi]  
\een
where we have set $S_\gamma: H^1_{qp}(\Omega) \to H^{-1}_{qp}(\Omega)$,
\ben\label{Sgamma} S_\gamma \omega [\psi] :=  \int_\Omega\frac{1}{\eps_0} 
 \left(-\frac{y}{|y|} \overline{\frac{\partial\psi}{\partial y}} \omega+\frac{y}{|y|} \frac{\partial \omega}{\partial y} \overline\psi -\gamma \omega\overline\psi\right).\een
Now, $f\circ e^{\gamma |\cdot |}=(\lo +1)\omega + \gamma S_\gamma \omega \in H^{-1}_{cs}$, as $f$ is. Hence, $\go(f \circ e^{\gamma |\cdot|}) =\omega + \go (\gamma S_\gamma \omega)=(I+\go\gamma S_\gamma)\omega$.
For small $|\gamma|$ this can be inverted by the Neumann series, so  $\omega
=(I+\go\gamma S_\gamma)^{-1}\go (f\circ e^{\gamma |\cdot|})$ and
$$\norm{\omega}_{H^1} \leq \norm{(I+\go\gamma S_\gamma)^{-1}}_{H^1\to H^1} \norm{\go(f\circ e^{\gamma |\cdot|})}_{H^1}
$$
Thus  $\norm{e^{\gamma |\cdot|} u}_{H^1} 
\leq C_\gamma \norm{f}_{H^{-1}}.$
\end{proof}


\begin{prop}  \label{AmuC}
$A_\mu:\ck\to\ck$ is compact.
\end{prop}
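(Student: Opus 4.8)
The plan is to exhibit $A_\mu$ as a composition of bounded operators with the operator $\gp-\go$, whose compactness on $H^{-1}_{qp}(\Omega)$ has already been established in Lemma~\ref{lemmaes}. First I would rewrite $(I-\mu\go^{-1})^{-1}$ as a resolvent of $\lo$. Since $\go^{-1}=\lo+1$, one has $I-\mu\go^{-1}=-\mu(\lo-c)$ with $c:=\mu^{-1}-1$. For $\mu\in((\Lambda_1+1)^{-1},(\Lambda_0+1)^{-1})$ we get $c\in(\Lambda_0,\Lambda_1)$, which lies in the spectral gap of $L_0$; as $\sigma(\lo)=\sigma(L_0)$ by Remark~\ref{remark:2}, we have $c\in\rho(\lo)$, and therefore $(I-\mu\go^{-1})^{-1}=-\mu^{-1}(\lo-c)^{-1}$ is a bounded operator on $H^{-1}_{qp}(\Omega)$ (recall $\lo$ is self-adjoint there).

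Next I would use the resolvent identity to pull a factor of $\go$ out of $(\lo-c)^{-1}$. From $(\lo-c)^{-1}-(\lo+1)^{-1}=(c+1)(\lo-c)^{-1}(\lo+1)^{-1}$ and $(\lo+1)^{-1}=\go$ we obtain $(\lo-c)^{-1}=\bigl[I+(c+1)(\lo-c)^{-1}\bigr]\go$, hence, using $\go K=\gp-\go$,
$$(I-\mu\go^{-1})^{-1}K=-\mu^{-1}\bigl[I+(c+1)(\lo-c)^{-1}\bigr]\go K=-\mu^{-1}\bigl[I+(c+1)(\lo-c)^{-1}\bigr](\gp-\go).$$
The operator in square brackets is bounded on $H^{-1}_{qp}(\Omega)$ (again because $c\in\rho(\lo)$), while $\gp-\go$ is compact by Lemma~\ref{lemmaes}; thus $(I-\mu\go^{-1})^{-1}K$ is compact on $H^{-1}_{qp}(\Omega)$. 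Finally $A_\mu$ is this operator precomposed with the bounded inclusion $\ck\hookrightarrow H^{-1}_{qp}(\Omega)$ and postcomposed with the bounded projection $P:H^{-1}_{qp}(\Omega)\to\ck$; since composing a compact operator with bounded ones on either side yields a compact operator, $A_\mu:\ck\to\ck$ is compact.

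There is essentially no serious obstacle once Lemma~\ref{lemmaes} is available; the only point needing a little care is the identification of $(I-\mu\go^{-1})^{-1}$ with a (rescaled) resolvent of $\lo$ and the verification that $c=\mu^{-1}-1\in\rho(\lo)$, which is precisely where the hypothesis that $(\Lambda_0,\Lambda_1)$ is a genuine spectral gap, together with the coincidence of the $L^2$- and $H^{-1}$-spectra, enters. A variant argument, closer in spirit to the proof of Lemma~\ref{lemmaes}, would instead note that $K$ maps $H^{-1}_{qp}(\Omega)$ boundedly into $H^{-1}_{cs}$ (by Lemma~\ref{cs} and Lemma~\ref{lem12}(i)), repeat the weighted-estimate computation of Lemma~\ref{lemmaes} with $\go$ replaced by $(I-\mu\go^{-1})^{-1}$ to show $(I-\mu\go^{-1})^{-1}:H^{-1}_{cs}\to H^1_{qp}(\Omega,e^{\gamma|y|})$ is bounded for small $\gamma>0$, and then conclude via the compact embedding $H^1_{qp}(\Omega,e^{\gamma|y|})\hookrightarrow H^{-1}_{qp}(\Omega)$ from the Appendix and the boundedness of $P$.
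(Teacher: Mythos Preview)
Your resolvent-identity rewriting of $(I-\mu\go^{-1})^{-1}K$ as a bounded operator times $\gp-\go$ is clean, and reducing to the compactness established in Lemma~\ref{lemmaes} is the right idea. But the last step has a genuine gap: the inclusion $(\ck,\|\cdot\|_\ck)\hookrightarrow H^{-1}_{qp}(\Omega)$ is \emph{not} bounded. From $\|u\|_\ck^2=\la Ku,u\ra_{H^{-1}}\le\|K\|\,\|u\|_{H^{-1}}^2$ one sees that the $\ck$-norm is \emph{weaker} than the $H^{-1}$-norm; the reverse estimate $\|u\|_{H^{-1}}\le C\|u\|_\ck$ would force $K|_{\ck}$ to be bounded below and hence $(\ck,\|\cdot\|_\ck)$ to be complete, whereas the paper explicitly notes (beginning of Section~\ref{mainProof}) that $\ck$ is in general only a pre-Hilbert space. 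Consequently a $\|\cdot\|_\ck$-bounded sequence need not be $\|\cdot\|_{H^{-1}}$-bounded, and compactness of $(I-\mu\go^{-1})^{-1}K$ as a map $H^{-1}_{qp}(\Omega)\to H^{-1}_{qp}(\Omega)$ does not by itself yield compactness of $A_\mu$ on $(\ck,\|\cdot\|_\ck)$.

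The repair is immediate and is precisely what the paper does: factor through $K$ rather than through the inclusion. Your own identity already gives $(I-\mu\go^{-1})^{-1}K=-\mu^{-1}\bigl[I+(c+1)(\lo-c)^{-1}\bigr]\go K$, so
\[
A_\mu:\ \ck\xrightarrow{\;K\;}H^{-1}_{cs}\xrightarrow{\;\go\;}H^1_{qp}(\Omega,e^{\gamma|y|})\overset{c}{\hookrightarrow}H^{-1}_{qp}(\Omega)\xrightarrow{\;-\mu^{-1}[I+(c+1)(\lo-c)^{-1}]\;}H^{-1}_{qp}(\Omega)\xrightarrow{\;P\;}\ck.
\]
Here $K:(\ck,\|\cdot\|_\ck)\to H^{-1}_{qp}(\Omega)$ is bounded by Lemma~\ref{lem12}(iii), with range in $H^{-1}_{cs}$ by Lemma~\ref{cs}; the next two arrows are those from the proof of Lemma~\ref{lemmaes}, the embedding being compact; your bracket is bounded on $H^{-1}_{qp}(\Omega)$; and $P:H^{-1}_{qp}(\Omega)\to(\ck,\|\cdot\|_\ck)$ is bounded since $\|Pu\|_\ck^2=\la Ku,u\ra_{H^{-1}}\le C\|u\|_{H^{-1}}^2$. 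With this one change your argument matches the paper's, your resolvent factor $-\mu^{-1}[I+(c+1)(\lo-c)^{-1}]$ playing the role of the paper's $\go^{-1}(I-\mu\go^{-1})^{-1}$.
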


\begin{proof}
We rewrite the operator as  $A_\mu= P \go^{-1}(I-\mu \go^{-1})^{-1} \go {K}$. By Lemma \ref{lem12}, the operator $K:{\mathcal K} \to H^{-1}_{qp}(\Omega)$ is bounded {, and it maps into $H^{-1}_{cs}$ by Lemma \ref{cs}.} Since, again using Lemma \ref{lem12}, we have
$$ \norm{Pu}_\ck^2 = \llangle KPu,Pu\rrangle_{H^{-1}}=\llangle Ku,u\rrangle_{H^{-1}}\leq \norm{Ku}_{H^{-1}} \norm{u}_{H^{-1}}\leq C \norm{u}_{H^{-1}}^2,$$
 the operator $P:H^{-1}_{qp}(\Omega)\to {\mathcal K}$ is bounded. As $\mu^{-1}\in\rho(\lo +1)$, the operator $I-\mu(\lo+1):H^1_{qp}(\Omega)\to H^{-1}_{qp}(\Omega)$ is onto, and bounded, and hence also $(  I-\mu \go^{-1})^{-1}:H^{-1}_{qp}(\Omega)\to H^1_{qp}(\Omega)$ is continuous and,  {as in the proof of Lemma \ref{lemmaes},}
we have the following mapping properties
$$ {\mathcal K} { \overset K \to H^{-1}_{cs}  } \underset { compact}{  \overset {\go}   \to }  H^{-1}_{qp}(\Omega)  
 {  \overset {(  I-\mu \go^{-1})^{-1}}   \longrightarrow }   H^1_{qp}(\Omega)
 \overset { \go^{-1}} \to H^{-1}_{qp}(\Omega)  \overset { P} \to {\mathcal K}.
$$
Thus, $A_\mu:\ck\to\ck$ is compact.
 \end{proof}

%
\section{Existence of spectrum for weak perturbations}\label{mainProof}
%

We next estimate the eigenvalues of $A_\mu$ using variational methods.
%
Lemma \ref{lemmavar} enables us to study our spectral problem by  applying variational methods to the equation \eqref{reform3}.
As a mathematical subtlety, note that
$\ck$ is in general not complete with $\la\cdot, \cdot \ra_\ck $ as an inner product.
However, this does not affect our arguments, since the spectral theory of symmetric compact
operators is applicable on Pre-Hilbert spaces (see \cite{Heuser}).

It follows from our analysis below that (at least) for some $\mu$ in the spectral gap $\left((\Lambda_1+1)^{-1},(\Lambda_0+1)^{-1}\right)$, the operator
$A_\mu$ has a negative eigenvalue.  
Our strategy consists in following ${\kappa}(\mu)$, the most negative eigenvalue of the
operator $A_\mu$, as $\mu$ varies.
${\kappa}(\mu)$ can be characterized by
\beq\label{defkappa}
{\kappa}(\mu) = \min_{u\neq 0} \frac{\la u, A_\mu u\ra_\ck }{\la u, u \ra_\ck }.
\eeq
We prove below that ${\kappa}(\mu)$ is monotonically increasing in $\mu$
and continuous, in the range $(\Lambda_1+1)^{-1} < \mu < (\Lambda_0+1)^{-1}$, 
and that ${\kappa}(\mu)$ goes to
$-\infty$ as $\mu$ approaches $(\Lambda_1+1)^{-1}$ from the right. At the same time, we will
find a $\tilde \mu$ to the right of $(\Lambda_1+1)^{-1}$ for which $-1 < {\kappa}(\tilde \mu)$,
provided \eqref{cond} is satisfied. Hence ${\kappa}(\mu)=-1$ holds necessarily for some
$\mu$, i.e. $A_\mu$ has $-1$ as an eigenvalue and \eqref{reform3} has a non-trivial solution.

\begin{lemma}\label{lem:cont} For $\mu$ in the spectral gap $\left((\Lambda_1+1)^{-1},(\Lambda_0+1)^{-1}\right)$ we have that
$\mu\mapsto \kappa(\mu)$ is continuous and increasing.
\end{lemma}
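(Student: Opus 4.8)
The plan is to establish monotonicity and continuity of $\kappa(\mu)$ separately, in both cases exploiting the variational characterization \eqref{defkappa} together with the fact that the only $\mu$-dependence in $A_\mu$ sits in the resolvent factor $(I-\mu\go^{-1})^{-1}$. First I would rewrite the numerator in \eqref{defkappa} in a more symmetric form: using the definition of $\langle\cdot,\cdot\rangle_\ck$, Lemma \ref{lemma2}, and Lemma \ref{lem6}, for $u\in\ck$ we have $\langle u, A_\mu u\rangle_\ck = \langle Ku,(I-\mu\go^{-1})^{-1}Ku\rangle_{H^{-1}}$, so with $w:=Ku\in H^{-1}_{qp}(\Omega)$ the Rayleigh quotient becomes $\langle w,(I-\mu\go^{-1})^{-1}w\rangle_{H^{-1}}/\langle w, \go w\rangle_{H^{-1}}$ (the denominator being $\langle u,u\rangle_\ck = \langle Ku,u\rangle_{H^{-1}} = \langle w,\go w\rangle_{H^{-1}}$ by Remark \ref{remark:2}(3) in the $L^2$-dense sense, or directly $\langle Ku,u\rangle_{H^{-1}}=u[(\gp-\go)u]$ as in Lemma \ref{lemma2}). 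The key analytic input is then the scalar function $\mu\mapsto(1-\mu t)^{-1}$ for $t$ in the spectrum of $\go^{-1}$; since $\sigma(\go^{-1})=\sigma(\lo+1)$ and $\mu$ lies in the resolvent gap, $1-\mu t$ has a fixed sign on the relevant part of the spectrum and $(1-\mu t)^{-1}$ is, on each spectral interval, a monotone function of $\mu$.

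For \textbf{monotonicity}, I would use the spectral theorem for the self-adjoint operator $\go^{-1}$ (equivalently $\lo+1$) in $H^{-1}_{qp}(\Omega)$: writing $(I-\mu\go^{-1})^{-1}=\int (1-\mu t)^{-1}\,dE(t)$, the quadratic form $\langle w,(I-\mu\go^{-1})^{-1}w\rangle_{H^{-1}}=\int(1-\mu t)^{-1}\,d\langle E(t)w,w\rangle$ is, for each fixed $w$, monotone in $\mu$ because $\partial_\mu(1-\mu t)^{-1}=t(1-\mu t)^{-2}$ has the sign of $t$, and $\sigma(\go^{-1})\subset[\,\ldots\,)$ lies where $t>0$ (indeed $t=\lambda+1>0$ for $\lambda\in\sigma(L_0)\subset[0,\infty)$), while $1-\mu t$ does not vanish on $\sigma(\go^{-1})$ since $1/\mu\in\rho(\lo+1)$. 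Hence for $\mu_1<\mu_2$ in the gap, $\langle u,A_{\mu_1}u\rangle_\ck \le \langle u,A_{\mu_2}u\rangle_\ck$ for every $u$ with $Ku\neq 0$ (equality of denominators), and taking the minimum gives $\kappa(\mu_1)\le\kappa(\mu_2)$. I should be slightly careful that the minimum in \eqref{defkappa} is only over $u$ with $\langle u,u\rangle_\ck\neq 0$, i.e. $Ku\neq 0$, which is exactly the domain on which the comparison was made.

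For \textbf{continuity}, the cleanest route is to show that $\mu\mapsto(I-\mu\go^{-1})^{-1}$ is continuous (indeed real-analytic) in operator norm on $H^{-1}_{qp}(\Omega)$ on the gap — this is standard resolvent-perturbation: for $\mu,\mu'$ in the gap, $(I-\mu\go^{-1})^{-1}-(I-\mu'\go^{-1})^{-1}=(\mu-\mu')(I-\mu\go^{-1})^{-1}\go^{-1}(I-\mu'\go^{-1})^{-1}$, and all three factors on the right are bounded with locally uniform norm bounds since $1/\mu$ stays in a compact subset of $\rho(\lo+1)$. Consequently $A_\mu=P(I-\mu\go^{-1})^{-1}K$ depends norm-continuously on $\mu$ as an operator on $\ck$ (with the $H^{-1}$-norm; here I use boundedness of $P:H^{-1}_{qp}(\Omega)\to\ck$ and of $K$, both from Lemma \ref{lem12} / Proposition \ref{AmuC}). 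Since all $A_\mu$ are self-adjoint and compact on the pre-Hilbert space $\ck$ (Lemma \ref{lem7}, Proposition \ref{AmuC}), and $\kappa(\mu)$ is the bottom of the spectrum, continuity of $\mu\mapsto A_\mu$ in the $\langle\cdot,\cdot\rangle_\ck$-operator norm — which follows from the $H^{-1}$-norm continuity together with the comparability $\|w\|_\ck^2=\langle w,\go w\rangle_{H^{-1}}$, using $\|Ku\|_{H^{-1}}^2\le\|K\|\,\|u\|_\ck^2$ from Lemma \ref{lem12}(iii) — gives $|\kappa(\mu)-\kappa(\mu')|\le\|A_\mu-A_{\mu'}\|_\ck\to 0$ by the standard min-max/Weyl perturbation bound for the lowest eigenvalue.

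The \textbf{main obstacle} I anticipate is bookkeeping the two different inner products: the variational quotient lives on $(\ck,\langle\cdot,\cdot\rangle_\ck)$, which is only a pre-Hilbert space and whose norm is weaker than $\|\cdot\|_{H^{-1}}$, so one must make sure the operator-norm continuity of $A_\mu$ transfers correctly to the $\ck$-geometry. The resolution is Lemma \ref{lem12}(iii): $\langle A_\mu u - A_{\mu'}u, A_\mu u - A_{\mu'}u\rangle_\ck = \|K(A_\mu-A_{\mu'})u\|_{H^{-1}}^2 \le \|K\|\,\|(A_\mu-A_{\mu'})u\|_{H^{-1}}^2$ wait — more directly, $\langle (A_\mu-A_{\mu'})u,(A_\mu-A_{\mu'})u\rangle_\ck\le\|K\|\,\|(A_\mu-A_{\mu'})u\|_{H^{-1}}^2$ is not quite it either since $\|\cdot\|_\ck$ involves $K$ differently; the honest statement is $\|v\|_\ck^2=\langle Kv,v\rangle_{H^{-1}}\le\|K\|\,\|v\|_{H^{-1}}^2$, so $\|(A_\mu-A_{\mu'})u\|_\ck\le\|K\|^{1/2}\|(A_\mu-A_{\mu'})u\|_{H^{-1}}\le\|K\|^{1/2}\|(I-\mu\go^{-1})^{-1}-(I-\mu'\go^{-1})^{-1}\|\,\|Ku\|_{H^{-1}}\le C|\mu-\mu'|\,\|u\|_\ck$, where at the last step $\|Ku\|_{H^{-1}}\le\|K\|^{1/2}\|u\|_\ck$ by Lemma \ref{lem12}(iii). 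This makes the min-max perturbation argument go through verbatim on the pre-Hilbert space, as permitted by \cite{Heuser}.
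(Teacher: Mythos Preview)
Your proof is correct and follows the same overall strategy as the paper: continuity from norm-continuity of $\mu\mapsto A_\mu$ combined with the variational characterization \eqref{defkappa}, and monotonicity from positivity of the difference of resolvent quadratic forms. The one substantive difference lies in the monotonicity step. The paper approximates $Ku$ by $L^2$-functions $v_n$ and then uses the explicit Bloch-wave expansion \eqref{Blochrep} to write $\langle(A_\mu-A_{\tilde\mu})u,u\rangle_\ck$ as a sum of integrals over the Brillouin zone, reading off strict positivity from the scalar identity
\[
\frac{1}{t}\left[\frac{1}{1-\mu t}-\frac{1}{1-\tilde\mu t}\right]=\frac{\mu-\tilde\mu}{(1-\mu t)(1-\tilde\mu t)}>0,\qquad t=\lambda_s(k)+1.
\]
You instead invoke the abstract spectral theorem for the self-adjoint operator $\go^{-1}=\lo+1$ directly in $H^{-1}_{qp}(\Omega)$, arriving at the same scalar computation without the $L^2$-approximation or the Bloch machinery. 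Your route is shorter here, while the paper's concrete representation has the advantage of consistency with Lemmas~\ref{lem8} and~\ref{lem14}, where the Bloch decomposition is genuinely needed to isolate individual band contributions. For continuity you in fact supply more detail than the paper (which just asserts that $\mu\mapsto A_\mu$ is norm-continuous): the resolvent identity together with the two-sided use of Lemma~\ref{lem12}(iii) to pass between the $H^{-1}$- and $\ck$-norms is exactly the right bookkeeping. One small slip that does no harm: the identity $\langle u,u\rangle_\ck=\langle w,\go w\rangle_{H^{-1}}$ with $w=Ku$ is not correct (the denominator is $\langle Ku,u\rangle_{H^{-1}}$, not a quadratic expression in $Ku$ alone), but since the denominator is $\mu$-independent this plays no role in the monotonicity argument.
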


\begin{proof}
As $\mu\mapsto A_\mu$ is norm-continuous, we have that  for $\mu\in\left( (\Lambda_1+1)^{-1},(\Lambda_0+1)^{-1}\right)$  and   $\widetilde{\eps}>0$, there exists $\delta>0$ such that $|\mu-\widetilde{\mu}|<\delta$ implies,
for every $u\in\ck$, 
$$\left|\llangle A_\mu u, u \rrangle_\ck- \llangle A_{\widetilde{\mu}} u, u \rrangle_\ck\right| \leq \widetilde{\eps} \norm{u}^2_\ck.$$ 
Thus
$$\frac{\llangle A_{\widetilde{\mu}} u, u \rrangle_\ck}{\norm{u}^2_\ck}  \leq  \frac{\llangle A_{\mu} u, u \rrangle_\ck}{\norm{u}^2_\ck}+\widetilde{\eps},$$
and therefore $\kappa(\widetilde \mu) \leq \kappa(\mu)+\widetilde{\eps}$  by \eqref{defkappa}.
Similarly, we obtain the reverse inequality. Together these imply continuity of $\mu\mapsto \kappa(\mu)$.

We next consider monotonicity. Let $u\in\ck$, $$(\Lambda_1+1)^{-1}<\widetilde \mu<\mu< (\Lambda_0+1)^{-1}$$ and $N:= \la (A_\mu-A_{\widetilde \mu}) u, u \ra_\ck.$ Then using Lemma \ref{lemma2},
	\bea
	N&=&  \la P\left[(I-\mu \go^{-1})^{-1}-(I-\widetilde\mu \go^{-1})^{-1}\right] K u , u \ra_\ck \\
	&=& \la \left[(I-\mu \go^{-1})^{-1}-(I-\widetilde\mu \go^{-1})^{-1}\right] K u , K u \ra_{H^{-1}}. 
	\eea
	Let $(v_n)$ be a sequence in $L^2(\Omega)$ such that $v_n\to Ku$ in $H^{-1}_{qp}(\Omega)$. Then 
	\bea
	N &=& \lim_{n\to\infty}  \la \left[(I-\mu \go^{-1})^{-1}-(I-\widetilde\mu \go^{-1})^{-1}\right] v_n , v_n \ra_{H^{-1}}\\
	&=& \lim_{n\to\infty}  \la G_0 \left[(I-\mu \go^{-1})^{-1}-(I-\widetilde\mu \go^{-1})^{-1}\right] v_n , v_n \ra_{L^2}.
	\eea
	As $v_n\in L^2$ we can use the representation of $G_0$ and the resolvents in terms of the Bloch functions, so from \eqref{repf} and \eqref{Blochrep}, we have
	\bea
	N&=&  \lim_{n\to\infty} \int_{-\pi}^\pi \sum_s\frac1{\lambda_s(k)+1}   \left[\frac1{1-\mu(\lambda_s(k)+1)}-\frac{1}{1-\widetilde\mu(\lambda_s({k})+1)}\right]     \left|  \llangle Uv_n(k),\psi_s(k)\rrangle_{L^2}\right|^2 dk.
	\eea
	Since
	\bea &
	\frac{1}{\lambda_s({k})+1}\left[ \frac{1}{1-\mu(\lambda_s({k})+1)}-\frac{1}{1-\widetilde\mu(\lambda_s({ k})+1)}\right] = 
	 \frac{\mu-\widetilde\mu}{(1-\mu(\lambda_s({ k})+1))(1-\widetilde\mu(\lambda_s({ k})+1))}\ >\ 0,
	\eea
	we have $N>0$. 
   Thus, by \eqref{defkappa}, the map  $\mu\mapsto \kappa(\mu)$ is monotonically increasing.
\end{proof}

We  next seek both lower and upper bounds on (\ref{defkappa}).
\subsection{Lower bound.}
\begin{lemma} \label{lem8} For all $u\in\ck$  and $\mu\in\left((\Lambda_1+1)^{-1} , (\Lambda_0+1)^{-1}\right)$ we have
$$\llangle A_\mu u,u\rrangle_{\ck}\geq \dfrac1{1-\mu (\Lambda_1+1)} \norm{Ku}^2_{H^{-1}}.$$
\end{lemma}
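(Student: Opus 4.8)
The plan is to reduce the quantity $\llangle A_\mu u, u\rrangle_\ck$ to an integral over the Bloch variable $k_y$, exactly as in the monotonicity part of Lemma~\ref{lem:cont}, and then bound the resulting integrand from below by the claimed constant. First I would recall from the proof of Lemma~\ref{lem7} that $\llangle A_\mu u, u\rrangle_\ck = \llangle (I-\mu\go^{-1})^{-1}Ku, Ku\rrangle_{H^{-1}}$, using that $P$ acts trivially on elements of $\ck$ and that $K$ and $(I-\mu\go^{-1})^{-1}$ are symmetric. Then I would pick a sequence $(v_n)\subset L^2(\Omega)$ with $v_n\to Ku$ in $H^{-1}_{qp}(\Omega)$, so that
$$
\llangle A_\mu u, u\rrangle_\ck = \lim_{n\to\infty}\llangle (I-\mu\go^{-1})^{-1}v_n, v_n\rrangle_{H^{-1}} = \lim_{n\to\infty}\llangle G_0(I-\mu\go^{-1})^{-1}v_n, v_n\rrangle_{L^2},
$$
invoking Remark~\ref{remark:2}(3) to pass between the $H^{-1}$- and $L^2$-pairings.

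Next, since $v_n\in L^2(\Omega)$, I would apply the Bloch representations \eqref{repf} and \eqref{Blochrep} to write this limit as
$$
\llangle A_\mu u, u\rrangle_\ck = \lim_{n\to\infty}\int_{-\pi}^\pi \sum_s \frac{1}{\lambda_s(k_y)+1}\cdot\frac{1}{1-\mu(\lambda_s(k_y)+1)}\left|\llangle Uv_n(\cdot,k_y),\psi_s(\cdot,k_y)\rrangle_{L^2}\right|^2\, dk_y,
$$
where each factor $\tfrac{1}{1-\mu(\lambda_s+1)}$ is the Bloch symbol of $(I-\mu\go^{-1})^{-1}$. The key elementary estimate is then that for every $s$ and every $k_y$,
$$
\frac{1}{\lambda_s(k_y)+1}\cdot\frac{1}{1-\mu(\lambda_s(k_y)+1)} \;\geq\; \frac{1}{1-\mu(\Lambda_1+1)}\cdot\frac{1}{\lambda_s(k_y)+1}.
$$
To see this, note $\mu<(\Lambda_0+1)^{-1}\le(\lambda_s+1)^{-1}$ forces $1-\mu(\lambda_s+1)>0$, and since $\mu>(\Lambda_1+1)^{-1}$ we have $1-\mu(\Lambda_1+1)<0$; comparing, one checks $0<1-\mu(\lambda_s+1)\le 1-\mu(\Lambda_1+1)$ is \emph{not} what is needed — rather one uses that $\lambda_s+1\ge\Lambda_0+1>0$ and the map $t\mapsto \tfrac1{t(1-\mu t)}$ on $t\in(0,\mu^{-1})$; alternatively just write $\tfrac1{1-\mu(\lambda_s+1)} = 1 + \tfrac{\mu(\lambda_s+1)}{1-\mu(\lambda_s+1)}$ and bound the correction. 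In any case, multiplying the integrand through and recognising $\int_{-\pi}^\pi\sum_s\tfrac{1}{\lambda_s(k_y)+1}|\llangle Uv_n,\psi_s\rrangle|^2\,dk_y = \llangle G_0 v_n, v_n\rrangle_{L^2} \to \llangle \go Ku, Ku\rrangle_{H^{-1}} = \norm{Ku}^2_{H^{-1}}$ (using Remark~\ref{remark:2}(3) again and the definition of the $H^{-1}$-norm via $\go$) yields the claim.

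The main obstacle I anticipate is purely bookkeeping: making sure the termwise lower bound on the integrand is applied \emph{before} passing to the limit $n\to\infty$, since the individual coefficients $\llangle Uv_n,\psi_s\rrangle$ need not converge and only the full sum converges in the appropriate sense — so one must bound the positive integrand of the $n$-th term by the constant times the positive integrand whose integral converges to $\norm{Ku}^2_{H^{-1}}$, then take $\liminf$. One also needs to double-check that $1-\mu(\lambda_s(k_y)+1)>0$ throughout the range, which follows from $\lambda_s(k_y)\geq\Lambda_0$ for all $s,k_y$ (the bottom of the spectrum) together with $\mu<(\Lambda_0+1)^{-1}$; this positivity is what makes every term in the sum nonnegative and legitimises the termwise comparison.
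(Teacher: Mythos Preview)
Your overall strategy --- pass to the Bloch representation via an $L^2$-approximation of $Ku$, bound the integrand termwise, and recognise the resulting sum as $\norm{Ku}_{H^{-1}}^2$ --- is exactly the paper's approach, and the termwise inequality you write down is in fact correct. However, your justification of that inequality is wrong, and the error is not cosmetic.

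You assert that $\mu<(\Lambda_0+1)^{-1}\le(\lambda_s+1)^{-1}$, hence $1-\mu(\lambda_s(k_y)+1)>0$ for every $s$, and later that ``$\lambda_s(k_y)\ge\Lambda_0$ for all $s,k_y$ (the bottom of the spectrum)''. Neither statement holds: $\Lambda_0$ is the \emph{lower edge of the gap}, not the infimum of the spectrum. The band functions with $s<M$ take values in $[0,\Lambda_0]$, while those with $s\ge M$ take values in $[\Lambda_1,\infty)$. Consequently $1-\mu(\lambda_s+1)>0$ for $s<M$ but $1-\mu(\lambda_s+1)<0$ for $s\ge M$, and the terms in your sum are \emph{not} all nonnegative --- the very positivity you invoke to ``legitimise the termwise comparison'' is absent for half the bands.

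The fix is to split according to the sign. For $s<M$ the left side of your inequality is positive while the right side (carrying the factor $\tfrac{1}{1-\mu(\Lambda_1+1)}<0$) is negative, so the bound is trivial. For $s\ge M$ one has $1-\mu(\lambda_s+1)\le 1-\mu(\Lambda_1+1)<0$, whence $\tfrac{1}{1-\mu(\lambda_s+1)}\ge\tfrac{1}{1-\mu(\Lambda_1+1)}$, and multiplying by the positive factor $\tfrac{1}{\lambda_s+1}$ gives the claim. This case distinction is exactly how the paper organises the argument: it first discards the nonnegative $s<M$ contribution, bounds the $s\ge M$ sum by $\tfrac{1}{1-\mu(\Lambda_1+1)}\sum_{s\ge M}\int\tfrac{1}{\lambda_s+1}|\llangle Uv_n,\psi_s\rrangle|^2\,dk$, and then, using that the constant out front is negative, adds the missing bands back in to recover the full $\norm{v_n}_{H^{-1}}^2$. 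Once you make this correction, your proof coincides with the paper's.
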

\begin{proof}
Let $(v_n)\in (L^2(\Omega))^\N$ such that $v_n \to Ku \in H^{-1}_{qp}(\Omega)$.
Then, as in the proof of Lemma \ref{lem7},
$$\llangle A_\mu u,u \rrangle_{\ck}  = \llangle (I-\mu \go^{-1})^{-1} Ku,Ku \rrangle_{H^{-1}}.$$
 {Using the expansions in terms of Bloch functions \eqref{repf} and \eqref{Blochrep}, we have
\bea 
   \llangle A_\mu u,u \rrangle_{\ck}  &=& \llangle (I-\mu \go^{-1})^{-1} Ku,Ku \rrangle_{H^{-1}}\ = \ \lim_{n\to\infty} \llangle (I-\mu \go^{-1})^{-1} v_n, v_n\rrangle_{H^{-1}}\\
    &=&\lim_{n\to\infty} \llangle  \phi^{-1}  (I-\mu \go^{-1})^{-1} v_n, \phi^{-1} v_n\rrangle_{H^{1}}\ =\ \lim_{n\to\infty} \llangle  G_0  (I-\mu \go^{-1})^{-1} v_n,   v_n\rrangle_{L^2}\\
&=&\lim_{n\to\infty}  \sum_s \int_{-\pi}^\pi \dfrac1{1-\mu (\lambda_s(k)+1)}  \cdot \dfrac1{\lambda_s(k)+1} \left| \llangle U v_n,\psi_s \rrangle \right|^2 dk.
\eea
Next, let $M$ be the index introduced in \eqref{M}. Then, as all terms in the series with $s<M$ are non-negative, we have
\bea
\llangle A_\mu u,u \rrangle_{\ck} &\geq&  \lim_{n\to\infty}\sum_{s\geq M}\int_{-\pi}^\pi   \dfrac1{1-\mu (\lambda_s(k)+1)}  \cdot \dfrac1{\lambda_s(k)+1} \left|\llangle  U v_n,\psi_s \rrangle \right|^2 dk\\
&\geq& \dfrac1{1-\mu (\Lambda_1+1)} \lim_{n\to\infty} \sum_{s\geq M} \int_{-\pi}^\pi \dfrac1{\lambda_s(k)+1} \left| \llangle  Uv_n, \psi_s\rrangle \right|^2 dk.
 \eea
Since $\dfrac1{1-\mu (\Lambda_1+1)}<0$, we can now add the missing bands back in to get
\bea
\llangle A_\mu u,u \rrangle_{\ck} &\geq&    \dfrac1{1-\mu (\Lambda_1+1)} \lim_{n\to\infty} \sum_{s}   \int_{-\pi}^\pi \dfrac1{\lambda_s(k)+1} \left| \llangle Uv_n, \psi_s\rrangle \right|^2 dk \\ 
&=& \lim_{n\to\infty}   \dfrac1{1-\mu (\Lambda_1+1)} \llangle \phi^{-1} v_n, v_n \rrangle_{L^2} \\ 
& =&  \lim_{n\to\infty}   \dfrac1{1-\mu (\Lambda_1+1)} \norm {v_n}^2_{H^{-1}}\ =\ \dfrac1{1-\mu (\Lambda_1+1)} \norm{Ku}^2_{H^{-1}},
\eea
as required.}
\end{proof}

From this, Lemma \ref{lem12} (ii) and (iii) easily lead to 
\beq\label{Rq}
 \dfrac{  \llangle  A_\mu u,u  \rrangle_\ck}{\norm{u}^2_{\ck}}\geq \frac{\norm{\gp}_{H^{-1}\to H^1} \norm{\frac{1}{\eps_0}-\frac{1}{\eps_1}}_\infty}{1-\mu (\Lambda_1+1)}.
\eeq

   \begin{cor} \label{cor1} 
 {If $\norm{\frac{1}{\eps_0}-\frac{1}{\eps_1}}_\infty$ satisfies the estimate \eqref{cond1}, then there exists  $\mu\in\left(\frac{1}{\Lambda_1+1},\frac{1}{\Lambda_0+1}\right)$ such that
 $\dfrac{  \llangle  A_\mu u,u  \rrangle_\ck}{\norm{u}^2_{\ck}}\geq c>-1$ for  all $u\in \ck$.}
 \end{cor}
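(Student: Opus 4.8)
The plan is to derive Corollary~\ref{cor1} directly from the lower bound \eqref{Rq} by choosing $\mu$ suitably close to the left endpoint $\frac{1}{\Lambda_1+1}$ of the spectral gap. First I would observe that on the gap $\mu\in\left(\frac{1}{\Lambda_1+1},\frac{1}{\Lambda_0+1}\right)$ the denominator $1-\mu(\Lambda_1+1)$ is negative, and it depends continuously and monotonically on $\mu$: as $\mu\downarrow \frac{1}{\Lambda_1+1}$ it tends to $0^-$, while at the right endpoint $\mu=\frac{1}{\Lambda_0+1}$ it equals $1-\frac{\Lambda_1+1}{\Lambda_0+1}=\frac{\Lambda_0-\Lambda_1}{\Lambda_0+1}$, whose absolute value is $\frac{\Lambda_1-\Lambda_0}{\Lambda_0+1}$.

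Next, from \eqref{Rq}, since the numerator $\norm{\gp}_{H^{-1}\to H^1}\,\|\tfrac1{\eps_0}-\tfrac1{\eps_1}\|_\infty=:c_1$ is a fixed positive constant and the denominator is negative, the quotient $\frac{c_1}{1-\mu(\Lambda_1+1)}$ is negative, and its absolute value $\frac{c_1}{|1-\mu(\Lambda_1+1)|}$ \emph{increases} as $\mu$ moves away from $\frac{1}{\Lambda_1+1}$ toward $\frac{1}{\Lambda_0+1}$. Hence the most favourable (largest, i.e.\ least negative) value of the right-hand side of \eqref{Rq} over the gap is attained by taking $\mu$ just slightly larger than $\frac{1}{\Lambda_1+1}$; as $\mu\downarrow\frac{1}{\Lambda_1+1}$ the bound tends to $-\infty$, but that is the wrong direction, so instead I want $\mu$ small and observe that we need $\frac{c_1}{|1-\mu(\Lambda_1+1)|}$ to stay $>-1$ — wait, more carefully: we want the quotient itself (a negative number) to be $>-1$, i.e.\ $\frac{c_1}{|1-\mu(\Lambda_1+1)|}<1$, i.e.\ $|1-\mu(\Lambda_1+1)|>c_1$. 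By the hypothesis \eqref{cond1} we have $c_1=\norm{\gp}_{H^{-1}\to H^1}\,\|\tfrac1{\eps_0}-\tfrac1{\eps_1}\|_\infty<\frac{\Lambda_1-\Lambda_0}{\Lambda_0+1}$, which is precisely the value $|1-\mu(\Lambda_1+1)|$ takes at the right endpoint $\mu=\frac{1}{\Lambda_0+1}$. Therefore, choosing $\mu$ close enough to $\frac{1}{\Lambda_0+1}$ (by continuity of $\mu\mapsto |1-\mu(\Lambda_1+1)|$), we can guarantee $|1-\mu(\Lambda_1+1)|>c_1$, and then \eqref{Rq} gives $\frac{\llangle A_\mu u,u\rrangle_\ck}{\norm{u}^2_\ck}\ge \frac{c_1}{1-\mu(\Lambda_1+1)}=:c>-1$ for all $u\in\ck$, with $c$ independent of $u$.

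To finish, I would also invoke Lemma~\ref{lem12}(iv) (valid because \eqref{cond1} forces $\delta=\|\tfrac1{\eps_0}-\tfrac1{\eps_1}\|_\infty$ below $1/\norm{\go}_{H^{-1}\to H^1}$, using $\norm{\gp}\le\norm{\eps_1/\eps_0}_\infty$ and the relation of the constants) only if one wishes to make the constant $c_1$ explicit purely in terms of $\go$ and the perturbation; for the statement as given this is not strictly needed, since $\norm{\gp}_{H^{-1}\to H^1}$ already appears in \eqref{cond1}. The main (and essentially only) obstacle is bookkeeping: being careful about signs, since $1-\mu(\Lambda_1+1)<0$ throughout the gap, so that an inequality like $a/b \ge c$ with $b<0$ flips appropriately, and checking that the threshold value of $|1-\mu(\Lambda_1+1)|$ at the right endpoint is exactly $\frac{\Lambda_1-\Lambda_0}{\Lambda_0+1}$, matching the right-hand side of \eqref{cond1}. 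No compactness or spectral input is required here — this is a direct consequence of the quantitative lower bound already established.
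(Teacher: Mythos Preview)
Your argument is correct and follows the same approach as the paper: both let $\mu$ approach the \emph{right} endpoint $(\Lambda_0+1)^{-1}$ of the gap, observe that the right-hand side of \eqref{Rq} then tends to $-c_1(\Lambda_0+1)/(\Lambda_1-\Lambda_0)$, and note that this limit exceeds $-1$ precisely by condition \eqref{cond1}; continuity then gives a $\mu$ strictly inside the gap with the desired bound. One presentational issue: your opening sentence announces the plan as choosing $\mu$ ``close to the left endpoint $\frac{1}{\Lambda_1+1}$'', which is the wrong direction (as you yourself discover mid-argument, since the bound blows up to $-\infty$ there); in a clean write-up you should state the correct endpoint from the outset and drop the stream-of-consciousness self-corrections. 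The closing remarks on Lemma~\ref{lem12}(iv) are, as you note, unnecessary here.
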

 
 \begin{proof}
 For $\mu\to(\Lambda_0+1)^{-1}$, the right hand side of \eqref{Rq} tends to a limit, which is greater than $-1$ by \eqref{cond1}.
 \end{proof}
 
Inequality \eqref{Rq} also  shows that for a \textit{fixed} $\mu$ in the spectral gap, the size of the perturbation has to reach a threshold before it is possible for $\mu$ to lie in the spectrum of $\gp$.

\subsection{Upper bound}
Now we show that the minimum of the Rayleigh quotient of $A_\mu$ diverges to $-\infty$
as $\mu$ approaches $\frac{1}{\Lambda_1+1}$ from above. To do this, we have to construct a suitable
test function; in order to bring the interaction with the gap edge into play,
we use the edge Bloch wave $\psi_M$. Here, $M$ is as introduced in \eqref{M}. We recall our assumption that there exists a ball $D$ such that $\eps_1-\eps_0>0$ on $D$. 
\begin{lemma}\label{nonzero}
$(\lo-\lp)\psi_M(\cdot,k_y^0)\neq 0$.
\end{lemma}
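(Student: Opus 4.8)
The plan is to argue by contradiction. Suppose $(\lo-\lp)\psi_M(\cdot,k_y^0)=0$. Recall from Lemma~\ref{lem12} that for $u\in H^{-1}_{qp}(\Omega)$ and $\varphi\in H^1_{qp}(\Omega)$,
$$(\lo-\lp)w\,[\varphi]=\int_\Omega\left(\frac1{\eps_1}-\frac1{\eps_0}\right)\nabla w\,\overline{\nabla\varphi}$$
for $w\in H^1_{qp}(\Omega)$ (this is the same identity that produces $K=(\lo-\lp)\gp$; note $\psi_M(\cdot,k_y^0)$ lies in $H^1_{qp}(\Omega)$ since it solves $L_0\psi_M=\Lambda_1\psi_M$ with the fixed quasimomentum $k_x^0$, and the $k_y^0$-quasiperiodicity in $\hat\by$ is compatible with the strip realisation). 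So the hypothesis says
$$\int_\Omega\left(\frac1{\eps_1}-\frac1{\eps_0}\right)\nabla\psi_M(\cdot,k_y^0)\,\overline{\nabla\varphi}=0\qquad\text{for all }\varphi\in H^1_{qp}(\Omega).$$

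Next I would localise. Taking $\varphi$ supported in the ball $D$ from assumption (iii), where $\tfrac1{\eps_0}-\tfrac1{\eps_1}$ is bounded below by a positive constant (since $\eps_1-\eps_0>0$ on $D$ and both $\eps_i$ are bounded with positive lower bound), the identity forces $\nabla\psi_M(\cdot,k_y^0)=0$ a.e.\ on $D$, hence $\psi_M(\cdot,k_y^0)$ is constant on (each component of) $D$. The conclusion I want is that a nonzero Bloch eigenfunction of a second-order elliptic divergence operator cannot be constant on an open set; equivalently, a unique-continuation-type statement. Since the coefficients $\tfrac1{\eps_0}$ are merely $L^\infty$, classical unique continuation is not directly available, so instead I would use the eigenvalue equation itself: on $D$, $-\nabla\cdot(\tfrac1{\eps_0}\nabla\psi_M)=\Lambda_1\psi_M$, and if $\nabla\psi_M\equiv0$ on $D$ then $\Lambda_1\psi_M\equiv0$ on $D$, so $\psi_M\equiv0$ on $D$ (here $\Lambda_1>0$ because $\Lambda_0\le\Lambda_1$ and the spectrum of $L_0$ is nonnegative, in fact $\Lambda_0>0$ as it bounds a gap above the bottom band; alternatively $\Lambda_1>\Lambda_0\ge0$ suffices once we note $\Lambda_1\neq0$). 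Then $\psi_M(\cdot,k_y^0)$ vanishes together with its gradient on the open set $D$.

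From $\psi_M(\cdot,k_y^0)\equiv 0$ on $D$ I would conclude $\psi_M(\cdot,k_y^0)\equiv 0$ on all of $\Omega$, contradicting that it is a (normalised, hence nontrivial) Bloch function. This final step is the main obstacle, because with only $L^\infty$ coefficients the strong unique continuation property can fail in general. I would handle it as follows: the Bloch function $\psi_M(\cdot,k_y^0)$, extended quasiperiodically, is a global $H^1_{loc}$ solution of the divergence-form equation on $\R^2$ with periodic $\eps_0$; by assumption (i) the operator is uniformly elliptic, and in dimension $n=2$ weak solutions of second-order elliptic equations in divergence form \emph{do} enjoy the weak (and strong) unique continuation property even for bounded measurable coefficients, by the results of Bers--Nirenberg / Alessandrini on planar elliptic equations (a solution vanishing on a set of positive measure vanishes identically on a connected domain). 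Invoking this planar unique continuation result with the connectedness of $\Omega$ (a strip, hence connected) gives $\psi_M(\cdot,k_y^0)\equiv0$, the desired contradiction. Thus $(\lo-\lp)\psi_M(\cdot,k_y^0)\neq0$. \qquad If one prefers to avoid citing planar unique continuation, an alternative is to note that the paper's standing setup (the band structure, analyticity of $\lambda_M(k_x^0,\cdot)$) already presupposes simple isolated eigenvalues with one-dimensional eigenspaces near the edge, and a direct argument using the real-analytic dependence of $\psi_M(\cdot,k_y)$ on $k_y$ together with vanishing on $D$ propagates the vanishing across the whole family, again forcing triviality; I expect the referee-level objection to be precisely \emph{which} unique-continuation input is legitimate under $L^\infty$ coefficients, so I would state explicitly that we rely on the two-dimensional theory.
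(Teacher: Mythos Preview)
Your overall strategy coincides with the paper's: argue by contradiction, deduce $\nabla\psi_M(\cdot,k_y^0)=0$ on $D$, use the eigenvalue equation to get $\psi_M(\cdot,k_y^0)=0$ on $D$, and then invoke planar unique continuation for $L^\infty$ coefficients (the paper cites precisely Alessandrini \cite{Ale12}, the result you point to).

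There is, however, a genuine gap in your localisation step. Testing the identity against $\varphi$ supported in $D$ gives only
\[
\int_D \left(\tfrac{1}{\eps_0}-\tfrac{1}{\eps_1}\right)\nabla\psi_M\cdot\overline{\nabla\varphi}=0\qquad\text{for all }\varphi\in H^1_0(D),
\]
i.e.\ $-\divg\big((\tfrac{1}{\eps_0}-\tfrac{1}{\eps_1})\nabla\psi_M\big)=0$ weakly on $D$. This does \emph{not} force $\nabla\psi_M=0$ on $D$: gradients of test functions do not exhaust $L^2(D)^2$, so a divergence-free vector field need not vanish. (Incidentally, assumption~(iii) asserts $\eps_1-\eps_0>0$ pointwise on $D$, not a uniform positive lower bound, so your parenthetical justification is also unsupported.)

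The fix is the one the paper uses and is a one-line change: test against $\psi_M(\cdot,k_y^0)$ itself. Because $\tfrac{1}{\eps_0}-\tfrac{1}{\eps_1}$ is compactly supported in $y$, the pairing is a finite integral even though $\psi_M$ is only quasiperiodic on the strip, and one obtains
\[
\int_\Omega\left(\tfrac{1}{\eps_0}-\tfrac{1}{\eps_1}\right)\bigl|\nabla\psi_M(\cdot,k_y^0)\bigr|^2=0.
\]
By assumption~(ii) the integrand is nonnegative everywhere, and by~(iii) the coefficient is strictly positive on $D$; hence $\nabla\psi_M(\cdot,k_y^0)=0$ a.e.\ on $D$. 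From this point on your argument (eigenvalue equation $\Rightarrow$ $\psi_M=0$ on $D$, then two-dimensional unique continuation) is correct and matches the paper verbatim.
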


\begin{proof}
Assume $(\lo-\lp)\psi_M(\cdot,k_y^0) =  0$. Then 
$$[(\lo-\lp)\psi_M(\cdot,k_y^0)][\psi_M(\cdot,k_y^0)]=  0, \hbox{ so }
\int_\Omega \left(\frac{1}{\eps_0}-\frac{1}{\eps_1}\right) |\nabla \psi_M(\cdot,k_y^0)|^2 =0$$
 and $\nabla \psi_M(\cdot,k_y^0) =0 $ on $D$. Hence $L_0 \psi_M(\cdot,k_y^0) =0$ on $D$. Together with $(L_0-\Lambda_1) \psi_M(\cdot,k_y^0)= 0$ on $\Omega$, this gives $\psi_M(\cdot,k_y^0)=0$ on $D$ and by unique continuation $\psi_M(\cdot,k_y^0)\equiv 0$ (see \cite{Ale12}).
\end{proof}

\begin{remark}
The condition we  require for our results is $(\lo-\lp)\psi_M(\cdot,k_y^0)\neq 0$. We make the assumption  on $\eps_1-\eps_0$ instead, as this can be checked from the data.
\end{remark}

\begin{lemma}\label{testfn}
There exists $u\in\ck$ such that $[(\lo-\lp)\psi_M(\cdot,k_y^0)][\gp u]\neq 0$.
\end{lemma}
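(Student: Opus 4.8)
The plan is to argue by contradiction: suppose $[(\lo-\lp)\psi_M(\cdot,k_y^0)][\gp u] = 0$ for every $u \in \ck$. The first observation is that $\gp$ maps $H^{-1}_{qp}(\Omega)$ onto $H^1_{qp}(\Omega)$ (it is the solution operator of \eqref{Gone}), so $\{\gp u : u \in \ck\}$ is a subspace of $H^1_{qp}(\Omega)$ which I need to show is large enough to force $(\lo-\lp)\psi_M(\cdot,k_y^0) = 0$, contradicting Lemma \ref{nonzero}. The key is that the functional $w := (\lo-\lp)\psi_M(\cdot,k_y^0) \in H^{-1}_{qp}(\Omega)$, by the identity $K = (\lo-\lp)\gp$ appearing in the proof of Lemma \ref{lem12}, is concentrated on the support of $\frac{1}{\eps_0} - \frac{1}{\eps_1}$: indeed $w[\varphi] = \int_\Omega (\frac{1}{\eps_0} - \frac{1}{\eps_1}) \nabla\psi_M(\cdot,k_y^0)\, \overline{\nabla\varphi}$, so $w$ lies in $H^{-1}_{cs}$, the same space into which $\Ran K$ maps (Lemma \ref{cs}).

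The second step is to relate $\{\gp u : u \in \ck\}$ to something manifestly large. I would use that $\ck = \overline{\Ran K}$ and that $K = \go^{-1}\gp - I$, equivalently $\gp = \go(I + K)$. Take any $\varphi_0 \in H^1_{qp}(\Omega)$ supported in the ball $D$ (or in $\supp(\frac{1}{\eps_0}-\frac{1}{\eps_1})$); set $f = \go^{-1}\varphi_0 \in H^{-1}_{qp}(\Omega)$. I claim $f$ (or rather its image under a suitable correction) lies in $\ck$. More directly: by the computation in the proof of Lemma \ref{cs}, for $g \in H^{-1}_{qp}(\Omega)$ one has $(Kg)[\varphi] = \int_\Omega (\frac{1}{\eps_0}-\frac{1}{\eps_1})\nabla\gp g\,\overline{\nabla\varphi}$, and since $\gp$ is onto $H^1_{qp}(\Omega)$, the functions $\gp g$ range over all of $H^1_{qp}(\Omega)$; hence $\Ran K = \{\,\varphi \mapsto \int_\Omega (\frac{1}{\eps_0}-\frac{1}{\eps_1})\nabla h\,\overline{\nabla\varphi} : h \in H^1_{qp}(\Omega)\,\}$. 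Thus $w \in \Ran K \subseteq \ck$, because $w$ is exactly this functional with $h = \psi_M(\cdot,k_y^0)$. So there exists $u \in \ck$ with $Ku = w$, i.e. $(\lo-\lp)\gp u = (\lo-\lp)\psi_M(\cdot,k_y^0)$.

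Now apply the standing assumption to this particular $u$: $w[\gp u] = [(\lo-\lp)\psi_M(\cdot,k_y^0)][\gp u] = 0$. But $w[\gp u] = (Ku)[\gp u] = \llangle Ku, \gp u\rrangle$-type pairing, and using $Ku = w = (\lo-\lp)\gp u$ together with the positivity computation from the proof of the lemma following Lemma \ref{lemma2}, one gets $w[\gp u] = \int_\Omega(\frac{1}{\eps_0}-\frac{1}{\eps_1})|\nabla\gp u|^2 \geq 0$, with equality forcing $\nabla\gp u = 0$ on $D$. Since also $(\lo-\lp)\gp u = (\lo-\lp)\psi_M(\cdot,k_y^0)$ and $\nabla\psi_M(\cdot,k_y^0)$ need not vanish on $D$, comparing $\int_\Omega(\frac{1}{\eps_0}-\frac{1}{\eps_1})\nabla\gp u\,\overline{\nabla\psi_M} $ two ways yields $\int_\Omega(\frac{1}{\eps_0}-\frac{1}{\eps_1})|\nabla\psi_M(\cdot,k_y^0)|^2 = 0$, whence $(\lo-\lp)\psi_M(\cdot,k_y^0) = 0$, contradicting Lemma \ref{nonzero}.

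The main obstacle I anticipate is the precise identification $w \in \Ran K$ (as opposed to merely $w \in \overline{\Ran K} = \ck$): this needs the surjectivity of $\gp: H^{-1}_{qp}(\Omega) \to H^1_{qp}(\Omega)$ applied carefully, since one must produce a genuine preimage $u$ with $\gp u$ equal to $\psi_M(\cdot,k_y^0)$ \emph{up to a functional that $\frac{1}{\eps_0}-\frac{1}{\eps_1}$ does not see}. If that exact identification is awkward, the fallback is to note $w \in \ck$, pick $u_n \in \ck$ with $Ku_n \to w$, and pass to the limit in $w[\gp u_n] = \int_\Omega(\frac{1}{\eps_0}-\frac{1}{\eps_1})|\nabla\gp u_n|^2 \to 0$ after extracting from the boundedness of $\gp u_n$ in $H^1$ on the (fixed, bounded) support a weakly convergent subsequence whose limit still pairs nontrivially with $\nabla\psi_M(\cdot,k_y^0)$ — the nontriviality being guaranteed precisely by Lemma \ref{nonzero}.
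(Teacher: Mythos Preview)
Your argument is essentially correct but takes a considerably longer route than the paper's. One technical point to flag: you write that $w=(\lo-\lp)\psi_M(\cdot,k_y^0)$ lies in $\Ran K$ ``because $w$ is exactly this functional with $h=\psi_M(\cdot,k_y^0)$'', but $\psi_M(\cdot,k_y^0)$ is a Bloch wave, quasi-periodic in $y$, and hence does \emph{not} belong to $H^1_{qp}(\Omega)$. This is easily repaired by replacing $\psi_M$ with $\chi\psi_M$ for a cutoff $\chi\equiv 1$ on $\{|y|\le R\}$; since $\tfrac{1}{\eps_0}-\tfrac{1}{\eps_1}$ is supported there, the resulting functional is unchanged. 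With that fix your chain works: $w\in\Ran K$, so (after projecting the preimage) there is $u\in\ck$ with $Ku=w$; the contradiction hypothesis gives $0=w[\gp u]=\int_\Omega(\tfrac{1}{\eps_0}-\tfrac{1}{\eps_1})|\nabla\gp u|^2$, forcing $\nabla\gp u=0$ on the support; and then pairing $Ku=w$ against $\chi\psi_M$ yields $\int_\Omega(\tfrac{1}{\eps_0}-\tfrac{1}{\eps_1})|\nabla\psi_M|^2=0$, contradicting Lemma~\ref{nonzero}. (Your fallback, incidentally, conflates $w[\gp u_n]$ with $(Ku_n)[\gp u_n]$; these differ when $Ku_n\neq w$. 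But the fallback is not needed.)

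The paper's proof is a two-line shortcut exploiting $KP=K$ directly. Since $\gp$ is onto, Lemma~\ref{nonzero} gives some $\tilde u\in H^{-1}_{qp}(\Omega)$ with $w[\gp\tilde u]\neq 0$; setting $u=P\tilde u\in\ck$ and using the symmetry $w[\gp v]=\overline{(Kv)[\psi_M]}$ together with $KP\tilde u=K\tilde u$, one gets $w[\gp u]=w[\gp\tilde u]\neq 0$ immediately. Your approach has the merit of making the role of positivity explicit and of re-deriving (in effect) the conclusion of Lemma~\ref{nonzero}, but the paper's is more economical: it never needs to show $w\in\Ran K$ or invoke any positivity, only that passing to the projection does not change the relevant pairing.
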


\begin{proof}
As $\gp :H^{-1}_{qp}(\Omega)\to H^1_{qp}(\Omega)$ is surjective, by Lemma \ref{nonzero} there exists $\tilde{u}\in H^{-1}_{qp}(\Omega)$ such that $[(\lo-\lp)\psi_M(\cdot,k_y^0)][\gp \tilde{u}]\neq 0$.
Set $u=P\tilde{u}$, then
\bea
[(\lo-\lp)\psi_M(\cdot,k_y^0)][\gp u] &=& \int_\Omega\left(\frac{1}{\eps_0}-\frac{1}{\eps_1}\right) \nabla\psi_M(\cdot,k_y^0)\overline{\nabla \gp u}\\
&=& \overline{[(\lo-\lp)\gp u][\psi_M(\cdot,k_y^0)]} \ =\ \overline{[Ku] [\psi_M(\cdot,k_y^0)]} \\
&=& \overline{[KP\tilde{u}] [\psi_M(\cdot,k_y^0)]} \ =\ \overline{[K\tilde{u}] [\psi_M(\cdot,k_y^0)]}.
\eea 
Reversing all the steps with $u$ replaced by $\tilde{u}$, we get
\bea
[(\lo-\lp)\psi_M(\cdot,k_y^0)][\gp u] &=& [(\lo-\lp)\psi_M(\cdot,k_y^0)][\gp \tilde{u}] \neq 0,
\eea
which completes the proof.
\end{proof}

From now on,  $u$ will always denote the test function in $\ck$ given in Lemma \ref{testfn}. In considering the Rayleigh quotient for our test function, expressions involving $Ku=(\lo-\lp)\gp u$ will arise. To be able to make use of the resolvent representation via Bloch waves in $L^2(\Omega)$, we need to regularize $Ku$. 
First, define
\be
 Tu=\left(\frac{1}{\eps_0}-\frac{1}{\eps_1}\right) \nabla \gp u
\ee
and extend $Tu$ quasi-periodically from $\Omega$ to $\R^2$.
Next, we  introduce a mollifier $(\chi_n)_{n\geq 0}$ with support in $[0,1]^2$ and set
\be
 u_n=Tu *\chi_n\quad \hbox{ and }\quad v_n= -\divg (Tu *\chi_n).
\ee
Then $u_n\vert_{\Omega}, v_n\vert_{\Omega}$ are supported on $\Omega_n$, a neighbourhood of $[0,1]\times[-R,R]$ in $\Omega$ and with $U$ denoting the Floquet-Bloch transform in the $\hat{\by}$-direction, we have for sufficiently large $n$ that
\ben Uu_n(x,y,k) &=& \frac{1}{2\pi}\sum_{m\in\Z}e^{ikm} u_n(x,y-m) \ =\ \frac{1}{2\pi}\sum_{|m|\leq R+1}e^{ikm} u_n(x,y-m). \label{Uun}
\een
Similarly,
\ben \label{Uvn} Uv_n(x,y,k)\ =\ \frac{1}{2\pi}\sum_{|m|\leq R+1}e^{ikm} u_n(x,y-m). 
\een
In particular, in both cases, the sum is finite.

 {We now show that this gives us the desired smooth approximation of $Ku$.}
\begin{lemma}\label{lem:vnKu}
$v_n\to Ku$ in $H^{-1}_{qp}(\Omega)$.
\end{lemma}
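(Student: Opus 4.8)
The goal is to show $v_n \to Ku$ in $H^{-1}_{qp}(\Omega)$, where $v_n = -\divg(Tu * \chi_n)$ with $Tu = \left(\frac{1}{\eps_0}-\frac{1}{\eps_1}\right)\nabla\gp u$ (extended quasiperiodically), and where we recall from Lemma \ref{lem12} that $Ku[\varphi] = \int_\Omega \left(\frac{1}{\eps_0}-\frac{1}{\eps_1}\right)\nabla\gp u\,\overline{\nabla\varphi} = \int_\Omega Tu\cdot\overline{\nabla\varphi}$ for $\varphi\in H^1_{qp}(\Omega)$. The plan is to first reformulate the $H^{-1}_{qp}$-convergence as a statement purely about the $L^2$-convergence of the vector field $Tu * \chi_n$ to $Tu$. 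Indeed, for $\varphi\in H^1_{qp}(\Omega)$ we have $v_n[\varphi] = \int_\Omega (Tu * \chi_n)\cdot\overline{\nabla\varphi}$ (integration by parts, using that the mollified field is smooth and that the quasiperiodic boundary terms in the $\hat\bx$-direction cancel, while in the $\hat\by$-direction everything is compactly supported for large $n$). Hence
\beq
(v_n - Ku)[\varphi] = \int_\Omega \big(Tu * \chi_n - Tu\big)\cdot\overline{\nabla\varphi},
\eeq
so that $\norm{v_n - Ku}_{H^{-1}_{qp}} \le \norm{Tu*\chi_n - Tu}_{L^2(\Omega)}$, because $\norm{\nabla\varphi}_{L^2} \le \norm{\varphi}_{H^1}$ for the equivalent $H^1_{qp}$-norm coming from $B_0$.

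\textbf{Second step.} It then suffices to show $Tu * \chi_n \to Tu$ in $L^2(\Omega)$. Since $\gp u \in H^1_{qp}(\Omega)$, we have $\nabla\gp u \in L^2_{loc}(\R^2)$, and since $\frac{1}{\eps_0}-\frac{1}{\eps_1}\in L^\infty$ is supported (in the $\hat\by$-direction) in $[0,1]\times[-R,R]$, the field $Tu$ is in $L^2_{loc}(\R^2)$, quasiperiodic in $\hat\bx$, and supported in $\{|y|\le R\}$. Standard mollifier theory gives $Tu*\chi_n \to Tu$ in $L^2_{loc}(\R^2)$; combined with the uniform compact support in $y$ (all $u_n$ supported in $\Omega_n$, a fixed neighbourhood of $[0,1]\times[-R,R]$), this upgrades to convergence in $L^2(\Omega)$. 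One small point to handle carefully: $Tu$ need not be periodic in the $\hat\by$-direction, so ``mollifying'' means convolving the quasiperiodic-in-$x$, compactly-supported-in-$y$ extension on $\R^2$ with $\chi_n$, and then restricting to $\Omega$; the convolution is compatible with the $\hat\bx$-quasiperiodicity since $\chi_n$ is supported in $[0,1]^2$ and $k_x^0$-quasiperiodicity is preserved under convolution with a $\hat\bx$-compactly-supported kernel (up to the phase factor, which is why $u_n$ is still $k_x^0$-quasiperiodic).

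\textbf{Main obstacle.} The one genuinely delicate point is the integration by parts identifying $v_n[\varphi]$ with $\int_\Omega(Tu*\chi_n)\cdot\overline{\nabla\varphi}$: one must check that no boundary contributions survive. In the $\hat\by$-direction this is immediate because for large $n$ the field $Tu*\chi_n$ is supported in $[0,1]\times[-(R+1),R+1]$, away from $|y|=\infty$, so there is no boundary at all. In the $\hat\bx$-direction, $\Omega = (0,1)\times\R$ has the two faces $\{x=0\}$ and $\{x=1\}$; here $\varphi$ is $k_x^0$-quasiperiodic and $Tu*\chi_n$ is $k_x^0$-quasiperiodic (as a vector field, componentwise), so $(Tu*\chi_n)\cdot\overline{\nabla\varphi}$ and the relevant normal-component boundary integrand pick up conjugate phases $e^{ik_x^0}$ and $\overline{e^{ik_x^0}} = e^{-ik_x^0}$ which cancel, exactly as in the definition of $\divg$ acting on $H^1_{qp}$. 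This is precisely the same cancellation that makes $v_n\in H^{-1}_{qp}(\Omega)$ well-defined in the first place, so it is more a matter of being careful than of real difficulty. Once this is in place, the estimate $\norm{v_n-Ku}_{H^{-1}_{qp}} \le \norm{Tu*\chi_n-Tu}_{L^2(\Omega)} \to 0$ finishes the proof.
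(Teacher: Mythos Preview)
Your proof is correct and follows essentially the same route as the paper: identify $v_n[\varphi]=\int_\Omega (Tu*\chi_n)\cdot\overline{\nabla\varphi}$ via integration by parts (with quasiperiodic boundary terms cancelling), subtract the analogous expression for $Ku[\varphi]$, and bound by $\norm{Tu*\chi_n-Tu}_{L^2(\Omega)}\norm{\varphi}_{H^1}$, which tends to zero by standard mollifier convergence. The only minor slip is that $\norm{\nabla\varphi}_{L^2}\le C\norm{\varphi}_{H^1}$ with $C=\norm{\eps_0}_\infty^{1/2}$ rather than $C=1$ for the $B_0$-norm, but this is inconsequential (and the paper is equally informal on this point).
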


\begin{proof}
Let $\varphi\in H^1_{qp}(\Omega)$. Then
$$v_n[\varphi]=\int_{\Omega_n} v_n\overline\varphi = \int_{\Omega_n}\left(\left(\frac{1}{\eps_0}-\frac{1}{\eps_1}\right)\nabla \gp u*\chi_n\right)\overline{\nabla\varphi} =\int_{\Omega_n}\left(Tu*\chi_n\right)\overline{\nabla\varphi}  ,$$
where the boundary term in the integration by parts vanishes, as all functions satisfy quasiperiodic boundary conditions in the $\hat{\bf x}$-direction. On the other hand,
$$Ku[\varphi]=\int_{(0,1)\times(-R,R)} \left(\frac{1}{\eps_0}-\frac{1}{\eps_1}\right)\nabla \gp u \overline{\nabla\varphi}=\int_{(0,1)\times(-R,R)} Tu \overline{\nabla\varphi}.$$
Hence,  {
\bea
\left|(Ku-v_n)[\varphi]\right|&=& \left|\int_{\Omega}\left(Tu-Tu*\chi_n\right) \overline{\nabla\varphi}\right|\ \leq \
\norm{ Tu-Tu*\chi_n}_{L^2} \norm{\varphi}_{H^1}.
\eea
%
%
As $Tu-Tu*\chi_n\to 0$ in $L^2(\Omega)$,  we see that 
$v_n\to Ku$ in $H^{-1}_{qp}(\Omega)$.}
\end{proof}


 {Before finally considering the Rayleigh quotient, we need two more auxilliary results.}

\begin{lemma}\label{lem:unif}
$Uu_n(\cdot,k)\to U(Tu)(\cdot,k)$ uniformly in $k$ in $L^2((0,1)^2)$ as $n\to\infty$.
\end{lemma}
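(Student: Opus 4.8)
The plan is to prove uniform convergence $Uu_n(\cdot,k)\to U(Tu)(\cdot,k)$ in $L^2((0,1)^2)$ by exploiting the finiteness of the Floquet sum established in \eqref{Uun}, which turns the question into a statement about ordinary mollification on a fixed bounded region. First I would use that, since $Tu$ has support contained in $[0,1]\times[-R,R]$ and $\chi_n$ has support in $[0,1]^2$, the convolution $u_n=Tu*\chi_n$ is supported in a fixed neighbourhood $\Omega_n\subseteq (0,1)\times[-R-1,R+1]$ independent of $n$ for large $n$. Hence by \eqref{Uun},
$$
Uu_n(x,y,k)=\frac{1}{2\pi}\sum_{|m|\le R+1}e^{ikm}u_n(x,y-m),\qquad U(Tu)(x,y,k)=\frac{1}{2\pi}\sum_{|m|\le R+1}e^{ikm}Tu(x,y-m),
$$
so that for every $k$,
$$
\norm{Uu_n(\cdot,k)-U(Tu)(\cdot,k)}_{L^2((0,1)^2)}\le \frac{1}{2\pi}\sum_{|m|\le R+1}\norm{u_n(\cdot-(0,m))-Tu(\cdot-(0,m))}_{L^2((0,1)^2)}\le C\,\norm{u_n-Tu}_{L^2(\R^2)},
$$
where $C=(2R+3)/(2\pi)$ and the last bound follows because the translated squares $(0,1)\times(m,m+1)$ cover a fixed bounded region containing all the relevant supports. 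Crucially, the right-hand side is independent of $k$.

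Next I would invoke the standard property of mollifiers: $Tu\in L^2(\R^2)$ (indeed $Tu\in L^2$ since $\tfrac1{\eps_0}-\tfrac1{\eps_1}\in L^\infty$ and $\nabla\gp u\in L^2$, and the quasi-periodic extension keeps it locally $L^2$ with the supports as described), so $Tu*\chi_n\to Tu$ in $L^2(\R^2)$ as $n\to\infty$. Combining this with the $k$-independent estimate above gives
$$
\sup_{k\in[-\pi,\pi]}\norm{Uu_n(\cdot,k)-U(Tu)(\cdot,k)}_{L^2((0,1)^2)}\le C\,\norm{Tu*\chi_n-Tu}_{L^2(\R^2)}\longrightarrow 0,
$$
which is precisely the claimed uniform convergence.

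The only genuine subtlety — and the main thing to be careful about — is the bookkeeping of supports: one must check that the number of terms in the Floquet sum is bounded uniformly in $n$ (true once $n$ is large enough that $\supp\chi_n$ is small, so that $\supp u_n$ stays within the fixed neighbourhood $[0,1]\times[-R-1,R+1]$), and that restricting the translated functions to $(0,1)^2$ and summing their $L^2$-norms is controlled by a single global $L^2(\R^2)$-norm of $u_n-Tu$. Both are routine once the finite-sum representation \eqref{Uun} is in hand; everything else is the elementary fact that mollification converges in $L^2$. I would therefore present the argument in the three short steps above: (1) reduce to a finite sum via \eqref{Uun} and extract the $k$-independent bound, (2) apply $L^2$-convergence of mollifiers, (3) conclude.
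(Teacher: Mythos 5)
Your proof is correct and takes essentially the same route as the paper's: reduce to the finite Floquet sum \eqref{Uun}, use that $|e^{ikm}|=1$ to get a $k$-independent bound, and invoke $L^2$-convergence of mollifiers (the paper just phrases this as term-by-term convergence for $|m|\le R$ plus vanishing of the two edge terms with $|m|=R+1$). One small slip to fix: after quasi-periodic extension $Tu$ is supported in $\R\times[-R,R]$, not $[0,1]\times[-R,R]$, so it is not in $L^2(\R^2)$; the final bound should be stated over the bounded region $(0,1)\times(-R-2,R+2)$ — which your remark about covering a fixed bounded region already indicates — where local $L^2$ mollifier convergence applies.
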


\begin{proof} We consider the expression for $Uu_n$ from \eqref{Uun} and note that
$$ UTu (x,y,k)\ =\ \frac{1}{2\pi}\sum_{|m|\leq R}e^{ikm} Tu(x,y-m).$$
Clearly, we have that $(Tu*\chi_n)(\cdot,\cdot-m)\vert_{(0,1)^2}\to Tu(\cdot,\cdot-m)\vert_{(0,1)^2}$ in $L^2((0,1)^2)$ for $|m|\leq R$ and $\norm{e^{\mp ik}u_n(x,y\pm (R+1))}_{L^2} = \norm{u_n(x,y\pm (R+1))}_{L^2} \to 0$ uniformly in $k$.
\end{proof}

\begin{lemma}\label{lem:pos}
There exist $c>0, \delta>0$ and $N\in\N$ such that
$$\left|\la Uv_n(\cdot,k),\psi_M(\cdot,k)\ra\right|^2\geq c$$
for all $|k-k_y^0|<\delta$ and $n>N$.
\end{lemma}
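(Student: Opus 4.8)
The aim is to show that the squared inner product $|\la Uv_n(\cdot,k),\psi_M(\cdot,k)\ra|^2$ is bounded below by a positive constant, uniformly for $k$ near $k_y^0$ and $n$ large. The natural strategy is a three-step argument: first establish positivity at $(k,n)=(k_y^0,\infty)$, i.e.\ for the limiting object $UTu$; then use uniform convergence in $n$ (Lemma~\ref{lem:unif}, together with $Uv_n$ being $-\divg$ applied to $Uu_n$ in the appropriate sense) to pass to finite $n$; and finally use continuity in $k$ to get a neighbourhood of $k_y^0$. The key input for the first step is Lemma~\ref{testfn}: the test function $u\in\ck$ was chosen precisely so that $[(\lo-\lp)\psi_M(\cdot,k_y^0)][\gp u]\ne 0$, and this quantity is exactly $\int_\Omega\left(\frac{1}{\eps_0}-\frac{1}{\eps_1}\right)\nabla\psi_M(\cdot,k_y^0)\overline{\nabla\gp u}=\overline{[Ku][\psi_M(\cdot,k_y^0)]}$, i.e.\ $[Ku][\psi_M(\cdot,k_y^0)]\ne 0$.

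\textbf{Main steps.} First I would express $[Ku][\psi_M(\cdot,k_y^0)]$ in Floquet-Bloch coordinates. Since $Ku=(\lo-\lp)\gp u$ acts on $\varphi\in H^1_{qp}(\Omega)$ by $\int_{(0,1)\times(-R,R)}Tu\,\overline{\nabla\varphi}$ (as in the proof of Lemma~\ref{lem:vnKu}), integrating by parts gives $[Ku][\varphi]=\int_\Omega(-\divg Tu)\overline\varphi$ in the distributional sense, and applying the Plancherel formula for the Floquet transform $U$ in the $\hat\by$-direction yields $[Ku][\psi_M(\cdot,k_y^0)]=\la U(-\divg Tu)(\cdot,k_y^0),\psi_M(\cdot,k_y^0)\ra$ (up to a harmless constant), using that $\psi_M(\cdot,k_y^0)$ is, via $U$, concentrated at the single quasimomentum $k_y^0$ by our standing assumption that $\lambda_M(k_x^0,k_y)\ne\Lambda_1$ for $k_y\ne k_y^0$. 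Hence $\la U(-\divg Tu)(\cdot,k_y^0),\psi_M(\cdot,k_y^0)\ra\ne 0$. Next, from \eqref{Uvn} and Lemma~\ref{lem:unif}, $Uv_n(\cdot,k)=U(-\divg(Tu*\chi_n))(\cdot,k)\to U(-\divg Tu)(\cdot,k)$ in $L^2((0,1)^2)$, and this convergence is uniform in $k$ (the mollification commutes with $-\divg$ on $Uu_n$, and one controls the $H^{-1}\to L^2$-type estimate exactly as in Lemma~\ref{lem:vnKu}, with the finite-sum structure of \eqref{Uvn} and the uniform decay of the tail terms $\norm{u_n(x,y\pm(R+1))}_{L^2}\to 0$). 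Combined with continuity of $k\mapsto\psi_M(\cdot,k)$ in $L^2((0,1)^2)$ and of $k\mapsto U(-\divg Tu)(\cdot,k)$, the map $(k,n)\mapsto\la Uv_n(\cdot,k),\psi_M(\cdot,k)\ra$ is, for $n$ large, uniformly close to the continuous nonzero function $k\mapsto\la U(-\divg Tu)(\cdot,k),\psi_M(\cdot,k)\ra$; since the latter is nonzero at $k=k_y^0$, it is bounded below in modulus by some $2\sqrt c>0$ on a neighbourhood $|k-k_y^0|<\delta$, and then $|\la Uv_n(\cdot,k),\psi_M(\cdot,k)\ra|\ge\sqrt c$ for $n>N$, giving the claim.

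\textbf{Main obstacle.} The delicate point is making the identification $[Ku][\psi_M(\cdot,k_y^0)]=\la U(-\divg Tu)(\cdot,k_y^0),\psi_M(\cdot,k_y^0)\ra$ rigorous: $\psi_M(\cdot,k_y^0)$ is not itself an element of $L^2(\Omega)$ (it is only quasiperiodic and bounded, a generalized eigenfunction), so the pairing $[Ku][\psi_M(\cdot,k_y^0)]$ must be interpreted via the compact support of $Ku$ in the $\hat\by$-direction — $Ku\in H^{-1}_{cs}$ by Lemma~\ref{cs} — so that testing against the locally-$H^1$ function $\psi_M(\cdot,k_y^0)$ makes sense, and then the transition to the Floquet picture uses that $Tu*\chi_n$ has finite $\hat\by$-support so $Uv_n(\cdot,k)$ is a genuine finite sum \eqref{Uvn}. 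Care is also needed that the uniform-in-$k$ convergence survives applying $-\divg$; but since $-\divg$ is applied before mollification is not the issue — rather $v_n=-\divg(Tu*\chi_n)$ and one estimates $\la Uv_n(\cdot,k),\psi_M(\cdot,k)\ra$ by moving the divergence onto $\psi_M$ via integration by parts on $(0,1)^2$, reducing everything to the $L^2$-convergence of $Uu_n(\cdot,k)$ established in Lemma~\ref{lem:unif}. Once this bookkeeping is in place the positivity is immediate from Lemma~\ref{testfn}.
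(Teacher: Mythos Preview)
Your proposal is essentially correct and follows the same approach as the paper: integrate by parts to write $\la Uv_n(\cdot,k),\psi_M(\cdot,k)\ra = \int_{(0,1)^2} Uu_n(\cdot,k)\cdot\overline{\nabla\psi_M(\cdot,k)}$, use the uniform $L^2$-convergence of $Uu_n(\cdot,k)\to UTu(\cdot,k)$ from Lemma~\ref{lem:unif}, identify the limit with $\tfrac{1}{2\pi}\overline{[(\lo-\lp)\psi_M(\cdot,k)][\gp u]}$ via the finite-sum structure and quasi-periodicity, and conclude using Lemma~\ref{testfn} plus continuity in $k$.

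One small wobble worth tightening: in your ``Main steps'' you write $Uv_n(\cdot,k)\to U(-\divg Tu)(\cdot,k)$ in $L^2((0,1)^2)$, but $-\divg Tu$ is only in $H^{-1}$, so this limit does not live in $L^2$. You catch this yourself in the ``Main obstacle'' paragraph, where you correctly say the divergence should be moved onto $\psi_M$ by integration by parts \emph{before} passing to the limit --- that is exactly what the paper does from the outset, and it avoids ever needing to make sense of $U(-\divg Tu)$ as an $L^2$ object. Just make the integration by parts the first step rather than an afterthought and the argument is clean.
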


\begin{proof}
Integrating by parts, we have 
\bea
\la Uv_n(\cdot,k),\psi_M(\cdot,k)\ra &=& \int_{(0,1)^2}  Uu_n(\cdot,k) \cdot \overline{\nabla \psi_M(\cdot,k)} 
\ \to\ \int_{(0,1)^2}  UTu(\cdot,k) \cdot \overline{\nabla \psi_M(\cdot,k)},
\eea
where, by Lemma \ref{lem:unif} the convergence is uniform in $k$.
Now, in view of the location of the   support of the functions, and using the quasi-periodicity of $\psi_M$,
\bea
\int_{(0,1)^2}  UTu(\cdot,k) \cdot \overline{\nabla \psi_M(\cdot,k)} &=& \frac{1}{2\pi} \sum_{|m|\leq R} \int_{(0,1)^2}  e^{ikm} Tu(x,y-m)\overline{\nabla \psi_M(x,y,k)}\\
&=& \frac{1}{2\pi}  \int_{(0,1)\times(-R,R)}  Tu(x,z)\overline{\nabla \psi_M(x,z,k)}
\\
&=& \frac{1}{2\pi} \int_{\Omega} \left(\frac{1}{\eps_0}-\frac{1}{\eps_1}\right) \nabla \gp u \cdot \overline{\nabla \psi_M(\cdot,k)}\\
&=& \frac{1}{2\pi}\overline{\left[ (\lo-\lp) \psi_M(\cdot,k)\right] [\gp u]}.
\eea
By  {Lemma \ref{testfn}}, this is non-zero at $k=k_y^0$.
Consider the map $k\mapsto \left[ (\lo-\lp) \psi_M(\cdot,k)\right] [\gp u]$. This is continuous in $k$ and so there exists $\delta>0$ such that it is non-zero for all $|k-k_y^0|<\delta$. Uniformity of the convergence then proves the result for all $n>N$ for some $N\in\N$.
\end{proof}

\begin{lemma} \label{lem14}
 For the test function $u$ given in Lemma \ref{testfn} we have
 $ \llangle A_\mu u,u \rrangle \to -\infty$ as $\mu \to \dfrac1{\Lambda_1+1}$.
 \end{lemma}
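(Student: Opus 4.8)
The plan is to exploit the lower bound machinery from Lemma \ref{lem8}'s proof together with the test function $u$ from Lemma \ref{testfn}, but now keeping the $s=M$ band separated out rather than discarding it. Recall that with a sequence $(v_n)$ in $L^2(\Omega)$ converging to $Ku$ in $H^{-1}_{qp}(\Omega)$, we have
\beq\label{eq:AmuBloch}
\llangle A_\mu u,u \rrangle_\ck = \lim_{n\to\infty} \sum_s \int_{-\pi}^\pi \frac{1}{1-\mu(\lambda_s(k)+1)}\cdot\frac{1}{\lambda_s(k)+1}\left|\llangle Uv_n,\psi_s\rrangle\right|^2\,dk.
\eeq
The terms with $s\neq M$ are handled exactly as in Lemma \ref{lem8}: since $\mu<(\Lambda_0+1)^{-1}$ and for $s\neq M$ we have $\lambda_s(k)\geq\Lambda_1$ on the relevant range (or the term has a definite sign), these contribute something bounded below by a $\mu$-independent constant times $\norm{Ku}_{H^{-1}}^2$ — in particular they stay bounded as $\mu\to(\Lambda_1+1)^{-1}$. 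So the whole divergence must come from the $s=M$ term, and it suffices to show that
\be
\int_{-\pi}^\pi \frac{1}{1-\mu(\lambda_M(k)+1)}\cdot\frac{1}{\lambda_M(k)+1}\left|\llangle Uv_n,\psi_M\rrangle\right|^2\,dk \to -\infty
\ee
as $\mu\to(\Lambda_1+1)^{-1}$, uniformly for $n$ large, and then interchange the limits.

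Next I would localize the $s=M$ integral near $k=k_y^0$. By the simplicity assumption $\lambda_M(k_x^0,k_y)\neq\Lambda_1$ for $k_y\neq k_y^0$, so away from $k_y^0$ the factor $1-\mu(\lambda_M(k)+1)$ stays bounded away from $0$ uniformly in $\mu$, and that part of the integral is harmless. On a small interval $|k-k_y^0|<\delta$, Lemma \ref{lem:pos} gives $\left|\llangle Uv_n(\cdot,k),\psi_M(\cdot,k)\rrangle\right|^2\geq c>0$ for all $n>N$; moreover $\frac{1}{\lambda_M(k)+1}$ is bounded above and below there. Since $\lambda_M(k)\geq\Lambda_1$ near $k_y^0$ (it attains its minimum $\Lambda_1$ there) and $\mu<(\Lambda_1+1)^{-1}$, the factor $1-\mu(\lambda_M(k)+1)<0$, so this term is negative, and its absolute value is bounded below by
\be
\frac{c'}{2\pi}\int_{|k-k_y^0|<\delta}\frac{dk}{\mu(\lambda_M(k)+1)-1}.
\ee
Using the quadratic upper bound \eqref{condBand}, $\lambda_M(k)\leq\Lambda_1+\alpha|k-k_y^0|^2$, so $\mu(\lambda_M(k)+1)-1\leq \mu(\Lambda_1+1)-1+\mu\alpha|k-k_y^0|^2$. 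Writing $\eps_\mu:=1-\mu(\Lambda_1+1)>0$ (which $\to0$ as $\mu\to(\Lambda_1+1)^{-1}$), this last integral is bounded below by $\int_{|t|<\delta}\frac{dt}{\mu\alpha t^2+\eps_\mu}\sim \frac{C}{\sqrt{\eps_\mu}}\to+\infty$. Hence the $s=M$ contribution tends to $-\infty$.

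The main obstacle is the interchange of the limits $n\to\infty$ and $\mu\to(\Lambda_1+1)^{-1}$ in \eqref{eq:AmuBloch}: the bound on $\left|\llangle Uv_n,\psi_M\rrangle\right|$ from Lemma \ref{lem:pos} is only valid for $n>N$ with $N$ independent of $k$ (good), and one needs that the divergence estimate is also uniform in $n$. The clean way to handle this is: fix $\mu$, take $n\to\infty$ first to get the genuine value $\llangle A_\mu u,u\rrangle_\ck$, and separately establish for each fixed large $n$ a lower bound on $-\llangle A_\mu u,u\rrangle_\ck$ of the form (const)$/\sqrt{\eps_\mu}$ minus a $\mu$-independent error — the constant coming from $c$ in Lemma \ref{lem:pos} and the error absorbing the $s\neq M$ bands and the part of the $s=M$ integral away from $k_y^0$. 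Since all the relevant constants ($c$, $\delta$, $\alpha$, the bounds on $\frac{1}{\lambda_M+1}$, and the $s\neq M$ bound via Lemma \ref{lem12}) are independent of $\mu$, taking first $n$ large enough and then $\mu$ close enough to $(\Lambda_1+1)^{-1}$ forces $\llangle A_\mu u,u\rrangle_\ck\to-\infty$. A minor technical point to check along the way is that, since $\norm{v_n}_{H^{-1}}\to\norm{Ku}_{H^{-1}}$ is bounded, the Parseval sum $\sum_s\int\frac{1}{\lambda_s(k)+1}|\llangle Uv_n,\psi_s\rrangle|^2\,dk=\norm{v_n}_{H^{-1}}^2$ is uniformly bounded in $n$, which controls all the error terms.
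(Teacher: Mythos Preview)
Your approach is essentially the same as the paper's: Bloch expansion, isolate the $s=M$ band, localize near $k_y^0$ via Lemma~\ref{lem:pos}, and use \eqref{condBand} to get the $1/\sqrt{\eta}$ divergence. Two small corrections: throughout you have the inequality on $\mu$ reversed (in the gap $\mu>(\Lambda_1+1)^{-1}$, so set $\eta=\mu(\Lambda_1+1)-1>0$, not $1-\mu(\Lambda_1+1)$); and your treatment of the $s\neq M$ bands needs to distinguish $s<M$ from $s>M$ --- the $s>M$ terms are \emph{not} bounded in general, but they are $\le 0$ (since $\lambda_s(k)\ge\Lambda_1$) and so may simply be dropped when seeking an upper bound, while the $s<M$ terms are $\ge 0$ and are bounded above by $\tfrac{1}{1-\mu(\Lambda_0+1)}\|v_n\|_{H^{-1}}^2$, uniformly in $n$ and in $\mu$ near $(\Lambda_1+1)^{-1}$. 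With these fixes your argument matches the paper's.
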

 
 \begin{proof}
 As  in the proof of Lemma \ref{lem8}, and using Lemma \ref{lem:vnKu},
 \begin{align*}
 \llangle A_\mu u,u \rrangle ={}&\lim_{n\to\infty} \sum_s \int_{-\pi}^\pi \dfrac1{1-\mu (\lambda_s(k)+1)} \dfrac1{\lambda_s(k)+1}  \left| \llangle U v_n, \psi_s\rrangle_{L^2} \right|^2  dk \\
{}& \leq \lim_{n\to\infty} \sum_{s\leq M}   \int_{-\pi}^\pi \dfrac1{1-\mu (\lambda_s(k)+1)} \dfrac1{\lambda_s(k)+1}  \left| \llangle U v_n, \psi_s\rrangle_{L^2} \right|^2 dk
\end{align*}

Now, for $\mu$ near $(\Lambda_1+1)^{-1}$,
\begin{align*} 
& \sum_{s<M}    \int_{-\pi}^\pi \dfrac1{1-\mu (\lambda_s(k)+1)} \dfrac1{\lambda_s(k)+1}  \left| \llangle U v_n, \psi_s\rrangle_{L^2} \right|^2 dk \\
  &\quad \leq {} \dfrac1{ 1-\mu (\Lambda_0+1)}  \sum_{s<M} \int_{-\pi}^\pi \dfrac1{\lambda_s(k)+1} \left| \llangle U v_n, \psi_s\rrangle_{L^2} \right|^2 dk\\
   &\quad \leq {}   \dfrac1{ 1-\mu (\Lambda_0+1)}  \sum_s \int_{-\pi}^\pi    \dfrac1{\lambda_s(k)+1}  \left| \llangle U v_n, \psi_s\rrangle_{L^2} \right|^2 dk\\
&\quad  
 {
\leq C \norm{v_n}^2_{H^{-1}} \to C \norm{Ku}^2_{H^{-1}} \leq C\norm{u}^2_{\ck},}
\end{align*}
where we have  {again used Lemma \ref{lem:vnKu} and the last estimate follows by Lemma \ref{lem12} (iii)}.
We are left with the contribution from the $M$-band which we divide up into
integration over two disjoint regions: Let $\delta$ be as in Lemma \ref{lem:pos} and $B_\delta(k_y^0)$ denote the ball of radius $\delta$ around $k_y^0$. Then
$$
\int_{[-\pi,\pi ]\setminus B_\delta(k_y^0)}   \dfrac1{1-\mu (\lambda_M(k)+1)} \dfrac1{\lambda_M(k)+1} \left| \llangle U v_n, \psi_M\rrangle_{L^2} \right|^2 \ dk \leq 0
$$
and using that  {for $|k-k_y^0|<\delta$ we have $\dfrac1{\lambda_M(k)+1} \geq c_1>0$ (by choosing a smaller $\delta$, if necessary)} and $\left| \llangle U v_n, \psi_M\rrangle_{L^2} \right|^2\geq c$  by  Lemma \ref{lem:pos}, we have 
\begin{equation*}
 \int_{B_\delta(k_y^0)}  \dfrac1{ 1-\mu (\lambda_M(k)+1)}  \dfrac1{\lambda_M(k)+1}     \left| \llangle U v_n, \psi_M\rrangle_{L^2} \right|^2\ dk \leq C_\delta \int_{B_\delta(k_y^0)}   \dfrac{dk}{1-\mu (\lambda_M(k)+1)}  \label{eq*}.
\end{equation*}
Now observe that  {from \eqref{condBand} we have}
\bea &&
 \int_{B_\delta(k_y^0)}   \dfrac{dk}{1-\mu (\lambda_M(k)+1)} \ = \ -\int_{B_\delta(k_y^0)}   \dfrac{dk}{\mu (\lambda_M(k)+1)-1}\\ 
&&\leq  { - 
\int_{B_\delta(k_y^0)}   \dfrac{dk}{\mu (\Lambda_1+1 +\alpha|k-k_y^0|^2)-1}}\\
&&\leq -  (\Lambda_1+1)^{-1}
\int_{B_\delta(k_y^0)}   \dfrac{dk}{\mu - 1/(\Lambda_1+1) +\alpha|k-k_y^0|^2/((\Lambda_1+1)(\Lambda_0+1))}.
\eea
The last integral has a nonnegative integrand and has the form 
\beq
\int_{B_\delta(k_y^0)}   \dfrac{dk}{\eta +c_1  {|k-k_y^0|}^2}\label{eq**}
\eeq
with $c_1$ a positive constant and $\eta = \mu - 1/(\Lambda_1+1)  \to 0$ as $\mu \to \dfrac1{\Lambda_1+1}$.
The expression \eqref{eq**} is larger than 
\bea
\int_{ {|k-k_y^0|}\leq \delta_1}  \dfrac{dk}{\eta +c_1  {|k-k_y^0|}^2}\geq \frac{2\delta_1}{\eta+c_1\delta_1^2}
\eea
for any $0<\delta_1\leq \delta$. By setting $\delta_1^2=\eta$, we see that the integral diverges as $\eta\to 0$.
Thus finally, $\la A_\mu u, u \ra\to -\infty$ as $\mu\to 1/(\Lambda_1+1)$.
\end{proof}

Combining the results of Lemma \ref{lem:cont}, Lemma \ref{lem8} and Lemma \ref{lem14}, we obtain our main result, Theorem \ref{theorem}, from the Intermediate Value Theorem. In particular, any arbitrarily weak perturbation induces spectrum into the gap.

%
\section{Concluding Remarks and open problems}
%

We provided a sufficient rigorous criterion for localization in gaps by arbitrarily weak
line defects, for the case of TE-polarized electromagnetic waves. We arrive at our results
by comparing the Green's operators of the perturbed and unperturbed systems. While
Green's functions techniques have been a part of the theoretical physics literature
for a long time (see e.g. \cite{Eco}), our method combines Green's functions and variational
methods. For example, we do not use series expansion of the difference of the
operators $G_0$ and $G_1$ to get approximations, and the variational approach avoids
in an elegant way the need to control the remainder terms.

The method presented here is, in principle, also applicable to both the case when the band edge under consideration is degenerate and to the full Maxwell equations, at the expense of greater technical complexity. We plan to deal with these in forthcoming work. 

Another open problem is the following. We know now sufficient conditions to create gap modes which are localized in the $\hat \by$-direction centering on the line defect. If the modes were additionally localized in the $\hat \bx$-direction, we would have a bound state of the operator $-\nabla\cdot \eps_1^{-1}\nabla$ on the whole of $\R^2$. This would go against physical intuition, since then light would stand still in the defect. 
It would be desirable to show that there are no modes that are localized in the $\hat \bx$-direction, i.e.~the perturbation creates truly guided modes. This would equivalently mean, that there is no flat band created in the gap (for a discussion, see \cite{KuchWav1}).
The absence of bound states for periodic Helmholtz operators with line defects has been proven in \cite{HoRa}. However, to show absence of bound states for periodic divergence type operators seems to be extremely difficult. For periodic operators with sufficiently smooth coefficients, this question is investigated addressed in \cite{FK15}.

%
\section{Compact embedding of $H^1_{qp}(\Omega, e^{\gamma|y|})$ in $L^2(\Omega)$}
%

In this appendix, we briefly sketch the compact embedding of $H^1_{qp}(\Omega, e^{\gamma|y|})$ in $L^2(\Omega)$  {for $\gamma>0$}.
For any $f\in H^1_{qp}(\Omega, e^{\gamma|y|})$,
\beq\label{eq:app}
\int_{ {\Omega, |y|\geq R}} |f|^2 \leq e^{-\gamma R} \int_{ {\Omega, |y|\geq R}}e^{\gamma|y|} |f|^2 \leq  e^{-\gamma R}  \|f\|_{H^1_{qp}(\Omega, e^{\gamma|y|})}^2.
\eeq
Let $f_j = f^{(1)}_j$ be a bounded sequence in $H^1_{qp}(\Omega, e^{\gamma|y|})$. Let $\Omega_p:= (0, 1)\times(-p, p)$
for any $p\in \N$.
Since $H^1_{qp}(\Omega_p)$ embeds compactly into $L^2(\Omega_p)$, we may extract from $(f^{(1)}_j)$ a subsequence
$(f^{(2)}_j)$ converging in $L^2(\Omega_1)$ and from $(f^{(2)}_j)$ a subsequence converging in $L^2(\Omega_2)$ and
so forth. We claim that the diagonal sequence $(f^{(p)}_p)$ is Cauchy in $L^2(\Omega)$. This is seen as follows:
given any $\eps > 0$, determine first a $p_0$ so large that 
$$
\left(\int_{ {\Omega, |y|\geq p_0}} |f^{(p)}_p|^2 dy\right)^{1/2} \leq \frac{\eps}{3}.
$$
for all $p\geq p_0$, using \eqref{eq:app}. Now determine a $p_1\geq p_0$ so large that $\|f^{(p_0)}_{p}-f^{(p_0)}_{q}\|_{L^2(\Omega_{p_0})}\leq \eps/3$
for all $p, q\geq p_1$.
Since $(f^{(p)}_p)_{p\geq p_0}$ is a subsequence of $(f^{(p_0)}_j)$, we have for $p, q\geq p_1$
$$
\|f^{(p)}_p - f^{(q)}_q\|_{L^2(\Omega)}\leq \| f^{(p)}_p - f^{(q)}_q \|_{L^2(\Omega_{p_0})}+\|f^{(p)}_p\|_{L^2(\Omega\setminus \Omega_{p_0})}+\|f^{(q)}_q\|_{L^2(\Omega\setminus \Omega_{p_0})} \leq \eps.
$$

\section*{Acknowledgements}
The authors would like to thank 
Prof.~Carlos Kenig for pointing out the paper \cite{Ale12} on unique continuation.
VH expresses his gratitude to
the German Research Foundation (DFG) for continued support through grants
FOR 5156/1-1 and FOR 5156/1-2. He also acknowledges partial support by 
NSF grant NSF-DMS 1412023.

\end{document}